\pdfoutput=1
\documentclass[letterpaper,11pt]{article}
\usepackage[margin=1in]{geometry}
\usepackage[CJKbookmarks=true,
            bookmarksnumbered=true,
            bookmarksopen=true,
            colorlinks=true,
            citecolor=red,
            linkcolor=blue,
            anchorcolor=red,
            urlcolor=blue
            ]{hyperref}
\usepackage[title]{appendix}
\usepackage{graphicx}
\usepackage{caption}
\usepackage{subcaption}
\usepackage{booktabs}
\usepackage{algorithm}
\usepackage{algorithmic}
\usepackage{amsmath,amsthm,amsfonts,amssymb}
\usepackage{hyperref}
\usepackage{color}
\usepackage{enumitem}
\usepackage{bm}
\usepackage{multirow}
\usepackage{bbm}
\usepackage{subfiles}
\usepackage{xspace}
\usepackage{mathtools} 
\usepackage{tikz}
\usepackage{natbib}

\newtheorem{theorem}{Theorem}
\newtheorem*{theorem*}{Theorem}
\newtheorem{lemma}{Lemma}

\newtheorem{proposition}{Proposition}

\theoremstyle{definition}
\newtheorem{remark}{Remark}
\newtheorem{definition}{Definition}

\tikzset{global scale/.style={
    scale=#1,
    every node/.append style={scale=#1}
  }
}

\newcommand{\maG}{\mathcal{G}}
\newcommand{\maH}{\mathcal{H}}

\newcommand{\maM}{\mathcal{M}}
\newcommand{\maN}{\mathcal{N}}

\newcommand{\maP}{\mathcal{P}}

\newcommand{\maS}{\mathcal{S}}
\newcommand{\maT}{\mathcal{T}}

\newcommand{\maV}{\mathcal{V}}

\newcommand{\maX}{\mathcal{X}}

\newcommand{\maZ}{\mathcal{Z}}

\newcommand{\prob}[1]{ \mathbb{P}\left[ #1 \right] }
\newcommand{\expect}[1]{ \mathbb{E}\left[ #1 \right] }

\newcommand{\en}{\mathsf{{e}}}
\newcommand{\vn}{\mathsf{{v}}}

\newcommand{\pth}[1]{\left( #1 \right)}
\newcommand{\qth}[1]{\left[ #1 \right]}
\newcommand{\sth}[1]{\left\{ #1 \right\}}

\newcommand{\calM}{{\mathcal{M}}}

\newcommand{\ti}{\tilde}

\newcommand{\sfE}{{\mathsf{E}}}

\newcommand{\sfH}{{\mathsf{H}}}
\newcommand{\sfI}{{\mathsf{I}}}

\newcommand{\sfM}{{\mathsf{M}}}

\newcommand{\sfS}{{\mathsf{S}}}
\newcommand{\sfT}{{\mathsf{T}}}

\newcommand{\sfe}{{\mathsf{e}}}

\newcommand{\sfn}{{\mathsf{n}}}

\newcommand{\sfv}{{\mathsf{v}}}

\newcommand{\indc}[1]{{\mathbf{1}_{\left\{{#1}\right\}}}}

\newcommand{\ER}{Erd\H{o}s-R\'enyi }

\newcommand{\aut}{\mathsf{aut}}

\renewcommand{\hom}{\mathsf{hom}}
\newcommand{\inj}{\mathsf{inj}}
\newcommand{\vp}{\varphi}

\def\var{\mathrm{Var}}
\def\E{\mathbb{E}}
\def\P{\mathbb{P}}

\def\nodebullet{\node[fill=black,circle,inner sep=0pt,minimum size=3pt]}

\usepackage{xcolor}
\definecolor{red1}{RGB}{255, 0, 0}  
\definecolor{red2}{RGB}{220, 47, 2}   
\definecolor{red3}{RGB}{189, 58, 85}

\usetikzlibrary{intersections,calc,decorations.pathreplacing,through,arrows}
\usetikzlibrary{ decorations.markings,positioning}

\title{Testing Correlation in Graphs by Counting Bounded Degree Motifs}
\date{}
\author{Dong Huang and Pengkun Yang\thanks{D.\ Huang and P.\ Yang are with the Department of Statistics and Data Science, Tsinghua University. P. Yang is supported in part by National Key R\&D Program of China 2024YFA1015800, 
Tsinghua University Dushi Program 2025Z11DSZ001, and High Performance Computing Center, Tsinghua University.}}

\begin{document}

\maketitle

\begin{abstract}
We investigate the problem of detecting correlation between two \ER graphs $\maG(n,p)$, formulated as a hypothesis testing problem: under the null hypothesis, the two graphs are independent, while under the alternative hypothesis, they are correlated through a latent bijective mapping between their vertex sets. We develop a polynomial-time test by counting bounded degree motifs and prove its effectiveness for any constant correlation coefficient $\rho$ when the edge connection probability satisfies $p\ge n^{-1+\delta}$ for some constant $\delta>0$. 
In particular, our guarantee improves the constrain of motif-counting methods from $\rho\ge\sqrt{\alpha}$ to any constant $\rho = \Omega(1)$, where $\alpha\approx 0.338$ is the Otter's constant.
   
\end{abstract}

\begin{keywords}
    {Hypothesis testing, correlation detection, bounded degree motif, \ER graph, polynomial-time algorithm}
\end{keywords}


\section{Introduction}\label{sec-intro}

Correlation analysis between datasets is one of the most fundamental problems in statistics. 
In the classical vector setting, the problem of testing independence between two random vectors has been extensively studied, including low-dimensional~\citep{pearson1900x,kendall1938new,spearman1987proof} and high-dimensional settings~\citep{szekely2007measuring,zhu2017projection,gretton2007kernel}.
In contrast, correlation analysis for graph-structured data remains much less explored. 
Indeed, since shared randomness increases the overlap of network patterns across graphs, correlation often manifests as structural similarity between graphs, which is common in real scenarios.
Recently, there has been a surge of interest in the problem of analyzing correlated graphs, as in many applications the observations are more naturally represented as graphs rather than vectors. 
Such problems arise from various domains:
\begin{itemize}
    \item In social network analysis, whether two friendship networks on different social network platforms share structural similarities is a crucial task in privacy protection~\citep{narayanan2008robust}.
    \item In computer vision, 3-D shapes can be represented as graphs, and a significant problem is determining whether two graphs represent the same object under deformations~\citep{berg2005shape}.
    \item In natural language processing, one important problem is the ontology alignment problem, which refers to uncovering the correlation between two different knowledge graphs~\citep{haghighi2005robust}.
    \item In computational biology, proteins can be regarded as vertices and the interactions between them can be formulated as weighted edges, and the protein-protein interactions (PPI) can be represented as a graph~\citep{singh2008global}.
\end{itemize}

A common strategy for handling graph data is through graph embedding. Among such approaches, spectral embedding is a widely used method that maps graphs into low-dimensional vectors~\citep{rohe2011spectral}. 
This strategy has been applied to testing independence between graphs~\citep{lee2019network} and to defining measures of graph correlation~\citep{fujita2017correlation}.
Despite its popularity, spectral embedding has several limitations. First, it requires selecting an embedding dimension which is often heuristic and lacks theoretical guarantees. Second, reducing graphs to vector representations inevitably sacrifices structural information compared with analyzing graphs directly. Third, spectral embedding relies on singular value decomposition, which can be computationally prohibitive for large-scale networks.
These challenges highlight the need for correlation measures that operate directly on graph topology while remaining both statistically sound and computationally efficient. Motivated by this gap, we aim to develop a new method for testing correlation between two graphs.


Building on the hypothesis testing framework proposed in~\cite{barak2019nearly}, for two graphs $G_1,G_2$ with vertex sets $V(G_1),V(G_2)$ and edge sets $E(G_1), E(G_2)$,
we consider the following graph correlation detection problem: under the null hypothesis $\maH_0$, $G_1$ and $G_2$ are independent; under the alternative hypothesis $\maH_1$, there exists a latent vertex bijection $\pi:V(G_1)\mapsto V(G_2)$ that induces correlation between the edges of the two graphs. 
Specifically, for any $uv\in E(G_1)$ with $u,v\in V(G_1)$, the corresponding pair $\pi(u)\pi(v)$ lies in $E(G_2)$, and the edges $uv$ and $\pi(u)\pi(v)$ are statistically correlated.
Under both \(\maH_0\) and \(\maH_1\), each graph marginally follows the same random-graph model; what distinguishes \(\maH_1\) is the presence of statistical dependence between corresponding edges across the two graphs.
Given $G_1$ and $G_2$, the goal is to test $\maH_0$ against $\maH_1$ by the latent structure
under $\maH_1$. 

In this paper, we focus on the \ER model~\cite{paul1959random}.
Specifically, the \ER random graph $\maG(n,p)$ is the graph on $n$ vertices where each edge connects with probability $0<p<1$ independently. 
Under the null hypothesis $\maH_0$, the two graphs $G_1$ and $G_2$ follow $\maG(n,p)$ independently; under the alternative hypothesis $\maH_1$, $G_1$ and $G_2$ follow the following \emph{correlated \ER graph} $\maG(n,p,\rho)$.
\begin{definition}[Correlated \ER graph]
    For two random graphs $G_1,G_2$ with vertex sets $V(G_1),V(G_2)$ and edge sets $E(G_1),E(G_2)$, let $\pi$ denote a latent bijective mapping from $V(G_1)$ to $V(G_2)$. We say $(G_1,G_2)$ follows \emph{correlated \ER graph} $\maG(n,p,\rho)$ if both marginal distributions are \ER graph $\maG(n,p)$ and each pair of edges $(uv,\pi(u)\pi(v))$ follows the correlated bivariate Bernoulli distribution with correlation coefficient $\rho$ for any $u,v\in V(G_1)$.
\end{definition}


Let $\maP_0$ and $\maP_1$ denote the probability measures for $(G_1,G_2)$ under $\maH_0$ and $\maH_1$, respectively. We say a test statistic $\maT(G_1,G_2)$ with a threshold $\tau$ succeeds in detection, if the sum of Type I and Type II errors is bounded by $0.05$ as $n\to \infty$:
\begin{align}\label{eq:detection-cretiria}
    \limsup_{n\to \infty}\qth{\maP_0(\maT\ge\tau)+\maP_1(\maT< \tau)}\le 0.05.
\end{align}
%
%
%
%
%
In fact, the detection threshold can be improved from $0.05$ to any prescribed positive constant; see Remark~\ref{rmk:005} for details.
It is well-known that the minimal value of the sum of Type I and Type II errors between $\maP_0$ and $\maP_1$ is achieved by the likelihood ratio test (see, e.g., \cite[Theorem 13.1.1]{lehmann2005testing}). 
However, the likelihood ratio test requires the evaluation over the space of latent permutations incurring a computational cost of $n!$.
In order to design scalable tests, one must instead exploit informative graph properties that can be computed efficiently while still being identifiable for the underlying models. 
One common and significant methodology is to look at the graphs from a motif perspective, identifying the characteristic and recurrent connection patterns.
For correlation detection between two graphs, \cite{barak2019nearly} analyzed counts of balanced graphs (denser than any of their subgraphs), whereas \cite{mao2024testing} used tree counts to obtain detection guarantees. Both approaches aggregate over large motif families, reflecting that usable correlation signal accumulates with the family size.

Despite their theoretical appeal, balanced graphs or tree structures rarely occur in real-world settings such as social networks.
A line of research instead focuses on counting other types of motifs. 
However, relying on a single class of motifs often yields weak signals and can significantly limit performance.
Thus, the central challenge is to design test statistics that are not only powerful and computationally efficient, but also flexible to capture the structural characteristics of real networks.

In this paper, we introduce a new approach based on counting bounded degree motifs, namely motifs whose vertex degrees are bounded by a universal constant. This family is broad—encompassing balanced graphs, trees, triads, cliques, and stars—and therefore provides richer structural information.
Importantly, it includes commonly observed patterns such as triangles and quadrilaterals, which frequently appear in real-world networks.
Moreover, we establish rigorous theoretical guarantees for the bounded degree motif family, showing that it offers both statistical power and practical relevance.
In addition, bounded degree motifs can be efficiently estimated even on large networks: they admit scalable counting via local exploration or graph sampling techniques, thereby avoiding exhaustive enumeration over all subgraphs.

\subsection{Main Results}

In this subsection, we present the main results. 

\begin{theorem}\label{thm:main}
Assume that $p\ge n^{-1+\delta}$ for some constant $\delta>0$. Then for any constant correlation $\rho=\Omega(1)$, there exists a polynomial-time computable test statistic $\maT$ and a threshold $\tau$ such that
\[
\limsup_{n\to\infty}\big[\maP_0(\maT\ge \tau)+\maP_1(\maT<\tau)\big]\le 0.05.
\]
\end{theorem}

We will show in Section~\ref{sec:general-motif} that any \emph{$C$-admissible} motif family is sufficient for detection.  In Section~\ref{sec:bd-sub-count}, we prove that when $p\ge n^{-2/3}$ and $\rho=\Omega(1)$, there exists a \emph{$C$-admissible} motif family by constructing a bounded degree motif family.  In Section~\ref{sec:refined-results}, we further develop a refined bounded degree construction that is sufficient for detection under the weaker condition $p\ge n^{-1+\delta}$ and $\rho=\Omega(1)$, thereby establishing Theorem~\ref{thm:main}.

It is shown in~\cite{barak2019nearly} that a polynomial-time algorithm based on counting balanced graphs succeeds for correlation detection when $p\in [n^{-1+\epsilon},n^{-1+1/153}]\cup [n^{-1/3},n^{-\epsilon}]$ for any sufficiently small constant $\epsilon>0$ and $\rho=\Omega(1)$, where a balanced graph is denser than each of its nontrivial subgraphs. The recent work~\cite{mao2024testing} proposed a polynomial-time algorithm based on counting trees, which succeeds in detection when $p\ge n^{-1+o(1)}$ and $\rho^2\ge \alpha\approx 0.338$, where $\alpha$ is Otter's constant~\cite{otter1948number}. In view of Theorem~\ref{thm:main}, our test succeeds over the broad range $p\ge n^{-1+\delta}$ for any constant $\rho=\Omega(1)$ by counting bounded degree motifs.

For the remaining regimes,~\cite{ding2023low} provided evidence from the low-degree framework that the condition $\rho^2\ge \alpha$ is essential for detection when $p=n^{-1+o(1)}$. When $p\le n^{-1-\delta}$ for some constant $\delta>0$, detection is information-theoretically impossible~\cite{wu2023testing,ding2023detection}. Taken together, these results suggest that Theorem~\ref{thm:main} attains the strongest detection guarantee achievable by polynomial time. We note that~\cite{ding2023polynomial} also provided a polynomial-time matching algorithm (based on a nontrivial iterative procedure) that works when $p\ge n^{-1+\delta}$ and $\rho=\Omega(1)$, and hence can be adapted to detection. In contrast, our approach achieves the same detection regime via a simple and natural motif counting statistic, avoiding the more intricate iterative procedure.

Following~\cite{ding2023low,li2025algorithmic} and motivated by the low-degree framework~\cite{hopkins2018statistical}, there is evidence that any degree-$o(\rho^{-1})$ polynomial fails for detection in the \ER model. Consequently, when $\rho=o(1)$, no constant-degree polynomial is expected to succeed. This suggests that our assumption $\rho=\Omega(1)$ is essentially necessary for polynomial-time algorithms within this framework. 

\subsection{Test Statistics}\label{sec:method}

To obtain a computationally efficient test, a natural approach is to use summary statistics rather than searching over all possible bijective mappings. 
Ideally, the graph can be uniquely identified from a sufficiently rich set of summary statistics. 
Graph homomorphism numbers provide a particularly prominent class of such statistics. 
Specifically, for two simple graphs $\sfM$ and $G$,
a homomorphism of $\sfM$ into $G$ is an edge-preserving mapping from $V(\sfM)$ to $V(G)$. Let $\hom(\sfM,G)$ be the number of homomorphisms of $\sfM$ into $G$.  
It is well-known that the function of homomorphism numbers $\hom(\cdot,G)$ uniquely determines a simple graph $G$ (see, e.g., \cite[Theorem 5.29]{lovasz2012large}). By~\cite{muller1977edge}, when $\en(G)\ge \vn(G)\log \vn(G)$, the homomorphism numbers $\hom(\sfM,G)$ for motifs $\sfM$ with $\en(\sfM)<\en(G)$ determine $G$. It is further conjectured that $\hom(\sfM,G)$ for all $\vn(\sfM)<V(G)$ or $\en(\sfM)<\en(G)$ determine $G$ if $\vn(G)\ge 3$ and $\en(G)\ge 4$ \cite[Conjectures 5.30 and  5.31]{lovasz2012large}.
However, computing $\hom(\sfM,G)$ for all $\sfM$ up to the scale of $G$ is still computationally prohibitive. 
We instead consider the number of injective homomorphisms of $\sfM$ into $G$ denoted by $\inj(\sfM,G)$, which can be applied to evaluate $\hom(\sfM,G)$ \cite[(5.16)]{lovasz2012large}.
Indeed, $\inj(\sfM,G)$ indicates the motif counts of $\sfM$ in $G$. 
We only compute a subset of injective homomorphism numbers over an informative family of motifs $\sfM\in \calM$.

In our correlation detection problem, given a motif $\sfM$, the injective homomorphism numbers $\inj(\sfM,G_1)$ and $\inj(\sfM,G_2)$ are independent under the null hypothesis $\maH_0$, while they are correlated under the alternative $\maH_1$. 
The quantity $$\pth{\inj(\sfM,G_1)-\expect{\inj(\sfM,G_1)}}\pth{\inj(\sfM,G_2)-\expect{\inj(\sfM,G_2)}}$$ indicates the correlation between $\inj(\sfM,G_1)$ and $\inj(\sfM,G_2)$, which serves as a basis for distinguishing $\maH_0$ from $\maH_1$.
Naturally, the definition of homomorphism numbers can be extended to the case where $G$ is a weighted graph associated with vertex set $V(G)$ and weighted edge set $\sth{\beta_{uv}(G):u,v\in V(G)}$. For any mapping $\varphi:V(\sfM)\mapsto V(G)$, we define $\hom_\varphi(\sfM,G) = \prod_{uv\in E(\sfM)}\beta_{\varphi(u)\varphi(v)}(G)$ and \begin{align}\label{eq:def_of_inj}
    \inj(\sfM,G) = \sum_{\substack{\varphi:V(\sfM)\mapsto V(G)\\\varphi\text{ injective}}} \hom_\varphi(\sfM,G).
\end{align}

Given a graph $G$, we first center the weights and obtain a weighted graph $\bar{G}$ with weighted edges 
$\beta_{uv}(\bar{G}) = \indc{uv\in E(G)} - \expect{\indc{uv\in E(G)}}$ for $u,v\in V(G)$, where $\expect{\indc{uv\in E(G)}}$ can be estimated by the average degree of the graph. 
Then, for a given motif family $\maM$, our test statistic is defined as \begin{align}\label{eq:est-sub-count}
    \maT_{\maM }\pth{G_1,G_2} = \sum_{\sfM \in \maM } \omega_\sfM\, \inj(\sfM,\bar{G}_1)\inj(\sfM,\bar{G}_2),
\end{align}
where $\omega_\sfM$ is a weight function to be specified.
This test statistic can be interpreted as an inner product between the two vectors $\qth{\inj(\sfM,\bar{G}_1)}_{\sfM\in \maM}$ and $\qth{\inj(\sfM,\bar{G}_2)}_{\sfM\in \maM}$. 
By picking an appropriate threshold $\tau$, we define the test that rejects the null hypothesis $\maH_0$ whenever $\maT_\maM(G_1,G_2)\ge \tau$. We will theoretically analyze the resulting Type I and Type II errors in Section~\ref{sec:general-motif}.


A richer motif family captures more graph properties and can strengthen the effectiveness of the test at a higher computational cost. 
Our motif-counting statistic has a computational cost at most $O(n^{\en(\maM)})$, where $\en(\maM)\triangleq\max_{\sfM\in \maM}\en(\sfM)$ is the maximum number of edges among motifs in the family $\maM$.
Our theory requires $\en(\maM)\ge f(0.05)$ for some function $f$ to achieve a prescribed error probability $0.05$; see Sections~\ref{sec:general-motif},~\ref{sec:bd-sub-count} and~\ref{sec:refined-results} for further details. This setting illustrates a fundamental trade-off between statistical accuracy and computational efficiency: as $\en(\maM)\to\infty$, the sum of Type I and Type II errors vanishes, but at the expense of an increasing runtime of $O(n^{\en(\maM)})$.

The previous work~\cite{mao2024testing} adopted the motif counting statistic with the tree motifs and showed that detection is possible when the correlation coefficient is beyond some constant under the \ER random graph model. 
The effectiveness of tree counting statistics relies significantly on the tree-like substructures inherent in the graph model.
As a result, such statistics may become less effective for graph models or datasets that lack a tree-like structure,
raising the natural question of whether we can count more general motifs.
In this paper, we consider bounded degree motifs that are commonly observed in practice. Let $\maM(N_\sfe,d)$ denote the set of all connected bounded degree motifs with $N_\sfe$ edges and maximum degree bounded by $d$. For example, 
\[\maM_{3,2} = \sth{\begin{tikzpicture}[scale=0.5, baseline={([yshift=-.5ex]current bounding box.center)}]
    \nodebullet  (a) at (0,0) {};
    \nodebullet (b) at (1,0) {};
    \nodebullet (c) at (2,0) {};
    \nodebullet (d) at (3,0) {};
    \draw (a) -- (b) -- (c) -- (d);
\end{tikzpicture}\,,\quad 
\begin{tikzpicture}[scale=0.5, baseline={([yshift=-.5ex]current bounding box.center)}]
    \nodebullet (b) at (0,0) {};
    \nodebullet (c) at (0.5,-0.866) {};
    \nodebullet (d) at (-0.5,-0.866) {};
    \draw (c) -- (d) -- (b) -- (c);
\end{tikzpicture}
},\quad 
\maM_{4,3} = \left\{ 
\begin{tikzpicture}[scale=0.5, baseline={([yshift=-.5ex]current bounding box.center)}]
    \nodebullet  (a) at (0,0) {};
    \nodebullet (b) at (1,0) {};
    \nodebullet (c) at (2,0) {};
    \nodebullet (d) at (3,0) {};
    \nodebullet (e) at (4,0) {};
    \draw (a) -- (b) -- (c) -- (d) -- (e);
\end{tikzpicture}\, ,
\quad
\begin{tikzpicture}[scale=0.5, baseline={([yshift=-.5ex]current bounding box.center)}]
    \nodebullet (a) at (0,0) {};
    \nodebullet (b) at (1,0) {};
    \nodebullet (c) at (1,1) {};
    \nodebullet (d) at (0,1) {};
    \draw (a) -- (b) -- (c) -- (d) -- (a);
\end{tikzpicture}\, ,
\quad
\begin{tikzpicture}[scale=0.5, baseline={([yshift=-.5ex]current bounding box.center)}]
    \nodebullet (a) at (0,0) {};
    \nodebullet (b) at (1,0) {};
    \nodebullet (c) at (2,0) {};
    \nodebullet (e) at (1,1) {};
    \nodebullet (f) at (1,2) {};
    \draw (a) -- (b) -- (c);
    \draw (b) -- (e);
    \draw (e) -- (f);
\end{tikzpicture}\, ,\quad 
\begin{tikzpicture}[scale=0.5, baseline={([yshift=-.5ex]current bounding box.center)}]
    \nodebullet (a) at (0,1) {};
    \nodebullet (b) at (0,0) {};
    \nodebullet (c) at (0.5,-0.866) {};
    \nodebullet (d) at (-0.5,-0.866) {};
    \draw (c) -- (d) -- (b) -- (c);
    \draw (b) -- (a);
\end{tikzpicture}
\right\}.
\]
While the bounded degree counting test statistic $\maT_{\maM(N_\sfe,d)}$  remains valid for detection, we will consider two subsets of bounded degree motifs $\maM(N_\sfv,N_\sfe,d),\overline{\maM}(N_\sfv,N_\sfe,d)\subseteq \maM(N_\sfe,d)$ that are more simplified and easier to analyze. See Sections~\ref{sec:bd-sub-count} and~\ref{sec:refined-results} for further details. Indeed, our approach unifies and generalizes several existing motif counting methods: \cite{barak2019nearly} counted balanced graphs, \cite{mao2024testing} counted trees, and \cite{jin2025counting} counted cycles. By counting all bounded degree motifs, our statistic subsumes these as special cases and captures a richer range of structural correlations.
\subsection{Related Work}


\emph{Polynomial-time algorithms and computational hardness.}
It was shown in~\cite{barak2019nearly} that counting balanced subgraphs succeeds in detecting correlation in correlated \ER graphs for any constant $\rho$, provided that the edge probability $p$ lies in a suitable regime. Extending this line of work,~\cite{mao2024testing} demonstrated that counting trees achieves successful correlation detection over a broader range of $p$, as long as $\rho$ exceeds a fixed constant.
From the computational-hardness perspective, the low-degree conjecture—motivated by the sum-of-squares framework—is widely believed to offer a unifying approach to predicting computational lower bounds in a broad class of high-dimensional statistical problems~\citep{hopkins2017efficient,hopkins2018statistical,schramm2022computational,sohn2025sharp}. This conjecture has yielded tight hardness predictions for many problems, including graph matching, planted clique, planted dense subgraph, community detection, tensor PCA, and sparse PCA~\citep{hopkins2017efficient,hopkins2017power,hopkins2018statistical,bandeira2019computational,schramm2022computational,ding2023low,kunisky2024tensor,dhawan2025detection,li2025algorithmic}. 

\emph{Information-theoretic analysis.}
\cite{wu2023testing} characterized a sharp detection threshold—at which the optimal testing error exhibits a phase transition—by analyzing the maximum likelihood test for dense correlated \ER graphs with edge probability $p=n^{-o(1)}$. For sparse graphs with $p=n^{-\Omega(1)}$, they also derived the threshold up to constant factors. Building on this, \cite{ding2023detection} further sharpened the sparse-graph threshold via an analysis of the densest subgraphs. More recently, \cite{huang2025sample,huang2026information} investigated correlation detection for induced subgraphs under the Gaussian Wigner model and the \ER model.

\emph{Graph matching.}
Closely related to correlation detection is the graph matching problem, which seeks a node correspondence maximizing the edge agreement (or edge correlation) between a pair of correlated graphs~\citep{conte2004thirty}. A variety of polynomial-time algorithms have been proposed, including approaches based on subgraph counting~\citep{mao2025random,maier2025asymmetric}, neighborhood statistics~\citep{dai2019analysis,ding2021efficient,mao2021random}, spectral methods~\citep{umeyama1988eigendecomposition,singh2008global,fan2019spectral,ganassali2022spectral}, convex relaxations~\citep{aflalo2015convex,vogelstein2015fast,varma2025graph}, greedy algorithms~\citep{du2025algorithmic}, and iterative refinements~\citep{piccioli2022aligning,ding2023polynomial,ganassali2024statistical}. From an information-theoretic viewpoint, sharp thresholds and phase transitions for exact and partial recovery, along with impossibility regimes, have been investigated in a series of works~\citep{cullina2016improved,cullina2017exact,cullina2020partial,ganassali2020tree,ganassali2021impossibility,wu2022settling,ding2023matching,du2025optimal,hall2023partial}.

\subsection{Notation and Paper Organization}

For any $n\in \mathbb{N}$, let $[n]\triangleq \{1,2,\cdots,n\}$. 
We use standard asymptotic notation: for two positive sequences $\{a_n\}$ and $\{b_n\}$, we write $a_n = O(b_n)$ or $a_n\lesssim b_n$, if $a_n\le C b_n$ for some absolute constant $C$ and all $n$; $a_n = \Omega(b_n)$ or $a_n\gtrsim b_n$, if $b_n = O(a_n)$; $a_n = \Theta(b_n)$ or $a_n\asymp b_n$, if $a_n = O(b_n)$ and $a_n = \Omega(b_n)$; $a_n = o(b_n)$ or $b_n =\omega(a_n)$, if $a_n/b_n\to 0$ as $n\to \infty$.
Let $[x]$ denote the greatest integer less than or equal to $x$.

For a given weighted graph $G$, let $V(G)$ denote its vertex set and $E(G)$ its edge set. 
We write $uv$ to represent an edge $\{u,v\}$, and $\beta_{e}(G)$ for the weight of the edge $e$. 
 For an unweighted graph $G$, we define $\beta_{uv}(G)=\indc{uv\in E(G)}$.
Let $\vn(G)=|V(G)|$ denote the number of vertices in $G$, and $\en(G)=\sum_{e\in E(G)} \beta_{e}(G)$ the total weight of its edges. For any bijective mapping $\pi:V(G_1)\mapsto V(G_2)$, we define  $\pi(uv) = \pi(u)\pi(v)$ for any $u,v\in V(G_1)$. For simplicity, we write $\pi(e)$ to denote $\pi(uv)$ for any edge $e =uv$.
We write $\sfH_1 = \sfH_2$ if and only if they are isomorphic, that is, there exists a bijection $\pi:V(\sfH_1)\mapsto V(\sfH_2)$ such that $uv\in E(\sfH_1)$ if and only if $\pi(u)\pi(v)\in E(\sfH_2)$.
For any bijective mapping $\pi: V(G_1)\mapsto V(G_2)$ and subgraph $\sfH\subseteq G_1$, we define $\pi(\sfH)$ as the graph with\begin{align}\label{eq:def-piH}
    E(\pi(\sfH)) = \sth{\pi(u)\pi(v):uv\in E(\sfH)},V(\pi(\sfH)) = \sth{\pi(v):v\in V(\sfH)}.
\end{align}
For two graphs $G$ and $G'$, let $G\cap G'$ denote the graph with\begin{align}\label{eq:intersec-graph}
    E(G\cap G') = E(G)\cap E(G'), V(G\cap G') = \sth{v\in V(G)\cup V(G'):\exists u, uv\in E(G\cap G')}.
\end{align}
Let $G\cup G'$ denote the graph with
\begin{align}\label{eq:union-graph}
    E(G\cup G') = E(G)\cup E(G'),V(G\cup G') = V(G)\cup V(G').
\end{align}
Let $G\triangle G'$ denote the graph with 
\begin{align}\label{eq:triangle-graph}
    E(G\triangle G') = E(G)\triangle E(G'), V(G\triangle G') = \sth{v\in V(G)\cup V(G'):\exists u, uv\in E(G\triangle G')}.
\end{align}
The intersection $G \cap G'$ represents the subgraph consisting of all edges shared by $G$ and $G'$, along with the vertices incident to those edges.
The symmetric difference $G \triangle G'$ represents the subgraph containing edges that appear in exactly one of $G$ or $G'$. Moreover, we have \begin{align*}
    |V(G\triangle G')| = \vn(G)+\vn(G')-2|V(G)\cap V(G')|+|V(G\triangle G')\cap (V(G)\cap V(G'))|.
\end{align*}




The remainder of the paper is organized as follows. 
In Section~\ref{sec:general-motif}, we establish theoretical guarantees for the proposed
statistics under general conditions.
Section~\ref{sec:bd-sub-count} introduces a specific motif family and shows
that the corresponding statistic performs well on graph models.
In Section~\ref{sec:refined-results}, we extend the condition $p\ge n^{-2/3}$ to the sparser regime $p\ge n^{-1+\delta}$ by counting a more delicate class of motifs.
In Section~\ref{sec:num-results}, we provide simulation studies. We finish in Section~\ref{sec:discussion}. Some technical proofs are deferred to the Appendices.


\section{General Motif Counting Statistic}\label{sec:general-motif}


In this section, we establish the basic properties for the motif counting statistic $\maT_\maM$ in~\eqref{eq:est-sub-count} under the \ER model for general motif family $\maM$. To achieve the detection criterion~\eqref{eq:detection-cretiria}, it is crucial to select an appropriate motif family $\maM$. To this end, we will show that, by choosing appropriate weights $\omega_\sfM$ in $\maT_\maM$,
\[
\E_{\maP_1}\!\left[\maT_\maM\right] = \var_{\maP_0}\!\left[\maT_\maM\right] = \sum_{\sfM\in \maM}\rho^{2\en(\sfM)}.
\]
Since $\E_{\maP_0}[\maT_\maM] = 0$, the signal-to-noise ratio of the test statistic is given by
\begin{align}\label{eq:signal-to-noise}
    \frac{\E_{\maP_1}[\maT_\maM]-\E_{\maP_0}[\maT_\maM]}{\sqrt{\var_{\maP_0}[\maT_\maM]}}
= \sqrt{\sum_{\sfM\in \maM}\rho^{2\en(\sfM)}}.
\end{align}
As a result, by choosing the threshold $\tau=\frac{1}{2}\E_{\maP_1}[\maT_\maM]$, the quantity $\sum_{\sfM\in \maM}\rho^{2\en(\sfM)}$ plays a key role in controlling the Type~I error $\maP_0(\maT_\maM\ge \tau)$ by Chebyshev's inequality:
\begin{align}\label{eq:Chebyshev}
    \maP_0(\maT_\maM\ge \tau)\le \frac{4\var_{\maP_0}(\maT_\maM)}{(\E_{\maP_1}[\maT_\maM])^2} = \frac{4}{\sum_{\sfM\in \maM}\rho^{2\en(\sfM)}}.
\end{align}
On the other hand, controlling the Type~II error requires bounding $\maP_1(\maT_\maM<\tau)$. By Chebyshev's inequality,
\begin{align}
    \maP_1(\maT_\maM<\tau)\le \maP_1\pth{|\maT_\maM-\E_{\maP_1}[\maT_\maM]|\ge \frac{1}{2}\E_{\maP_1}[\maT_\maM]}\le \frac{4\var_{\maP_1}(\maT_\maM)}{(\E_{\maP_1}[\maT_\maM])^2},
\end{align}
so it remains to control $\var_{\maP_1}(\maT_\maM)$. Specifically, we show that motif counting based on the following family is sufficient for detection.

\begin{definition}\label{def:admissible}
We say that a motif family $\maM$ is \emph{$C$-admissible} if \begin{enumerate}
\item \label{cond:connect}For all $\sfM\in \maM$, $\sfM$ is connected;
    \item\label{cond:num} There exists $C=o\pth{\frac{\log n}{\max\pth{\log\log n,-\log\rho}}}$ such that $\vn(\sfM)\vee\en(\sfM)\le C$ for all $\sfM\in \maM$;
    \item\label{cond:signal-strength} $\sum_{\sfM\in \maM} \rho^{2\en(\sfM)}\ge 400$;
    \item\label{cond:subgraph} There exists a small constant $\epsilon_0$ such that, $n^{\vn(\sfM')} p^{\en(\sfM')}\ge n^{\epsilon_0}$ for all $\sfM\in \maM$ and subgraph $\emptyset\neq \sfM'\subseteq \sfM$. 
\end{enumerate}
\end{definition}

Condition~\ref{cond:num} ensures that the size of each motif is bounded by $C$. Under this condition, the computation time of the statistic $\maT_\maM$ is $O(n^{C})$. 
Condition~\ref{cond:signal-strength} sets a lower bound on the overall signal strength, requiring that $\sum_{\sfM\in \maM} \rho^{2\en(\sfM)}\ge 400$. This requirement can be challenging to meet in practice: when the number of edges $\en(\sfM)$ grows, although $\maM$ contains more motifs, the term $\rho^{2\en(\sfM)}$ decreases quickly if $\rho$ is small. Consequently, one needs to balance the size $|\maM|$ and the quantity $\rho^{2\en(\sfM)}$ to ensure the family still carries enough signal.
Thus, Condition~\ref{cond:signal-strength} directly contributes to the Type~I error analysis.
Condition~\ref{cond:subgraph} is a technical requirement that ensures control of the variance $\var_{\maP_1}[\maT_\maM]$, which is essential for Type~II error analysis.
Intuitively, when a motif contains overly dense subgraphs, their occurrences become highly dependent, which inflates the variance $\var_{\maP_1}\!\left[\maT_\maM\right]$ and undermines the power of the test.
The requirement
\(
n^{\vn(\sfM')}\, p^{\en(\sfM')}\ge n^{\epsilon_0}
\)
ensures that each substructure appears with sufficient frequency to stabilize the test statistic, preventing such variance explosion.
This condition naturally motivates the use of bounded degree motifs, whose subgraphs are not excessively dense and thus allow for uniform variance control across the motif family.
The following theorem provides a theoretical guarantee for the counting statistic based on any \emph{$C$-admissible} motif family.


\begin{theorem}\label{thm:admissible}
    For any \emph{$C$-admissible} motif family $\maM$, there exist $\tau,\omega_\sfM\in \mathbb{R}$ such that,
    \begin{align*}
        \maP_0\pth{\maT_\maM\ge\tau}+\maP_1(\maT_\maM< \tau)\le 0.05.
    \end{align*}
\end{theorem}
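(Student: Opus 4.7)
The plan is to pick weights $\omega_\sfM$ so that, to leading order, $\E_{\maP_0}[\maT_\maM]=0$ and
\[
\E_{\maP_1}[\maT_\maM]=\var_{\maP_0}[\maT_\maM]=S,\qquad S:=\sum_{\sfM\in\maM}\rho^{2\en(\sfM)},
\]
and then to set the threshold $\tau=S/2$ and apply Chebyshev's inequality. Condition~\ref{cond:signal-strength} gives $S\ge 400$, so the argument reduces to showing $\var_{\maP_1}[\maT_\maM]=O(S)$.

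The first step is a moment computation. Expand
\[
\inj(\sfM,\bar G_i)=\sum_{\varphi:V(\sfM)\hookrightarrow[n]}\prod_{uv\in E(\sfM)}\beta_{\varphi(u)\varphi(v)}(\bar G_i),
\]
with centered Bernoulli weights $\E[\beta_e]=0$, $\E[\beta_e^2]=p(1-p)$, and, under $\maP_1$, $\E[\beta_e(\bar G_1)\beta_{\pi(e)}(\bar G_2)]=\rho p(1-p)$, all other pairs being uncorrelated. The connectedness of $\sfM$ (Condition~\ref{cond:connect}) forces the only $(\varphi_1,\varphi_2)$ contributing to $\E_{\maP_1}[\inj(\sfM,\bar G_1)\inj(\sfM,\bar G_2)]$ to be those with $\varphi_2=\pi\circ\varphi_1\circ\sigma$ for some $\sigma\in\aut(\sfM)$, yielding the exact value $(n)_{\vn(\sfM)}\aut(\sfM)[\rho p(1-p)]^{\en(\sfM)}$. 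Choosing
\[
\omega_\sfM=\frac{\rho^{\en(\sfM)}}{(n)_{\vn(\sfM)}\,\aut(\sfM)\,[p(1-p)]^{\en(\sfM)}}
\]
then gives $\E_{\maP_1}[\maT_\maM]=S$, and the analogous computation for $\E_{\maP_0}[\inj(\sfM,\bar G)^2]$ shows $\var_{\maP_0}[\omega_\sfM\inj(\sfM,\bar G_1)\inj(\sfM,\bar G_2)]=\rho^{2\en(\sfM)}$. The off-diagonal $\maP_0$-covariances $\E_{\maP_0}[\inj(\sfM,\bar G)\inj(\sfM',\bar G)]$ with $\sfM\ne\sfM'$ vanish because a nonzero term requires $\varphi(E(\sfM))=\varphi'(E(\sfM'))$ as edge sets in $K_n$, which (with connectedness) forces $\sfM\cong\sfM'$; hence $\var_{\maP_0}[\maT_\maM]=S$ exactly.

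The main work is bounding $\var_{\maP_1}[\maT_\maM]$. For each ordered pair $(\sfM,\sfM')\in\maM^2$ I would expand the covariance of the two product terms as a quadruple sum over injective maps $(\varphi_1,\varphi_2,\varphi_1',\varphi_2')$ and classify the nonzero-expectation configurations by how the four image edge sets overlap within $\bar G_1$, within $\bar G_2$, and across the two graphs via $\pi$; any edge-weight that is unpaired both intra-graph and under $\pi$ kills the expectation. The ``fully diagonal'' configurations (two independent $\pi$-aligned copies) contribute exactly $(\E_{\maP_1}[\maT_\maM])^2$ and cancel against the squared-mean subtraction. Every remaining configuration factors through some nonempty shared sub-motif $\sfM''\subseteq\sfM\cup\sfM'$; by Condition~\ref{cond:subgraph} the normalized contribution of any such configuration is suppressed by a factor that behaves like a negative power $n^{-\epsilon_0\vn(\sfM'')}$ coming from the missing vertices of the overlap relative to a full diagonal match. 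Summing over the finitely many overlap types, whose number is controlled by Condition~\ref{cond:num}, yields $\var_{\maP_1}[\maT_\maM]=S(1+o(1))$ and in any case $O(S)$.

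Finally, Chebyshev gives
\[
\maP_0(\maT_\maM\ge S/2)\le\frac{\var_{\maP_0}[\maT_\maM]}{(S/2)^2}=\frac{4}{S}\le 0.01,\qquad \maP_1(\maT_\maM<S/2)\le\frac{4\var_{\maP_1}[\maT_\maM]}{S^2}\lesssim\frac{4}{S}\le 0.01,
\]
whose sum is well below the required $0.05$. The hard part is the $\var_{\maP_1}$ estimate: it demands a disciplined inclusion–exclusion over all overlap patterns between the two motif copies in each graph and across the $\pi$-alignment, with Condition~\ref{cond:subgraph} providing the subgraph-density bound that rules out variance inflation from shared dense substructures, and Condition~\ref{cond:num} keeping the combinatorial multiplicity of such patterns under control relative to the signal $S$.
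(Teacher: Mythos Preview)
Your proposal follows the same approach as the paper: choose $\omega_\sfM=\rho^{\en(\sfM)}\big/\big((n)_{\vn(\sfM)}\aut(\sfM)[p(1-p)]^{\en(\sfM)}\big)$, set $\tau=S/2$, and apply Chebyshev on both sides. The computation of $\E_{\maP_1}[\maT_\maM]$, $\E_{\maP_0}[\maT_\maM]$, and $\var_{\maP_0}[\maT_\maM]$ is exactly as in the paper (its Proposition~\ref{prop:TypeI-admissible}), and your Type~I bound of $4/S\le 0.01$ is correct.

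Your sketch of the Type~II variance bound is on the right track but somewhat optimistic in one place. You claim that every non-diagonal overlap configuration is suppressed by $n^{-\epsilon_0\vn(\sfM'')}$ and conclude $\var_{\maP_1}[\maT_\maM]=O(S)$. The paper's actual analysis (Proposition~\ref{prop:admissible-TypeII} and Lemma~\ref{lem:varphi-bound}) partitions the quadruple sum over $(\varphi_1,\varphi_2,\varphi_1',\varphi_2')$ according to whether the vertex overlaps $\sfn_1,\sfn_2$ in the two graphs are null, balanced, or discrepant; the discrepant case vanishes, but the balanced-overlap contribution is only suppressed by a single factor $n^{-\epsilon_0/2}$ (not $n^{-\epsilon_0\vn(\sfM'')}$), multiplied by a combinatorial factor $(4C)^{8C}\rho^{-2C}$ coming from summing over all overlap shapes and motif pairs. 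This combinatorial factor is where Condition~\ref{cond:num} does its real work: it guarantees $(4C)^{8C}\rho^{-2C}=n^{o(1)}$, so that contribution is $o(S^2)$ rather than $O(S)$. The null-overlap term also overshoots $S^2$ by a factor $\kappa=\exp\!\big(C^2/(n-2C+1)\big)=1+o(1)$ and carries an extra $\sum_{\sfM}\rho^{4\en(\sfM)}\le S$ from the case $\sfM_1=\sfM_2$. The net result is $\var_{\maP_1}[\maT_\maM]=o(S^2)+O(S)$, which still yields Type~II error at most $4[o(1)+O(1/S)]\le 0.04$ for large $n$, so your conclusion stands even though the intermediate claim $\var_{\maP_1}=O(S)$ is not what the argument actually delivers.
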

Theorem~\ref{thm:admissible} shows that the test statistic based on any \emph{$C$-admissible} motif family suffices for detection. In Section~\ref{sec:bd-sub-count}, we will construct a specific sub-family of bounded degree motifs $\maM'\subseteq \maM(N_\sfe,d)$ and prove that $\maM'$ is \emph{$C$-admissible}. Consequently, the statistic $\maT_{\maM'}$ achieves successful detection. In the following, we outline a general recipe for controlling the Type I and Type II errors on $\maT_\maM$.

\begin{remark}\label{rmk:005}
It follows from~\eqref{eq:signal-to-noise} and~\eqref{eq:Chebyshev} that the Type~I error can be made arbitrarily small provided that
$\sum_{\sfM\in\maM}\rho^{2\en(\sfM)}$ is sufficiently large.
For the Type~II error, Proposition~\ref{prop:admissible-TypeII} implies that it can also be made arbitrarily small.
In fact, for the constructions in Sections~\ref{sec:bd-sub-count} and~\ref{sec:refined-results}, Condition~\ref{cond:signal-strength} can be strengthened by replacing the constant $400$ with any prescribed constant (while keeping the same assumptions on $p$ and $\rho$), which in turn allows the target error level (e.g., $0.05$) to be replaced by any fixed constant.
Since the proof is identical up to constant changes, we omit the details.
\end{remark}

\subsection{Type I Error Control via Signal Score Estimation}

In this subsection, we show that the Type I error for the motif counting statistic with a \emph{$C$-admissible} motif family can be bounded by $O\pth{\frac{1}{\sum_{\sfM\in \maM} \rho^{2\en(\sfM)}}}$, where the quantity $\sum_{\sfM\in \maM} \rho^{2\en(\sfM)}$ is defined as the \emph{signal score} of the motif family, since it captures the strength of the signal-to-noise ratio. In order to distinguish $\maH_0$ from $\maH_1$ by the test statistic $\maT_\maM$, one natural choice for $\tau$ is to pick $\tau = \frac{1}{2}\pth{\E_{\maP_0}\qth{\maT_\maM}+\E_{\maP_1}\qth{\maT_\maM}}=\frac{1}{2}\E_{\maP_1}\qth{\maT_\maM}$. Applying Chebyshev's inequality yields the Type I error: \begin{align*}
    \maP_0\pth{\maT_\maM\ge \tau}\le \frac{\var_{\maP_0}\qth{\maT_\maM}}{\tau^2}=\frac{4\var_{\maP_0}\qth{\maT_\maM}}{\pth{\E_{\maP_1}\qth{\maT_\maM}}^2}.
\end{align*}
With appropriately chosen weights $\omega_\sfM$, we have $\var_{\maP_0}\qth{\maT_\maM} = \E_{\maP_1}\qth{\maT_\maM} =\sum_{\sfM\in \maM} \rho^{2\en(\sfM)}$. Therefore, the Type I error is bounded by $\frac{4}{\sum_{\sfM\in \maM} \rho^{2\en(\sfM)}}$. 
Let $\aut(\sfM)$ denote the number of automorphisms of $\sfM$.
In particular, we have the following proposition.
\begin{proposition}\label{prop:TypeI-admissible}
    For the statistic $\maT_\maM$ with weight $\omega_\sfM = \frac{\rho^{\en(\sfM)} (n-\vn(\sfM))!}{n! (p(1-p))^{\en(\sfM)}\aut(\sfM)}$, if $\tau = \frac{1}{2}\E_{\maP_1}\qth{\maT_\maM}$, then 
    \begin{align*}
        \maP_0\pth{\maT_\maM\ge \tau}\le \frac{4}{\sum_{\sfM\in \maM} \rho^{2\en(\sfM)}}.
    \end{align*}
\end{proposition}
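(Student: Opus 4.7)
The plan is to apply Chebyshev's inequality under $\maP_0$ and then to verify the two identities $\E_{\maP_1}[\maT_\maM]=\var_{\maP_0}[\maT_\maM]=\sum_{\sfM\in\maM}\rho^{2\en(\sfM)}$ previewed in the paragraph preceding the proposition. Once these are in hand the conclusion is automatic: since each centered indicator $\beta_{uv}(\bar G)$ has mean zero and every motif in $\maM$ carries at least one edge, $\E_{\maP_0}[\maT_\maM]=0$, so Chebyshev's inequality with $\tau=\tfrac12\E_{\maP_1}[\maT_\maM]$ yields
\[
\maP_0\pth{\maT_\maM\ge\tau}\le\frac{\var_{\maP_0}[\maT_\maM]}{\tau^{2}}=\frac{4\var_{\maP_0}[\maT_\maM]}{\pth{\E_{\maP_1}[\maT_\maM]}^{2}}=\frac{4}{\sum_{\sfM\in\maM}\rho^{2\en(\sfM)}}.
\]

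For the mean under $\maP_1$, I would expand $\inj(\sfM,\bar G_1)\inj(\sfM,\bar G_2)$ as a double sum over pairs of injective maps $(\varphi_1,\varphi_2)$ and exploit the joint law of the correlated \ER model: edges of $\bar G_1$ are mutually independent, edges of $\bar G_2$ are mutually independent, and $\beta_{ab}(\bar G_1)$ is correlated with $\beta_{cd}(\bar G_2)$ precisely when $\{c,d\}=\pi(\{a,b\})$, with correlation $\rho$ and each marginal variance $p(1-p)$. Because every centered factor has mean zero, only pairs in which every edge of $\varphi_1(E(\sfM))$ is matched through $\pi$ with an edge of $\varphi_2(E(\sfM))$ survive. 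Connectedness of $\sfM$ then forces $\varphi_2=\pi\circ\varphi_1\circ\sigma^{-1}$ for some $\sigma\in\aut(\sfM)$, so exactly $n!\aut(\sfM)/(n-\vn(\sfM))!$ pairs contribute, each evaluating to $(\rho p(1-p))^{\en(\sfM)}$. Multiplying by the stated $\omega_\sfM$ cancels everything except $\rho^{2\en(\sfM)}$, and summing over $\sfM$ yields the first identity.

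For the variance under $\maP_0$, independence of $G_1$ and $G_2$ together with $\E_{\maP_0}[\inj(\sfM,\bar G_i)]=0$ collapses the double sum to
\[
\var_{\maP_0}[\maT_\maM]=\sum_{\sfM,\sfM'\in\maM}\omega_\sfM\,\omega_{\sfM'}\,\pth{\E_{\maP_0}\qth{\inj(\sfM,\bar G)\inj(\sfM',\bar G)}}^{2}.
\]
The same matching logic on a single graph shows the inner expectation vanishes unless $\varphi_1(E(\sfM))=\varphi_2(E(\sfM'))$ as edge sets, which for connected motifs forces $\sfM\cong\sfM'$; hence all off-diagonal terms drop out, and the diagonal $\sfM=\sfM'$ contribution evaluates to $\frac{n!\aut(\sfM)}{(n-\vn(\sfM))!}(p(1-p))^{\en(\sfM)}$. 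Squaring and multiplying by $\omega_\sfM^{2}$ again produces $\rho^{2\en(\sfM)}$, and summation closes the identity.

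The main technical hurdle will be the orthogonality/matching step that eliminates all mappings except the $\aut(\sfM)$-indexed aligned ones in both computations. One must carefully argue that any centered indicator appearing with multiplicity one forces a term to vanish, and that for connected motifs the common edge set pins down the common vertex set so that non-isomorphic pairs contribute nothing. Everything else---substituting the weight, checking $\E_{\maP_0}[\maT_\maM]=0$, and chaining through Chebyshev---is routine arithmetic.
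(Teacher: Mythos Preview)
Your proposal is correct and follows essentially the same route as the paper: Chebyshev under $\maP_0$, then the two identities $\E_{\maP_1}[\maT_\maM]=\var_{\maP_0}[\maT_\maM]=\sum_{\sfM\in\maM}\rho^{2\en(\sfM)}$ via the edge-matching/orthogonality argument that exploits connectedness of the motifs. The only cosmetic difference is that the paper treats $\pi$ as uniformly random and computes $\prob{\pi\circ\varphi_1(E(\sfM))=\varphi_2(E(\sfM))}=\aut(\sfM)(n-\vn(\sfM))!/n!$ for each pair $(\varphi_1,\varphi_2)$, whereas you condition on a fixed $\pi$ and count the $\tfrac{n!}{(n-\vn(\sfM))!}\aut(\sfM)$ surviving pairs directly; both yield the same totals.
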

The proof of Proposition~\ref{prop:TypeI-admissible} is deferred to Appendix~\ref{apd:proof-prop-admi-typeI}.
In view of Proposition~\ref{prop:TypeI-admissible}, the Type I error can be controlled as long as the \emph{signal score} is sufficiently large. 
If the motif family $\maM$ is \emph{$C$-admissible},
Condition~\ref{cond:signal-strength} in Definition~\ref{def:admissible} ensures that the Type I error is bounded by $0.01$. In practice, however, the parameters $p$ and $\rho$ may not be known a priori. A natural approach is to estimate these parameters from the observed graph data. While the edge probability $p$ can be reasonably approximated by the empirical edge density, estimating the correlation coefficient $\rho$ is substantially more challenging. To circumvent this challenge, we often restrict our attention to motif families satisfying
\[
\en(\sfM) = \en(\sfM') \quad \text{for all } \sfM,\sfM' \in \maM.
\]
Since all motifs in \(\mathcal M\) have the same number of edges, the factor $\rho^{\en(\sfM)}\,[p(1-p)]^{-\en(\sfM)}$
is constant across \(\sfM\in \maM\) and can be absorbed into a global normalization. Hence, instead of $\omega_{\sfM}=\frac{\rho^{\en(\sfM)}(n-\vn(\sfM))!}{n!(p(1-p))^{\en(\sfM)}\aut(\sfM)}$,
we may take the  choice $\omega_{\sfM}=\frac{(n-\vn(\sfM))!}{n!\aut(\sfM)}$.
With this choice, the statistic $\maT_\maM$ does not depend on the unknown correlation coefficient \(\rho\) (nor on \(p\) through the weights); the omitted constant is absorbed by the final normalization of the test.
A concrete example of such a \emph{$C$-admissible} motif family with equal edge numbers will be presented in Section~\ref{sec:bd-sub-count}.

\subsection{Type II Error Control via Second Moment Analysis}

In this subsection, we establish the main results for controlling the Type II error. By Chebyshev's inequality, we obtain \begin{align*}
    \maP_1\pth{\maT_\maM<\tau} \le \maP_1\pth{\left| \maT_\maM-\E_{\maP_1}\qth{\maT_\maM}\right|>\tau}\le \frac{4\var_{\maP_1}\qth{\maT_\maM}}{\pth{\E_{\maP_1}\qth{\maT_\maM}}^2}.
\end{align*}
With a suitable choice of weights $\omega_\sfM$, we have already shown that $\E_{\maP_1}[\maT_\maM]$ is characterized by the \emph{signal score} $\sum_{\sfM\in \maM} \rho^{2\en(\sfM)}$. The remaining task is to control the variance, which requires estimating the second moment $\E_{\maP_1}[\maT_\maM^2]$. This analysis involves carefully handling the correlated terms under $\maP_1$, especially the off-diagonal terms in the second-moment expansion. Indeed, Condition~\ref{cond:subgraph} in the definition of a \emph{$C$-admissible} motif family is precisely designed to ensure such control. In particular, we obtain the following Proposition for controlling the Type II error.

\begin{proposition}\label{prop:admissible-TypeII}
    For motif counting statistic $\maT_\maM$ with \emph{$C$-admissible}  family $\maM$ and weight $\omega_\sfM = \frac{\rho^{\en(\sfM)}(n-\vn(\sfM))!}{n!(p(1-p))^{\en(\sfM)}\aut(\sfM)}$, if $\tau = \frac{1}{2}\E_{\maP_1}\qth{\maT_\maM}$, then \begin{align*}
        \maP_1\pth{\maT_\maM<\tau}\le 4\Bigg(3n^{-\epsilon_0/2}(4C)^{8C}\rho^{-2C}+\frac{\exp\pth{\frac{C^2}{n-2C+1}}+1}{\sum_{\sfM\in \maM}\rho^{2\en(\sfM)}}+\exp\pth{\frac{C^2}{n-2C+1}}-1\Bigg).
    \end{align*}
\end{proposition}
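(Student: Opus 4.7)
The proof is a Chebyshev argument whose difficulty lies entirely in estimating the second moment $\E_{\maP_1}[\maT_\maM^2]$. With $\tau=\tfrac12\E_{\maP_1}[\maT_\maM]$, Chebyshev's inequality gives
\begin{align*}
\maP_1(\maT_\maM<\tau)\le \frac{4\,\var_{\maP_1}[\maT_\maM]}{(\E_{\maP_1}[\maT_\maM])^2}.
\end{align*}
A direct moment computation, in the same spirit as the one behind Proposition~\ref{prop:TypeI-admissible}, shows that the weights $\omega_\sfM$ are calibrated so that $\E_{\maP_1}[\maT_\maM]=S$ where $S:=\sum_{\sfM\in\maM}\rho^{2\en(\sfM)}$; it therefore suffices to show that $\E_{\maP_1}[\maT_\maM^2]-S^2$ is at most $S^2$ times the bracketed expression in the statement.

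I would expand $\E_{\maP_1}[\maT_\maM^2]$ as a sum over motif pairs $(\sfM_1,\sfM_2)\in\maM^2$ and quadruples of injective embeddings $\varphi_i\colon V(\sfM_i)\to V(G_1)$, $\psi_i\colon V(\sfM_i)\to V(G_2)$ for $i=1,2$. Under $\maP_1$ the centered edge weights satisfy $\E[\beta_e(\bar G_1)\beta_{\pi(e)}(\bar G_2)]=\rho\,p(1-p)$, $\E[\beta_e(\bar G_j)^2]=p(1-p)$, and $\E[\beta_e(\bar G_j)]=0$, so a quadruple contributes nonzero only when the multiset of edges it uses pairs up either across the two graphs via $\pi$ (factor $\rho\,p(1-p)$) or within a single graph (factor $p(1-p)$, requiring a duplicate edge). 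I would partition the nonzero contributions into three regimes aligned with the three summands of the bound. Regime (A): the two $G_1$-images are vertex-disjoint and each $\psi_i$ is a $\pi$-copy of $\varphi_i$ up to $\aut(\sfM_i)$-relabeling. Summing gives $S^2$ times a factorial ratio $\prod_{i=0}^{C-1}\frac{n-i}{n-C-i}\le \exp(C^2/(n-2C+1))$, which after subtracting $S^2$ yields the $\exp(C^2/(n-2C+1))-1$ term. Regime (B): the two $G_1$-images share a nonempty sub-motif $\sfM'$, still with $\pi$-copying for the $\psi_i$. Here Condition~\ref{cond:subgraph} provides the uniform bound $n^{\vn(\sfM')}p^{\en(\sfM')}\ge n^{\epsilon_0}$, which after normalization produces a reciprocal $n^{-\epsilon_0}$ per overlap pattern; together with a crude enumeration of overlap patterns and a $\rho^{-2C}$ loss from $\rho$-factor mismatch on the shared edges, this gives the $3n^{-\epsilon_0/2}(4C)^{8C}\rho^{-2C}$ term. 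Regime (C): at least one $\psi_i$ is not a $\pi$-copy of any $\varphi_j$, forcing self-pairing of edges inside $\bar G_1$ or $\bar G_2$; such a self-pairing collapses one injective sum from the $n^{\vn(\sfM)}$-scale down to $O(1)$, costing a factor of $S$ and producing the $(\exp(C^2/(n-2C+1))+1)/S$ piece.

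The main obstacle is Regime (B): one must enumerate, for arbitrary connected $\sfM_1,\sfM_2\in\maM$ (Condition~\ref{cond:connect}), every way a sub-motif $\sfM'$ can realize the overlap, and bound each resulting quadruple sum uniformly over the family. Condition~\ref{cond:subgraph} is indispensable here, since without the uniform lower bound $n^{\vn(\sfM')}p^{\en(\sfM')}\ge n^{\epsilon_0}$ a dense sub-motif would inflate the variance beyond what any crude combinatorial enumeration could absorb. Condition~\ref{cond:num} ensures $C$ grows slowly enough that the crude factor $(4C)^{8C}\rho^{-2C}$ remains subpolynomial in $n$, so the $n^{-\epsilon_0/2}$ penalty dominates. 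Tracking the $\aut(\sfM_i)$ normalization consistently through the pairing enumeration is tedious but routine.
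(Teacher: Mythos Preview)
Your broad strategy is right --- Chebyshev plus a second-moment expansion over quadruples --- but your three regimes (A), (B), (C) do not align with the actual sources of the three terms, and Regime (C) as stated would not go through. The paper partitions not by ``is each $\psi_i$ a $\pi$-copy of some $\varphi_j$'' (ill-posed since $\pi$ is averaged over) but by the pair of vertex-overlap sizes $\sfn_j=|\varphi_1(V(\sfM_1))\cap\varphi_2(V(\sfM_2))|$ in $G_j$, $j=1,2$. If the overlaps are \emph{discrepant} ($\sfn_1\notin[\sfn_2/2,2\sfn_2]$) the contribution vanishes identically; if both are zero (null overlap), averaging over $\pi$ gives $\kappa S^2$ plus a doubling term $\kappa\sum_\sfM\rho^{4\en(\sfM)}\le\kappa S$ from $\sfM_1=\sfM_2$; if they are \emph{balanced and positive}, a delicate bound on mixed moments $\E[\beta_e^a(\bar G_1)\beta_{\pi(e)}^b(\bar G_2)]$, organized through the intersections $\sfI_j$ and symmetric differences $\sfT_j$ of the two images in each graph and then invoking Condition~\ref{cond:subgraph} on subgraphs of $\sfI_1,\sfI_2$, yields $3n^{-\epsilon_0/2}(4C)^{8C}(\sum_\sfM\rho^{\en(\sfM)})^2$ plus a diagonal contribution $S$ from the event that both image pairs coincide.

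The concrete gap is your Regime (C). The claim that a non-$\pi$-copy embedding ``collapses one injective sum from the $n^{\vn(\sfM)}$-scale down to $O(1)$'' is false: when the images overlap in both $G_1$ and $G_2$, the conditional moments scale like $p^{-1/2}$ or $p^{-1}$ after normalization, and these factors can only be absorbed by applying Condition~\ref{cond:subgraph} to the overlap subgraphs in \emph{both} graphs simultaneously --- exactly the balanced-overlap analysis in the paper, which your Regime (B) (overlap in $G_1$ only, $\pi$-copies in $G_2$) does not cover. Relatedly, your attribution of the terms is off: the $(\kappa+1)/S$ piece is the sum of the diagonal $S$ from the balanced case and the $\kappa\sum_\sfM\rho^{4\en(\sfM)}\le\kappa S$ from the null case, neither of which is a ``collapse''; and the $\rho^{-2C}$ factor arises only at the very end from $(\sum_\sfM\rho^{\en(\sfM)})^2/S^2\le\rho^{-2C}$, not from any mismatch on shared edges.
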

The proof of Proposition~\ref{prop:admissible-TypeII} is deferred to Appendix~\ref{apd:proof-prop-admi-typeII}. In view of Proposition~\ref{prop:admissible-TypeII}, the term $\exp\left(\tfrac{C^2}{n-2C+1}\right)-1$ can be upper bounded by $\tfrac{1}{400}$ when $n$ is sufficiently large. Moreover, when $\rho$ is a constant, the term $3n^{-\epsilon_0/2}(4C)^{8C}\rho^{-2C}$ can also be upper bounded by $\tfrac{1}{400}$ for large $n$. Consequently, the Type II error is bounded by $0.04$ when $n$ is large enough. Combining this with Proposition~\ref{prop:TypeI-admissible}, we are now ready to prove Theorem~\ref{thm:admissible}.

\begin{proof}[Proof of Theorem~\ref{thm:admissible}]
Pick \begin{align*}
    \omega_\sfM = \frac{\rho^{\en(\sfM)}(n-\vn(\sfM))!}{n!(p(1-p))^{\en(\sfM)}\aut(\sfM)},\quad \tau = \frac{1}{2}\E_{\maP_1}\qth{\maT_\maM} = \frac{1}{2}\sum_{\sfM\in\maM} \rho^{2\en(\sfM)}.
\end{align*}
By the Condition~\ref{cond:signal-strength} in Definition~\ref{def:admissible}, we have $\sum_{\sfM\in \maM} \rho^{2\en(\sfM)}\ge 400$ for a \emph{$C$-admissible} motif family $\maM$.
By Proposition~\ref{prop:TypeI-admissible}, the Type I error is upper bounded by \begin{align}\label{eq:TypeI-error}
    \maP_0\pth{\maT_\maM\ge \tau}\le \frac{4}{\sum_{\sfM\in \maM} \rho^{2\en(\sfM)}}\le 0.01.
\end{align}
By Proposition~\ref{prop:admissible-TypeII}, the Type II error is upper bounded by \begin{align}\label{eq:TypeII-error}
    \maP_1\pth{\maT_\maM<\tau}\le 4\Bigg(3n^{-\epsilon_0/2}(4C)^{8C}\rho^{-2C}+\frac{\exp\pth{\frac{C^2}{n-2C+1}}+1}{\sum_{\sfM\in \maM}\rho^{2\en(\sfM)}}+\exp\pth{\frac{C^2}{n-2C+1}}-1\Bigg).
\end{align}
For any $C = o\pth{\frac{\log n}{\max\sth{\log\log n,-\log\rho}}}$ and sufficiently large $n$, we have $3n^{-\epsilon_0/2}(4C)^{8C}\rho^{-2C}\le \frac{1}{400}$ and $\exp\pth{\frac{C^2}{n-2C+1}}-1\le \frac{1}{400}$. By~\eqref{eq:TypeII-error}, we obtain \begin{align*}
    \maP_1\pth{\maT_\maM<\tau}&\le 4\pth{\frac{1}{400}+\frac{1+1/400}{\sum_{\sfM\in \maM}\rho^{2\en(\sfM)}}+\frac{1}{400}}\\&\le 4\pth{\frac{1}{400}+\frac{2}{400}+\frac{1}{400}}=0.04,
\end{align*}
where the last inequality applies the fact that  $\sum_{\sfM\in \maM} \rho^{2\en(\sfM)}\ge 400$ for a \emph{$C$-admissible} motif family $\maM$.
Consequently, combining this with~\eqref{eq:TypeI-error}, we obtain 
\begin{equation*}
    \maP_0\pth{\maT_\maM\ge \tau}+\maP_1\pth{\maT_\maM<\tau}\le 0.05.\qedhere
\end{equation*}
\end{proof}


\subsection{Enumeration of Motif Families}
\label{subsec:comb-contrib}

In this subsection, we highlight the combinatorial ingredients underlying our detection results.
To meet the detection criterion, it follows from~\eqref{eq:signal-to-noise} and~\eqref{eq:Chebyshev} that the quantity $\sum_{\sfM\in \maM}\rho^{2\en(\sfM)}$ plays a central role, where we often take $\en(\sfM)$ to be the same across the family, namely $\en(\sfM)=N_\sfe$ for all $\sfM\in\maM$.

A key observation from prior work is that the growth rate of the motif family can become the bottleneck.
For instance, \cite{wu2023testing} counted trees, whose family size is on the order of $(\alpha^{-1})^{N_\sfe}$~\cite{otter1948number}, where $\alpha\approx 0.338$ is Otter's constant.
In that case, unless $\rho^2$ exceeds a constant threshold, the sum $\sum_{\sfM\in\maM}\rho^{2\en(\sfM)}$ remains small as $N_\sfe\to\infty$.
This motivates seeking motif classes whose cardinality grows on the order of $N_\sfe^{\Theta(N_\sfe)}$, which is sufficient to achieve detection for any constant correlation $\rho=\Omega(1)$.


Beyond trees, classical enumeration results show that bounded degree graph classes (e.g., graphs with prescribed degree sequences) already have super-exponential cardinality $N_{\sfe}^{\Theta(N_{\sfe})}$; see, e.g.,~\cite{mckay1991asymptotic,liebenau2017asymptotic}. However, these results concern labeled graphs and are not tailored to the structural regularity required by our moment bounds. In particular, we need a large family of motifs that are  distinct up to isomorphism (rather than merely labeled) and for which automorphism sizes and overlap patterns are uniformly controlled, so that second-moment contributions from overlapping embeddings can be bounded. Moreover, we require that every nonempty subgraph $\sfM'\subseteq \sfM$ satisfies the worst-case sparsity constraint in Condition~\ref{cond:subgraph}. These additional requirements motivate us to go beyond generic bounded degree enumeration and to construct explicit motif families that retain the $N_{\sfe}^{\Theta(N_{\sfe})}$ multiplicity while enforcing the necessary subgraph constraints.

\noindent\textbf{Bounded degree motif families via bijective matchings.}
Our construction in Section~\ref{sec:bd-sub-count} achieves the growth rate $|\maM| = N_\sfe^{\Theta(N_\sfe)}$ by introducing a combinatorial degree of freedom in the form of a bijective mapping within the motif.
Specifically, we identify a collection of about $N_\sfe/d$ building blocks (with $d$ the maximum degree of the motif) and choose an arbitrary bijection between two copies of this collection.
The number of choices is about $(N_\sfe/d)!$, which yields the desired $N_\sfe^{\Theta(N_\sfe)}$ factor when $d=O(1)$, thereby motivating bounded degree motifs.

At the same time, our moment bounds require the technical condition that $n^{\vn(\sfM')}p^{\en(\sfM')}\ge n^{\epsilon_0}$ for all $\emptyset\neq \sfM'\subseteq \sfM$ and some constant $\epsilon_0$; see Condition~\ref{cond:subgraph}.
When $d=2$, the admissible motifs are essentially limited to paths and cycles, which is too restrictive for our purposes, and we therefore focus on $d\ge 3$.
However, for the bounded degree family in Section~\ref{sec:bd-sub-count}, Condition~\ref{cond:subgraph} is guaranteed only when $p\ge n^{-2/3}$, whereas the optimal sparsity range targeted by our results is $p\ge n^{-1+\delta}$.

\noindent\textbf{Refined motif families for $p\ge n^{-1+\delta}$.}
To push the sparsity requirement down to $p\ge n^{-1+\delta}$, we construct in Section~\ref{sec:refined-results} a more delicate motif class in which every subgraph $\sfM'\subseteq \sfM$ satisfies a suitable sparsity constraint, informally of the form $\en(\sfM')/\vn(\sfM')\le 1/(1-\delta)$.
The key idea is to separate vertices of degree $3$ by long stretches of degree-$2$ vertices, which forces any subgraph to remain sufficiently sparse and hence improves the worst-case behavior in Condition~\ref{cond:subgraph}.
Meanwhile, the construction retains an arbitrary bijective mapping component, so that the refined family still has size $|\maM| = N_\sfe^{\Theta(N_\sfe)}$.

\section{Admissible Bounded Degree Motifs: Construction and Detection Guarantees}\label{sec:bd-sub-count}
We have established in Section~\ref{sec:general-motif} that counting \emph{$C$-admissible} motifs is sufficient for successful detection. In this section, we present an explicit construction of a motif family that satisfies the \emph{$C$-admissible} conditions.
In Section~\ref{sec:method}, we introduced the bounded degree motif family $\maM(N_\sfe, d)$, defined as the collection of motifs with $N_\sfe$ edges and maximum degree at most $d$. However, directly analyzing all motifs in $\maM(N_\sfe, d)$ is challenging due to the large heterogeneity of their structures and correlations. To obtain a simpler yet effective statistic, we focus on a more structured subclass.
Specifically, we construct a motif family $\maM(N_\sfv, N_\sfe, d) \subseteq \maM(N_\sfe, d)$ consisting of motifs with $N_\sfv$ vertices, $N_\sfe$ edges, and maximum degree $d$. In Section~\ref{subsec:bd-degree-specific}, we show that the associated motif-counting statistic $\maT_{\maM(N_\sfv, N_\sfe, d)}$ is both polynomial-time computable and \emph{$C$-admissible}, and hence is sufficient for detection. Furthermore, we establish that the broader bounded degree motif statistic $\maT_{\maM(N_\sfe, d)}$ is also \emph{$C$-admissible} in Section~\ref{subsec:general-bd-degree}.

The overall proof flow is summarized in Figure~\ref{fig:logic-flow}:
Section~\ref{sec:general-motif} shows the toolbox
established in  Theorem~\ref{thm:admissible}; Section~\ref{subsec:bd-degree-specific} proves that
$\mathcal{T}_{\mathcal{M}(N_{\mathsf v},N_{\mathsf e},d)}$ is \emph{$N_\sfe$-admissible} and yields detection in Theorem~\ref{thm:bd-degree-motif-main}; and
Section~\ref{subsec:general-bd-degree} upgrades the argument to the cleaner statistic
$\mathcal{T}_{\mathcal{M}(N_{\mathsf e},d)}$, which is \emph{$(N_\sfe{+}1)$-admissible} and also successful for detection, as stated in Theorem~\ref{thm:bd-degree-motif-general}.




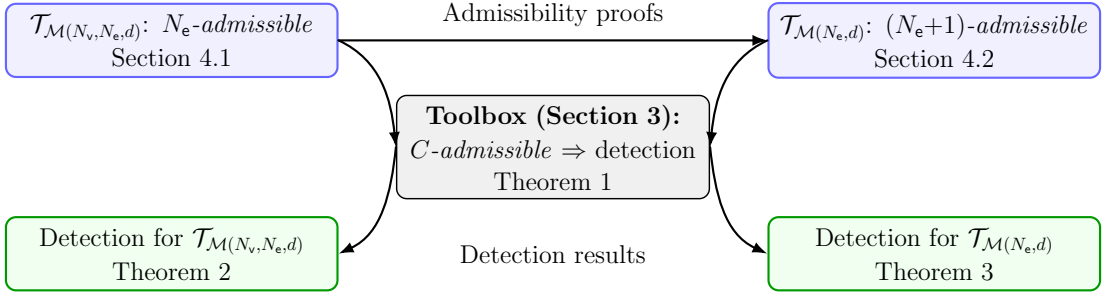
\begin{figure}[htbp]
\centering
\resizebox{0.95\linewidth}{!}{%
\begin{tikzpicture}[font=\Large, node distance=22mm and 32mm,
  >=stealth,
  nodeA/.style = {rectangle, rounded corners=2mm, draw=blue!60, fill=blue!6,
                  very thick, minimum height=12mm, text width=61mm,
                  align=center, inner sep=6pt},
  nodeD/.style = {rectangle, rounded corners=2mm, draw=green!60!black, fill=green!6,
                  very thick, minimum height=12mm, text width=61mm,
                  align=center, inner sep=6pt},
  toolbox/.style = {rectangle, rounded corners=2mm, draw=black, fill=black!6,
                    thick, minimum height=10mm, text width=58mm,
                    align=center, inner sep=5pt},
  flow/.style = {->, very thick, line cap=round},
  lab/.style  = {font=\scriptsize, text=black!65},
  ttl/.style  = {font=\Large}
]

\node[nodeA] (A1) at (-7.5,  2.1)
{$\mathcal{T}_{\mathcal{M}(N_{\mathsf v},N_{\mathsf e},d)}$: \emph{$N_{\mathsf{e}}$-admissible}\\[1pt] Section~\ref{subsec:bd-degree-specific}};

\node[nodeD] (A2) at (-7.5, -2.1)
{Detection for $\mathcal{T}_{\mathcal{M}(N_{\mathsf v},N_{\mathsf e},d)}$\\[1pt] Theorem~\ref{thm:bd-degree-motif-main}};

\node[nodeA] (D1) at ( 7.5,  2.1)
{$\mathcal{T}_{\mathcal{M}(N_{\mathsf e},d)}$: \emph{$(N_{\mathsf{e}}{+}1)$-admissible}\\[1pt] Section~\ref{subsec:general-bd-degree}};

\node[nodeD] (D2) at ( 7.5, -2.1)
{Detection for $\mathcal{T}_{\mathcal{M}(N_{\mathsf e},d)}$\\[1pt] Theorem~\ref{thm:bd-degree-motif-general}};

\node[ttl] at ($(A1)!0.5!(D1)+(0,5mm)$) {Admissibility proofs};
\node[ttl] at ($(A2)!0.5!(D2)$) {Detection results};

\coordinate (TopMid) at ($(A1)!0.5!(D1)$);
\coordinate (BotMid) at ($(A2)!0.5!(D2)$);
\node[toolbox] (TB) at ($(TopMid)!0.5!(BotMid)$)
{\textbf{Toolbox (Section~\ref{sec:general-motif}):}\\[1pt] \emph{$C$-admissible} $\Rightarrow$ detection\\ Theorem~\ref{thm:admissible}};

\draw[flow] (A1.east) to[out=-25,  in=96, looseness=1.00] (TB.west);
\draw[flow] (TB.west) to[out=264, in=25, looseness=1.00] (A2.east);

\draw[flow] (A1.east) to[out=0, in=180, looseness=1.05] (D1.west);

\draw[flow] (D1.west) to[out=205, in=84,  looseness=1.00] (TB.east);
\draw[flow] (TB.east) to[out=276, in=155, looseness=1.00] (D2.west);

\end{tikzpicture}%
}
\caption{Logical flow from admissibility to detection.}
\label{fig:logic-flow}
\end{figure}

\subsection{Construction of a Specific Bounded Degree Family}\label{subsec:bd-degree-specific}
For any integers $N_\sfv,N_\sfe$ such that $N_\sfv= \ell(d-1)+4$ and $N_\sfe = \binom{d}{2}\ell+d+1$ for some $\ell,d\in \mathbb N$, let $\maM(N_\sfv,N_\sfe,d)$ denote a special subset of bounded degree motifs with $N_\sfv$ vertices, $N_\sfe$ edges, and maximum degree $d$. Specifically, each motif $\sfM\in \maM(N_\sfv,N_\sfe,d)$ consists of $d-1$ paths of length $\ell$ between  two \emph{central vertices}, with each \emph{central vertex} connecting to an \emph{extremity vertex} of degree 1.
Additionally, there are $\ell$ red edges connecting distinct pairs of vertices between any two paths. The motif family $\maM(N_\sfv,N_\sfe,d)$ consists of all such motifs.
As illustrated in Figure~\ref{fig:bd-degree-graph}, each motif in $\maM(N_\sfv,N_\sfe,d)$ consists of 
$d-1$ paths, each in blue, with $\ell$ vertices of degree $d$. 
Specifically, each path is defined as $P_i\triangleq \sth{{v_{i,1}},v_{i,2},\cdots, v_{i,\ell}}$ for any $1\le i\le d-1$.
There are $\ell$ edges in red connecting distinct pairs of vertices between any two paths. 
Additionally, the \emph{central vertices} are $v_{01}, v_{02}$ and the \emph{extremity vertices} are $v_{00},v_{03}$.

Indeed, since there are exactly two vertices of degree $2$ in $\sfM$, we may view $\sfM$ as a partially labeled graph with two distinguished vertices $v_{0,0}$ and $v_{0,3}$ (see, e.g., \citep[Section~3.2]{lovasz2012large}). When counting such bounded degree motifs in $G_1$ and $G_2$, one can start from these labeled vertices and extend along the prescribed paths.
Furthermore, this convention does not affect the statistic: if $\sfM$ admits an automorphism exchanging $v_{0,0}$ and $v_{0,3}$, the partially labeled count equals twice the unlabeled count; otherwise the two counts are identical.
This is equivalent to labeling the two \emph{central vertices} $v_{0,1}$ and $v_{0,2}$ while deleting the \emph{extremity vertices} $v_{0,0}$ and $v_{0,3}$. In the simple case $d=3$ and $\ell=1$, the motif then becomes a partially labeled square with one cross edge, illustrating the basic structure of this family.


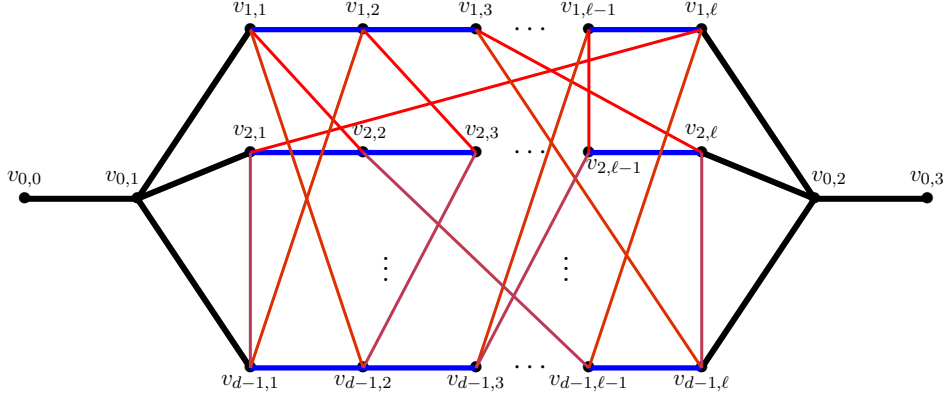
\begin{figure}
    \begin{center}
        \begin{tikzpicture}[yscale = 1.3, xscale=1.3, line width=0.4mm, line cap=round, line join=round]

            \tikzset{
              bullet/.style={circle, fill=black, inner sep=1.25pt},
              ep/.style={circle, draw=black, fill=white, inner sep=1.1pt}
            }

            \node[ep] (v00) at (-1,0) {};
            \node[ep] (v03) at (7,0) {};

            \node[above,scale=0.8,yshift=1] at (v00) {$v_{0,0}$};
            \node[above,scale=0.8,xshift=-7,yshift=1] at (0,0) {$v_{0,1}$};
            \node[above,scale=0.8,yshift=1] at (1,1.5) {$v_{1,1}$};
            \node[above,scale=0.8,yshift=1] at (1,0.41) {$v_{2,1}$};
            \node[below,scale=0.8,yshift=1] at (1,-1.5) {$v_{d-1,1}$};
            \node[above,scale=0.8,yshift=1] at (2,1.5) {$v_{1,2}$};
            \node[above,scale=0.8,xshift=7,yshift=1] at (2,0.41) {$v_{2,2}$};
            \node[below,scale=0.8,yshift=1] at (2,-1.5) {$v_{d-1,2}$};
            \node[above,scale=0.8,yshift=1] at (3,1.5) {$v_{1,3}$};
            \node[above,scale=0.8,xshift=3,yshift=1] at (3,0.41) {$v_{2,3}$};
            \node[below,scale=0.8,yshift=1] at (3,-1.5) {$v_{d-1,3}$};
            \node[above,scale=0.8,yshift=1] at (4,1.5) {$v_{1,\ell-1}$};
            \node[below,scale=0.8,xshift=12,yshift=1] at (4,0.41) {$v_{2,\ell-1}$};
            \node[below,scale=0.8,yshift=1] at (4,-1.5) {$v_{d-1,\ell-1}$};
            \node[above,scale=0.8,yshift=1] at (5,1.5) {$v_{1,\ell}$};
            \node[above,scale=0.8,yshift=1] at (5,0.41) {$v_{2,\ell}$};
            \node[below,scale=0.8,yshift=1] at (5,-1.5) {$v_{d-1,\ell}$};
            \node[above,scale=0.8,yshift=1,xshift=7] at (6,0) {$v_{0,2}$};
            \node[above,scale=0.8,yshift=1] at (v03) {$v_{0,3}$};

            \foreach \P in {
              (0,0),
              (1,1.5),(1,0.41),(1,-1.5),
              (2,1.5),(2,0.41),(2,-1.5),
              (3,1.5),(3,0.41),(3,-1.5),
              (4,1.5),(4,0.41),(4,-1.5),
              (5,1.5),(5,0.41),(5,-1.5),
              (6,0)
            }{
              \node[bullet] at \P {};
            }

            \draw[line width=0.35mm] (v00) -- (0,0);
            \draw[line width=0.35mm] (0,0) -- (1,1.5);
            \draw[line width=0.35mm] (0,0) -- (1,0.41);
            \draw[line width=0.35mm] (0,0) -- (1,-1.5);
            \draw[line width=0.35mm] (6,0) -- (v03);
            \draw[line width=0.35mm] (6,0) -- (5,1.5);
            \draw[line width=0.35mm] (6,0) -- (5,0.41);
            \draw[line width=0.35mm] (6,0) -- (5,-1.5);

            \draw[blue, line width=0.35mm] (1,1.5) -- (2,1.5);
            \draw[blue, line width=0.35mm] (2,1.5) -- (3,1.5);
            \draw[blue, line width=0.35mm] (1,0.41) -- (2,0.41);
            \draw[blue, line width=0.35mm] (2,0.41) -- (3,0.41);
            \draw[blue, line width=0.35mm] (1,-1.5) -- (2,-1.5);
            \draw[blue, line width=0.35mm] (2,-1.5) -- (3,-1.5);
            \draw[blue, line width=0.35mm] (4,1.5) -- (5,1.5);
            \draw[blue, line width=0.35mm] (4,0.41) -- (5,0.41);
            \draw[blue, line width=0.35mm] (4,-1.5) -- (5,-1.5);

            \draw (3.5,1.5) node{$\cdots$};
            \draw (3.5,0.41) node{$\cdots$};
            \draw (2.2,-0.55) node{$\vdots$};
            \draw (3.8,-0.55) node{$\vdots$};
            \draw (3.5,-1.5) node{$\cdots$};

            \draw[red1,line width=0.35mm] (1,1.5) -- (2,0.41);
            \draw[red1,line width=0.35mm] (2,1.5) -- (3,0.41);
            \draw[red1,line width=0.35mm] (3,1.5) -- (5,0.41);
            \draw[red1,line width=0.35mm] (5,1.5) -- (1,0.41);

            \draw[red2,line width=0.35mm] (2,1.5) -- (1,-1.5);
            \draw[red2,line width=0.35mm] (1,1.5) -- (2,-1.5);
            \draw[red2,line width=0.35mm] (3,1.5) -- (5,-1.5);
            \draw[red2,line width=0.35mm] (4,1.5) -- (3,-1.5);

            \draw[red3,line width=0.35mm] (1,0.41) -- (1,-1.5);
            \draw[red3,line width=0.35mm] (2,0.41) -- (4,-1.5);
            \draw[red3,line width=0.35mm] (4,0.41) -- (3,-1.5);
            \draw[red3,line width=0.35mm] (5,0.41) -- (5,-1.5);
            \draw[red3,line width=0.35mm] (3,0.41) -- (2,-1.5);

            \draw[red1,line width=0.35mm] (4,1.5) -- (4,0.41);
            \draw[red2,line width=0.35mm] (5,1.5) -- (4,-1.5);

        \end{tikzpicture}
    \end{center}
    \caption{A special bounded degree motif with vertex set size $N_\sfv$, edge set size $N_\sfe$, and maximum degree $d$.}
    \label{fig:bd-degree-graph}
\end{figure}

By Theorem~\ref{thm:admissible}, in order to provide a theoretical guarantee for $\maT_{\maM(N_\sfv,N_\sfe,d)}$, it suffices to verify the four \emph{$N_\sfe$-admissible} Conditions in Definition~\ref{def:admissible}. 
The connectivity for $\sfM\in \maM(N_\sfv,N_\sfe,d)$ yields the Condition~\ref{cond:connect}.
Since the vertices and edges for any $\sfM\in \maM(N_\sfv,N_\sfe,d)$ are bounded by $N_\sfe= \binom{d}{2}\ell+d+1$, Condition~\ref{cond:num} holds. We then verify Conditions~\ref{cond:signal-strength} and~\ref{cond:subgraph}, respectively. For Condition~\ref{cond:signal-strength}, since $\en(\sfM) = \en(\sfM')$ for any $\sfM,\sfM'\in \maM(N_\sfv,N_\sfe,d)$, the \emph{signal score} is characterized by $\sum_{\sfM\in \maM(N_\sfv,N_\sfe,d)}\rho^{2\en(\sfM)} = \rho^{2N_\sfe} |\maM(N_\sfv,N_\sfe,d)|$. The following lemma provides an estimate of $|\maM(N_\sfv,N_\sfe,d)|$.

\begin{lemma}\label{lem:motif-family-lwbd}
    For the motif family $ \maM(N_\sfv,N_\sfe,d)$  with $d\ge 3$, we have\begin{align}\label{eq:motif-family-lwbd}
        \frac{1}{2}\pth{\frac{2(N_\sfe-d-1)}{ed^{\frac{d}{d-2}}(d-1)}}^{\frac{d-2}{d}(N_\sfe-d-1)}\le |\maM(N_\sfv,N_\sfe,d)|\le \pth{\frac{2(N_\sfe-d-1)}{d(d-1)}}^{\frac{d-2}{d}(N_\sfe-d-1)}.
    \end{align}
\end{lemma}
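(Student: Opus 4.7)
The plan is to count the motifs in $\maM(N_\sfv,N_\sfe,d)$ by first enumerating the labeled configurations compatible with the prescribed backbone, and then quotient by the backbone's automorphism group to obtain isomorphism classes.

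First I would fix the skeleton common to every $\sfM \in \maM(N_\sfv,N_\sfe,d)$: two central vertices $v_{0,1},v_{0,2}$, two degree-one extremities $v_{0,0},v_{0,3}$ attached to them, and $d-1$ internally disjoint paths $P_i = \{v_{i,1},\ldots,v_{i,\ell}\}$ joining $v_{0,1}$ to $v_{0,2}$. After this rigid part is in place, the only remaining freedom is the placement of cross-edges: by definition, between each pair $(P_i,P_j)$ with $1 \le i < j \le d-1$ there are exactly $\ell$ edges connecting pairwise distinct endpoints, which is precisely a perfect matching of the $\ell$ vertices of $P_i$ with those of $P_j$, i.e., an element of $S_\ell$. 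The $\binom{d-1}{2}$ matchings are independent, so the total number of labeled configurations is $(\ell!)^{\binom{d-1}{2}}$. Throughout I would use the identities $\ell = \tfrac{2(N_\sfe - d - 1)}{d(d-1)}$ and $\ell \binom{d-1}{2} = \tfrac{(d-2)(N_\sfe - d - 1)}{d}$ to translate between the two sets of parameters.

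For the upper bound I would simply note that $|\maM(N_\sfv,N_\sfe,d)|$ is at most the number of labeled configurations, and apply the crude Stirling-free bound $\ell! \le \ell^\ell$ to obtain
\[
|\maM(N_\sfv,N_\sfe,d)| \;\le\; (\ell!)^{\binom{d-1}{2}} \;\le\; \ell^{\,\ell \binom{d-1}{2}},
\]
into which I substitute the expressions above to recover the stated upper bound.

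For the lower bound I would quotient by the skeleton automorphism group $\Gamma$, generated by (i) arbitrary permutations of the $d-1$ paths and (ii) the involution swapping the two halves (exchanging $v_{0,1} \leftrightarrow v_{0,2}$, $v_{0,0} \leftrightarrow v_{0,3}$, and reversing each path). Any isomorphism between two motifs in the family must preserve degree classes, hence the central/extremity vertices and the path structure, and therefore restricts to such a skeleton symmetry; so $|\Gamma| \le 2(d-1)!$. Consequently each $\Gamma$-orbit has size at most $2(d-1)!$, and the number of orbits (which equals the number of isomorphism classes) satisfies $|\maM(N_\sfv,N_\sfe,d)| \ge (\ell!)^{\binom{d-1}{2}} / (2(d-1)!)$. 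Then I would apply Stirling $\ell! \ge (\ell/e)^\ell$ to the numerator and $(d-1)! \le d^{d-1}$ to the denominator, setting $k := N_\sfe - d - 1 = \binom{d}{2}\ell$, and check that $d^{d-1} \le d^k$ for $d \ge 3$ and $\ell \ge 1$, which is exactly what is needed to absorb the $(d-1)!$ into the $d^{d/(d-2)}$ factor appearing in the base of the claimed lower bound.

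The main obstacle is ensuring that the quotient by $\Gamma$ really gives a valid lower bound on the number of isomorphism classes, rather than missing identifications from some less obvious automorphisms (for instance, internal automorphisms of an individual path that are accidentally respected by the cross-matchings). I would handle this by a degree-based rigidity argument: the degree-$1$ vertices pin down the extremities, the two degree-$2$ vertices adjacent to them pin down the central pair, and once the central pair is fixed every path is identified by its unordered endpoint pair, so no automorphism can shuffle vertices within a single path nontrivially while preserving the cross-matching incidences. The remaining work is bookkeeping: organizing exponents to confirm that both bounds combine into the stated form with the constant factor $\tfrac12$ on the lower side absorbing the $2$ in $2(d-1)!$.
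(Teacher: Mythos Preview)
Your upper bound argument is correct and identical to the paper's. The gap is in the lower bound: the rigidity claim that every isomorphism between two configurations restricts to an element of your skeleton group $\Gamma$ is false. The decomposition of the non-central vertices into $d-1$ paths is not an intrinsic graph invariant---once labels are forgotten, path edges and cross edges are indistinguishable---so an isomorphism can ``hop'' between paths at intermediate positions. Concretely, take $d=3$, $\ell=4$ with cross-matching $(1\,2)(3\,4)$, i.e.\ cross-edges $v_{1,1}v_{2,2}$, $v_{1,2}v_{2,1}$, $v_{1,3}v_{2,4}$, $v_{1,4}v_{2,3}$. The map fixing every $v_{0,*}$ and both $v_{1,1},v_{2,1}$, while swapping $v_{1,j}\leftrightarrow v_{2,j}$ for $j=2,3,4$, is an automorphism of this motif (it exchanges the path edge $v_{1,1}v_{1,2}$ with the cross edge $v_{1,1}v_{2,2}$, and so on), yet it lies outside $\Gamma$. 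Thus $\aut(\sfM)$ can strictly exceed $2(d-1)!$, and your bound on isomorphism-class sizes is not established. (A minor slip: the central vertices have degree $d$, not $2$; the motif has no degree-$2$ vertices.)

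The paper sidesteps this by proving only the much looser bound $\aut(\sfM)\le 2d^{(d-1)\ell}$: once the centers are pinned down (two choices), an automorphism is determined by tracing the images of $v_{i,1},v_{i,2},\ldots,v_{i,\ell}$ through successive neighbors, with at most $d$ options at each of the $(d-1)\ell$ steps. Dividing $(\ell!)^{\binom{d-1}{2}}$ by $2d^{(d-1)\ell}$ and applying $\ell!\ge(\ell/e)^\ell$ gives exactly the stated lower bound; the factor $d^{(d-1)\ell}$ is precisely what becomes the $d^{d/(d-2)}$ in the base after substituting $\ell=\tfrac{2(N_\sfe-d-1)}{d(d-1)}$. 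Your route via $|\Gamma|=2(d-1)!$ would, if valid, yield a strictly stronger inequality than the lemma claims, but justifying it would require controlling the number of distinct skeleton decompositions a motif can carry, which your degree argument does not do.
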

 
The proof of Lemma~\ref{lem:motif-family-lwbd} is deferred to Appendix~\ref{apd:proof-lem-motif-family-lwbd}.
It follows from Lemma~\ref{lem:motif-family-lwbd} that, if $N_\sfe\ge d+1+ C(d) \rho^{-\frac{2d}{d-2}}$ with some constant $C(d)$, then \begin{align*}
    \sum_{\sfM\in \maM(N_\sfv,N_\sfe,d)}\rho^{2\en(\sfM)}&\ge \rho^{2N_\sfe}|\maM(N_\sfv,N_\sfe,d)|\\&\ge \frac{1}{2}\pth{\frac{2\rho^{\frac{2d}{d-2}}(N_\sfe-d-1)}{ed^{\frac{d}{d-2}}(d-1)}}^{\frac{d-2}{d}(N_\sfe-d-1)}\rho^{2d+2}\\&\ge\frac{1}{2}\pth{\frac{2C(d)}{ed^{\frac{d}{d-2}}(d-1)}}^{\frac{d-2}{d}C(d)\rho^{-2d/(d-2)}}\ge 400.
\end{align*} Hence, the requirement on the \emph{signal score} is satisfied. As for Condition~\ref{cond:subgraph}, we have the following Lemma.
\begin{lemma}\label{lem:subgraph}
    For any motif $\sfM\in \maM(N_\sfv,N_\sfe,d)$ and subgraph $\emptyset\neq \sfM'\subseteq \sfM$, we have 
    \begin{align*}
        d\,\vn(\sfM')\ge 2\,\en(\sfM')+1.
    \end{align*}
\end{lemma}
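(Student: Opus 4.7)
The plan is to prove the lemma in two steps: first I would establish the weak handshake inequality $d\,\vn(\sfM') \ge 2\en(\sfM')$, and then upgrade it by one by ruling out the equality case using the two degree-one vertices that every motif in $\maM(N_\sfv, N_\sfe, d)$ contains by construction.

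For the first step, since every vertex of $\sfM$ has degree at most $d$ and $\sfM'$ is a subgraph, we have $\deg_{\sfM'}(v) \le \deg_\sfM(v) \le d$ for each $v \in V(\sfM')$. Summing over $V(\sfM')$ yields
\[
2\en(\sfM') \;=\; \sum_{v \in V(\sfM')} \deg_{\sfM'}(v) \;\le\; d\,\vn(\sfM').
\]
Since both sides are integers, to obtain the desired $+1$ sharpening it suffices to rule out the equality $2\en(\sfM') = d\,\vn(\sfM')$, which would force $\deg_{\sfM'}(v) = d$ for every $v \in V(\sfM')$.

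For the second step, I would argue by contradiction. Suppose $\deg_{\sfM'}(v) = d$ for all $v \in V(\sfM')$. Combined with $\deg_{\sfM'}(v) \le \deg_\sfM(v) \le d$, this forces $\deg_\sfM(v) = d$ and moreover that every edge of $\sfM$ incident to $v$ already lies in $E(\sfM')$, so every $\sfM$-neighbor of $v$ must also lie in $V(\sfM')$. Thus $V(\sfM')$ is closed under taking neighbors in $\sfM$. Because $\sfM \in \maM(N_\sfv, N_\sfe, d)$ is connected and $V(\sfM') \neq \emptyset$, iterating this closure propagates throughout the entire motif, yielding $V(\sfM') = V(\sfM)$.

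The contradiction then comes from the explicit construction of $\sfM$ in Section~\ref{subsec:bd-degree-specific}: the two extremity vertices $v_{0,0}$ and $v_{0,3}$ have $\deg_\sfM = 1$, so they cannot possibly attain $\deg_{\sfM'} = d \ge 2$. Hence equality is impossible, and strict inequality together with integrality gives $d\,\vn(\sfM') \ge 2\en(\sfM') + 1$. The only step requiring any care is the closure/propagation argument, which crucially uses the connectedness of $\sfM$; once past it, everything else is immediate from the handshake lemma and the fact that the construction deliberately attaches two degree-one extremity vertices precisely in order to enforce a subgraph-density condition of this form.
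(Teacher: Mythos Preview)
Your proof is correct and follows essentially the same approach as the paper: establish $d\,\vn(\sfM')\ge 2\,\en(\sfM')$ by the handshake bound, assume equality so that every $v\in V(\sfM')$ has $\deg_{\sfM'}(v)=d$, propagate through the connected motif to reach an extremity vertex of degree one, and derive a contradiction. The only cosmetic difference is that the paper traces the propagation explicitly along the path labels $v_{i,j}\to v_{i,j+1}\to\cdots\to v_{0,2}\to v_{0,3}$, whereas you phrase the same step more abstractly as ``$V(\sfM')$ is closed under taking $\sfM$-neighbors, hence equals $V(\sfM)$ by connectedness''---this is arguably cleaner but not a different idea.
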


\begin{proof}

We note that for any $v\in V(\sfM')$, the degree of $v$ is at most $d$. Therefore, we have $d \vn(\sfM')\ge 2 \en(\sfM')$. It remains to prove that the equality cannot be achieved.

    Suppose $d\vn(\sfM') = 2\en(\sfM')$. Then, for any $v\in V(\sfM')$, the degree of $v$ in $\sfM'$ is $d$. If there exists $1\le i\le d-1,1\le j\le \ell$ such that $v_{i,j}\in \sfM'$, since the degree of $v_{i,j}$ in $\sfM'$ is $d$, then $v_{i,{j+1}} \in \sfM'$. Similarly, we obtain that $v_{i,j},v_{i,j+1},\cdots, v_{i,\ell},v_{0,2},v_{0,3}\in V(\sfM')$. Since the degree of $v_{0,3}$ in $\sfM'$ is at most 1 and $d\ge 3$, we have $d\vn(\sfM')>2\en(\sfM')$. If there exists $0\le i\le 3$ such that $v_{0,i}\in V(\sfM')$, we can similarly obtain that $v_{0,0}\in V(\sfM')$ or $v_{0,3}\in V(\sfM')$, and thus $d\vn(\sfM')>2\en(\sfM')$. Therefore, we conclude that $d\vn(\sfM')\ge 2\en(\sfM')+1$ for any $\sfM\in \maM(N_\sfv,N_\sfe,d)$ and $\sfM'\subseteq \sfM$.
\end{proof}

Indeed, the presence of two \emph{extremity vertices} ensures that all motifs in \(\mathcal{M}(N_\sfv, N_\sfe, d)\) are non-regular.
For any connected motif $\mathsf{M}$ with maximum degree d, we have $2\en(\mathsf{M}')  \le d\vn(\mathsf{M}')$ for any subgraph $\mathsf{M}'\subseteq \mathsf{M}$.
The equality cannot hold, as the existence of two \emph{extremity vertices} of degree 1 prevents any subgraph from being $d$-regular.
Hence the inequality is strict, and hence the lemma follows.
By Lemma~\ref{lem:subgraph}, we have $n^{\vn(\sfM')}p^{\en(\sfM')}\ge n^{\frac{2\en(\sfM')+1}{d}}p^{\en(\sfM')}\ge n^{1/d}$ if $p\ge n^{-2/d}$. Taking $\epsilon_0 = \frac{1}{d}$ yields Condition~\ref{cond:subgraph}. Since $\sfM$ is connected and $N_\sfv\le N_\sfe$ by our construction, we obtain that the motif family $\maM(N_\sfv,N_\sfe,d)$ is \emph{$N_\sfe$-admissible}. Consequently, we obtain the following theorem regarding the detection performance of $\maT_{\maM(N_\sfv,N_\sfe,d)}$.
Let $[x]$ denote the greatest integer less than or equal to $x$.

\begin{theorem}\label{thm:bd-degree-motif-main}
     If $p=n^{-a}$ with $0<a\le \frac{2}{3}$ and  $d=\qth{\frac{2}{a}}$, when $N_\sfe=o\pth{\frac{\log n}{\max\pth{\log\log n,-\log\rho}}}$ and $N_\sfe\ge \frac{C_1(d)}{\rho^{2d/(d-2)}}$ for some constant $C_1(d)$ depending on $d$,\begin{align*}
        \maP_0\pth{\maT_{\maM(N_\sfv,N_\sfe,d)}\ge \tau}+\maP_1\pth{\maT_{\maM(N_\sfv,N_\sfe,d)}<\tau}\le 0.05.
    \end{align*}

    If $p=n^{-o(1)}$, then for any constant $\epsilon>0$, when $N_\sfe = o\pth{\frac{\log n}{\max\pth{\log\log n,-\log\rho}}}$ and $N_\sfe\ge \frac{C_2(\epsilon)}{\rho^{2+\epsilon}}$ for some constant $C_2(\epsilon)$ depending on $\epsilon$,\begin{align*}
        \maP_0\pth{\maT_{\maM(N_\sfv,N_\sfe,d)}\ge \tau}+\maP_1\pth{\maT_{\maM(N_\sfv,N_\sfe,d)}<\tau}\le 0.05.
    \end{align*}

    Furthermore, $\maT_{\maM(N_\sfv,N_\sfe,d)}$ is  computable in $O(n^{N_\sfe})$.
\end{theorem}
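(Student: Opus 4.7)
The plan is to invoke Theorem~\ref{thm:admissible} with $\maM = \maM(N_\sfv,N_\sfe,d)$, reducing the claim to verifying that this family is $N_\sfe$-admissible in the sense of Definition~\ref{def:admissible}. Conditions~\ref{cond:connect} and~\ref{cond:num} are immediate from the construction: every motif in $\maM(N_\sfv,N_\sfe,d)$ is connected, and both $\vn(\sfM)$ and $\en(\sfM)$ are bounded above by $N_\sfe$, which satisfies the growth hypothesis on $N_\sfe$. The substantive conditions are the signal-score bound (Condition~\ref{cond:signal-strength}) and the subgraph density bound (Condition~\ref{cond:subgraph}), handled respectively by Lemma~\ref{lem:motif-family-lwbd} and Lemma~\ref{lem:subgraph}.

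For the signal score, since $\en(\sfM)=N_\sfe$ for every $\sfM\in \maM(N_\sfv,N_\sfe,d)$, one has
\[
\sum_{\sfM\in \maM(N_\sfv,N_\sfe,d)}\rho^{2\en(\sfM)} = \rho^{2N_\sfe}\,|\maM(N_\sfv,N_\sfe,d)|.
\]
Plugging the lower bound of Lemma~\ref{lem:motif-family-lwbd} into this expression reduces the inequality $\sum_{\sfM}\rho^{2\en(\sfM)}\ge 400$ to an exponential-versus-polynomial comparison in $N_\sfe$. Choosing $C_1(d)$ large enough makes this bound hold whenever $N_\sfe\ge C_1(d)\rho^{-2d/(d-2)}$; this is precisely the calculation sketched in the discussion following Lemma~\ref{lem:motif-family-lwbd}. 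For the case $p=n^{-o(1)}$, the exponent $2d/(d-2)$ can be pushed below $2+\epsilon$ by taking $d$ as large as $2+4/\epsilon$, and the resulting $d$-dependence is absorbed into $C_2(\epsilon)$.

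For the subgraph density, Lemma~\ref{lem:subgraph} gives $\vn(\sfM')\ge (2\en(\sfM')+1)/d$ for every nonempty $\sfM'\subseteq \sfM$, hence
\[
n^{\vn(\sfM')}p^{\en(\sfM')}\ge n^{1/d}\cdot (n^{2/d}p)^{\en(\sfM')}.
\]
When $p\ge n^{-2/d}$ the right-hand factor is at least one, so the choice $\epsilon_0=1/d$ verifies Condition~\ref{cond:subgraph}. In the first regime, $d=[2/a]$ ensures $a\le 2/d$, hence $p=n^{-a}\ge n^{-2/d}$; in the second regime, $p=n^{-o(1)}$ dominates $n^{-2/d}$ for fixed $d$ and $n$ large. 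Admissibility of $\maM(N_\sfv,N_\sfe,d)$ is thereby established in both regimes, and Theorem~\ref{thm:admissible} directly delivers the claimed error bound. The computational cost follows from the generic $O(n^{\en(\maM)})$ bound of Section~\ref{sec:general-motif}, since $\en(\maM)=N_\sfe$ uniformly.

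The main obstacle is not logical but quantitative: the constants $C_1(d)$ and $C_2(\epsilon)$ must simultaneously dominate the factor $\rho^{-2d/(d-2)}$ from the signal-score calculation and the prefactor $\tfrac{1}{2}\bigl(\tfrac{2}{ed^{d/(d-2)}(d-1)}\bigr)^{(d-2)/d}$ that Lemma~\ref{lem:motif-family-lwbd} contributes inside the exponent. Keeping these choices uniform across the admissible range of $\rho$, and ensuring that the resulting $N_\sfe$ still satisfies the growth bound $N_\sfe=o(\log n/\max(\log\log n,-\log\rho))$ required by Condition~\ref{cond:num}, is where care is needed, though no new ideas beyond Lemmas~\ref{lem:motif-family-lwbd} and~\ref{lem:subgraph} are involved.
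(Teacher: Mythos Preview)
Your proposal is correct and follows the same route as the paper: verify $N_\sfe$-admissibility of $\maM(N_\sfv,N_\sfe,d)$ via Lemmas~\ref{lem:motif-family-lwbd} and~\ref{lem:subgraph}, then invoke Theorem~\ref{thm:admissible}. One small point: the runtime bound $O(n^{N_\sfe})$ needs a short argument beyond citing the generic Section~\ref{sec:method} claim, since the statistic sums over $|\maM(N_\sfv,N_\sfe,d)|$ motifs each costing $O(n^{N_\sfv})$ time; the paper closes this by combining the upper bound in Lemma~\ref{lem:motif-family-lwbd} with $N_\sfv=2(N_\sfe+d-1)/d<N_\sfe$ and $N_\sfe=o(\log n/\log\log n)$ to absorb the family size into $n^{o(1)}$.
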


\begin{proof}
We first show that $\maT_{\maM(N_\sfv,N_\sfe,d)}$ is computable in time $O(n^{N_\sfe})$. For any $\sfM\in \maM$, since there are $\binom{n}{\vn(\sfM)}(\vn(\sfM)!)$ injections from $V(\sfM)$ to $V(\bar{G}_1)$,
the injective homomorphism number $\inj(\sfM,\bar{G}_1)$ takes $\binom{n}{\vn(\sfM)}(\vn(\sfM)!)\le n^{\vn(\sfM)}$ time for computation. Similarly, the injective homomorphism number $\inj(\sfM,\bar{G}_2)$ can be computed in time $n^{\vn(\sfM)}$. Consequently, the time complexity for ${\maT_{\maM(N_\sfv,N_\sfe,d)}}$ is bounded by $2n^{\vn(\sfM)} |{\maM(N_\sfv,N_\sfe,d)}|$, and we have \begin{align*}
    2n^{\vn(\sfM)} |{\maM(N_\sfv,N_\sfe,d)}|&\overset{\mathrm{(a)}}{\le} 2n^{\ell(d-1)+4} \pth{\frac{2(N_\sfe-d-1)}{d(d-1)}}^{\frac{d-2}{d}\cdot (N_\sfe-d-1)}\\&\overset{\mathrm{(b)}}{\le} 2n^{2(N_\sfe+d-1)/d}(N_\sfe)^{N_\sfe}\overset{\mathrm{(c)}}{\le} n^{N_\sfe},
\end{align*}
where $\mathrm{(a)}$ is due to $\vn(\sfM)=\ell(d-1)+4$ and~\eqref{eq:motif-family-lwbd} in Lemma~\ref{lem:motif-family-lwbd}; $\mathrm{(b)}$ is because $\ell(d-1)+4 = \frac{2(N_\sfe+d-1)}{d}$, $\frac{2(N_\sfe-d-1)}{d(d-1)}\le N_\sfe$, and $\frac{d-2}{d}\cdot (N_\sfe-d-1)\le N_\sfe$; $\mathrm{(c)}$ is because $N_\sfe = o(\frac{\log n}{\log \log n})$ implies $2(N_\sfe)^{N_\sfe} = n^{o(1)}$ and $\frac{2(N_\sfe+d-1)}{d}\le \frac{2N_\sfe}{3}+2<N_\sfe$.

We then show the theoretical guarantee on $\maP_0\pth{\maT_{\maM(N_\sfv,N_\sfe,d)}\ge \tau}+\maP_1\pth{\maT_{\maM(N_\sfv,N_\sfe,d)}< \tau}$. By Theorem~\ref{thm:admissible}, it suffices to show that $\maM(N_\sfv,N_\sfe,d)$ is \emph{$C$-admissible}. Since $N_\sfv\le N_\sfe$ and $N_\sfe = o\pth{\frac{\log n}{\max \pth{\log \log n, -\log \rho}}}$, taking $C = N_\sfe$ yields Condition~\ref{cond:num}. By Lemma~\ref{lem:subgraph}, for all $\sfM\in \maM(N_\sfv,N_\sfe,d)$ and $\sfM'\subseteq \sfM$, we have \begin{align*}
    n^{\vn(\sfM')}p^{\en(\sfM')}\ge n^{(2\en(\sfM')+1)/d}p^{\en(\sfM')}\ge n^{1/d},
\end{align*}
where the last inequality is because $np^{d/2}\ge 1$ by the choice of $d$. Pick $\epsilon_0 = 1/d$, we conclude the Condition~\ref{cond:subgraph} in Definition~\ref{def:admissible}. Since all $\sfM\in \maM(N_\sfv,N_\sfe,d)$ are connected, it remains to verify the Condition~\ref{cond:signal-strength} $\sum_{\sfM\in \maM}\rho^{2\en(\sfM)}\ge 400$.

We first focus on the case $p = n^{-a}$ with $0<a\le \frac{2}{3}$. By Lemma~\ref{lem:motif-family-lwbd}, \begin{align*}
    \sum_{\sfM\in \maM(N_\sfv,N_\sfe,d)}\rho^{2\en(\sfM)}\ge \frac{1}{2}\pth{\frac{2(N_\sfe-d-1)}{ed^{\frac{d}{d-2}}(d-1)}}^{\frac{d-2}{d}(N_\sfe-d-1)}\rho^{2d+2}.
\end{align*}
We first prove that there exists a constant $C(d)$ depending on $d$ such that \begin{align*}
    \frac{1}{2}(C(d))^{\frac{d-2}{d}\rho^{-2d/(d-2)}}\rho^{2d+2}\ge 400
\end{align*}
for any $0\le \rho\le 1$ and integer $d\ge 3$.
Pick $C(d) = \exp\pth{3\log 800+\frac{d(2d+2)}{d-2}}$. Then, \begin{align*}
    &~\frac{d-2}{d}\rho^{-\frac{2d}{d-2}}\log (C(d))+(2d+2)\log\rho \\=&~ \frac{d-2}{d}\rho^{-\frac{2d}{d-2}}\pth{3\log 800+\frac{d(2d+2)}{d-2}}+(2d+2)\log\rho\\\overset{\mathrm{(a)}}{\ge}&~ \frac{3(d-2)\log(800)}{d}+(2d+2)\pth{\frac{1}{\rho^2}+\log \rho}\overset{\mathrm{(b)}}{\ge}   \log 800,
\end{align*}
where $\mathrm{(a)}$ is because $\rho\le 1$ and $\frac{2d}{d-2}\ge 2$; $\mathrm{(b)}$ follows from $\log x+\frac{1}{x^2}\ge 0$ for any $x>0$ and $\frac{3(d-2)}{d}\ge 1$ for any $d\ge 3$. Therefore, we obtain $\frac{1}{2}(C(d))^{\frac{d-2}{d}\rho^{-2d/(d-2)}} \rho^{2d+2}\ge 400$. Let $C_1(d)\triangleq d+1+\frac{1}{2}e d^{d/(d-2)} (d-1) C(d)$. When $N_\sfe\ge \frac{C_1(d)}{\rho^{2d/(d-2)}}$, we have \begin{align*}
    \frac{2(N_\sfe-d-1)\rho^{2d/(d-2)}}{ed^{d/(d-2)}(d-1)} &\ge \frac{2C_1(d) - 2(d+1) \rho^{2d/(d-2)}}{ed^{d/(d-2)}(d-1)} \ge \frac{2C_1(d) - 2(d+1)}{ed^{d/(d-2)}(d-1)} = C(d)
\end{align*}
and \begin{align*}
    \frac{d-2}{d}\cdot (N_\sfe-d-1) \ge \frac{d-2}{d}\cdot\pth{\frac{d+1}{\rho^{2d/(d-2)}}-d-1} \ge \frac{d-2}{d} \rho^{-\frac{2d}{d-2}}.
\end{align*}
Therefore, \begin{align*}
    \sum_{\sfM\in \maM(N_\sfv,N_\sfe,d)}\rho^{2\en(\sfM)} &\ge \frac{1}{2} \pth{\frac{2(N_\sfe-d-1)\rho^{2d/(d-2)}}{ed^{d/(d-2)}(d-1)}}^{\frac{d-2}{d}\cdot (N_\sfe-d-1)}\rho^{2d+2}\\&\ge \frac{1}{2}(C(d))^{\frac{d-2}{d}\rho^{-2d/(d-2)}} \rho^{2d+2}\ge 400.
\end{align*} 

We then focus on the case $p = n^{o(1)}$. For any $\epsilon>0$, pick $d = \qth{\frac{4}{\epsilon}}+3$. Then $\frac{2d}{d-2}<2+\epsilon$. Let $C_2(\epsilon)\triangleq C_1\pth{\qth{\frac{4}{\epsilon}}+3}$. We have shown that when $N_\sfe\ge \frac{C_1(d)}{\rho^{2d/(d-2)}}$, we have $\sum_{\sfM\in \maM(N_\sfv,N_\sfe,d)}\rho^{2\en(\sfM)}\ge 400$.
Since $\frac{C_1(d)}{\rho^{2d/(d-2)}} \le \frac{C_2(\epsilon)}{\rho^{2+\epsilon}}$, we also have $\sum_{\sfM\in \maM(N_\sfv,N_\sfe,d)}\rho^{2\en(\sfM)}\ge 400$.
Consequently, $\maM(N_\sfv,N_\sfe,d)$ is \emph{$N_\sfe$-admissible}. By Theorem~\ref{thm:admissible}, we obtain that \begin{equation*}
\maP_0\pth{\maT_{\maM(N_\sfv,N_\sfe,d)}\ge \tau}+\maP_1\pth{\maT_{\maM(N_\sfv,N_\sfe,d)}< \tau}\le 0.05.\qedhere
\end{equation*}
\end{proof}

By Lemmas~\ref{lem:motif-family-lwbd} and~\ref{lem:subgraph}, the bounded degree motif family $\maM(N_\sfv,N_\sfe,d)$ is \emph{$N_\sfe$-admissible}.
In view of Theorem~\ref{thm:bd-degree-motif-main}, if $\rho$ is a constant, one can choose a constant $N_\sfe$ when $p \ge n^{-2/3}$, making the test statistic computable in polynomial time. Although the test statistic $\maT_{\maM(N_\sfv,N_\sfe,d)}$ suffices for correlation detection, the motifs in $\maM(N_\sfv,N_\sfe,d)$ are highly specialized and uncommon in practical scenarios; for instance, triangles and quadrilaterals are not included in this family. To obtain a more broadly applicable motif-counting statistic, we will show in Section~\ref{subsec:general-bd-degree} that counting all bounded degree motifs also suffices for correlation detection. 

\subsection{A General Admissible Statistic}\label{subsec:general-bd-degree}
In this subsection, we consider an implementable test statistic $\maT_{\maM(N_\sfe,d)}$, where $\maM(N_\sfe,d)$ denotes the set of all connected motifs with $N_\sfe$ edges and maximum degree at most $d$. In order to provide theoretical guarantee, we show that $\maT_{\maM(N_\sfe,d)}$ is  \emph{$(N_\sfe{+}1)$-admissible}. We then verify the conditions in Definition~\ref{def:admissible}:\begin{enumerate}
    \item Since all $\sfM\in \maM(N_\sfe,d)$ are connected, we have  $\vn(\sfM)\le \en(\sfM)+1\le N_\sfe+1$, where the equality holds when $\sfM$ is a tree;
    \item Since $\maM(N_\sfv,N_\sfe,d)\subseteq \maM(N_\sfe,d)$, we have $$\sum_{\sfM\in \maM(N_\sfe,d)} \rho^{2\en(\sfM)}\ge \sum_{\sfM\in \maM(N_\sfv,N_\sfe,d)} \rho^{2\en(\sfM)}\ge 400;$$
    \item For all $\sfM\in\maM(N_\sfe,d)$ and subgraph $\sfM'\subseteq \sfM$, since the maximum degree of $\sfM'$ is bounded by $d$, we have $\en(\sfM')\le \frac{d}{2}\vn(\sfM')$. Consequently, when $p\ge n^{-a}$ for some constant $a<\frac{2}{3}$, one can pick $d\ge 3$  such that $1-\frac{da}{2}>0$, yielding $n^{\vn(\sfM')} p^{\en(\sfM')}\ge (np^{d/2})^{\vn(\sfM')}\ge n^{1-\frac{da}{2}}$.
    \item For all $\sfM\in \maM(N_\sfe,d)$, $\sfM$ is connected.
\end{enumerate}

Therefore, the bounded degree motif counting statistic $\maT_{\maM(N_\sfe,d)}$ is \emph{$(N_\sfe{+}1)$-admissible} whenever $\maT_{\maM(N_\sfv,N_\sfe,d)}$ is \emph{$N_\sfe$-admissible} and $p\ge n^{-a}$ for some constant $a<\frac{2}{3}$. Specifically, we have the following Theorem.
\begin{theorem}\label{thm:bd-degree-motif-general}
      If $p=n^{-a}$ with constant $0<a< \frac{2}{3}$, then for $d=3\cdot \indc{\frac{2}{5}<a<\frac{2}{3}}+\pth{\qth{\frac{2}{a}}-1}\cdot \indc{0<a\le \frac{2}{5}}$, when $N_\sfe=o\pth{\frac{\log n}{\max\pth{\log\log n,-\log\rho}}}$ and $N_\sfe\ge \frac{C_1(d)}{\rho^{2d/(d-2)}}$ for some constant $C_1(d)$ depending on $d$,\begin{align*}
        \maP_0\pth{\maT_{\maM(N_\sfe,d)}\ge \tau}+\maP_1\pth{\maT_{\maM(N_\sfe,d)}<\tau}\le 0.05.
    \end{align*}

    If $p=n^{-o(1)}$, then for any constant $\epsilon>0$, when $N_\sfe = o\pth{\frac{\log n}{\max\pth{\log\log n,-\log\rho}}}$ and $N_\sfe\ge \frac{C_2(\epsilon)}{\rho^{2+\epsilon}}$ for some constant $C_2(\epsilon)$ depending on $\epsilon$,\begin{align*}
        \maP_0\pth{\maT_{\maM(N_\sfe,d)}\ge \tau}+\maP_1\pth{\maT_{\maM(N_\sfe,d)}<\tau}\le 0.05.
    \end{align*}

    Furthermore, $\maT_{\maM(N_\sfe,d)}$ is  computable in time $O(n^{N_\sfe+1+o(1)})$.
\end{theorem}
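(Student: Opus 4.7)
The plan is to reduce Theorem~\ref{thm:bd-degree-motif-general} to the general criterion of Theorem~\ref{thm:admissible} by proving that $\maM(N_\sfe,d)$ is $(N_\sfe{+}1)$-admissible in the sense of Definition~\ref{def:admissible} under the stated hypotheses. The four verifications are sketched immediately before the theorem statement; my job is to make those sketches precise and, in particular, to justify the somewhat delicate choice of $d$ that makes the constant $\epsilon_0$ in Condition~\ref{cond:subgraph} strictly positive.

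First, Conditions~\ref{cond:connect} and~\ref{cond:num} are immediate: connectivity is by definition of $\maM(N_\sfe,d)$, and every connected graph with $N_\sfe$ edges has at most $N_\sfe+1$ vertices, so $C=N_\sfe+1=o\pth{\log n/\max(\log\log n,-\log\rho)}$ follows from the hypothesis on $N_\sfe$. For Condition~\ref{cond:signal-strength}, I simply use the inclusion $\maM(N_\sfv,N_\sfe,d)\subseteq\maM(N_\sfe,d)$: by the same application of Lemma~\ref{lem:motif-family-lwbd} already used in the proof of Theorem~\ref{thm:bd-degree-motif-main}, the assumption $N_\sfe\ge C_1(d)/\rho^{2d/(d-2)}$ (respectively $N_\sfe\ge C_2(\epsilon)/\rho^{2+\epsilon}$ in the dense case) forces $\sum_{\sfM\in\maM(N_\sfe,d)}\rho^{2\en(\sfM)}\ge\sum_{\sfM\in\maM(N_\sfv,N_\sfe,d)}\rho^{2\en(\sfM)}\ge 400$.

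The real step is Condition~\ref{cond:subgraph}. For any subgraph $\sfM'\subseteq\sfM\in\maM(N_\sfe,d)$, the maximum degree of $\sfM'$ is at most $d$, so the handshake lemma gives only the non-strict bound $2\en(\sfM')\le d\,\vn(\sfM')$ (in contrast to the strict bound of Lemma~\ref{lem:subgraph} for the special family). Hence $n^{\vn(\sfM')}p^{\en(\sfM')}\ge (np^{d/2})^{\vn(\sfM')}\ge n^{1-da/2}$ when $p\ge n^{-a}$, and I need $1-da/2>0$, i.e., $d<2/a$ strictly. This is why $d=[2/a]$ is not always safe: when $2/a$ is an integer one gets $da/2=1$ and the exponent collapses. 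For $2/5<a<2/3$ I take $d=3$, so $3a/2<1$; for $0<a\le 2/5$ I take $d=[2/a]-1$, which gives $da\le 2-a$ and hence the constant $\epsilon_0=1-da/2\ge a/2>0$. In the dense regime $p=n^{-o(1)}$, any fixed $d\ge 3$ eventually works, and choosing $d=2+\lceil 2/\epsilon\rceil$ converts the signal-strength bound $2d/(d-2)\le 2+\epsilon$ into the stated condition $N_\sfe\ge C_2(\epsilon)/\rho^{2+\epsilon}$.

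With admissibility established, Theorem~\ref{thm:admissible} immediately yields the error bound $\maP_0(\maT_{\maM(N_\sfe,d)}\ge\tau)+\maP_1(\maT_{\maM(N_\sfe,d)}<\tau)\le 0.05$. The runtime follows from the naive enumeration bound: each $\inj(\sfM,\bar G_i)$ requires $O(n^{\vn(\sfM)})\le O(n^{N_\sfe+1})$ work, and the number of non-isomorphic connected motifs with $N_\sfe$ edges and max degree $d$ is a function of $N_\sfe$ and $d$ that grows only subpolynomially in $n$ under the imposed growth rate on $N_\sfe$, giving the claimed $O(n^{N_\sfe+1+o(1)})$ cost. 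The main obstacle in the whole argument is precisely the non-strictness of the handshake bound, which both forces the shrinkage of $d$ from $[2/a]$ to $[2/a]-1$ (and excludes the boundary $a=2/3$) and explains the $+1$ in the exponent of the runtime compared to Theorem~\ref{thm:bd-degree-motif-main}.
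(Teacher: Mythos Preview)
Your proposal is correct and follows essentially the same route as the paper's proof: verify $(N_\sfe{+}1)$-admissibility of $\maM(N_\sfe,d)$ via the four conditions of Definition~\ref{def:admissible} and then invoke Theorem~\ref{thm:admissible}, with the crux being the choice of $d$ so that $1-da/2>0$ strictly (hence $d=3$ for $2/5<a<2/3$ and $d=[2/a]-1$ for $0<a\le 2/5$, exactly as in the paper). One small arithmetic slip: in the dense case your choice $d=2+\lceil 2/\epsilon\rceil$ yields only $2d/(d-2)=2+4/(d-2)\le 2+2\epsilon$, not $2+\epsilon$; take $d\ge 2+\lceil 4/\epsilon\rceil$ instead (the paper uses $d=[4/\epsilon]+3$).
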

\begin{proof}

We first show that the statistic $\maT_{\maM(N_\sfv,N_\sfe,d)}$ is computable in time $O(n^{N_\sfe+1+o(1)})$. Recall that the injective homomorphism number $\inj(\sfM,\bar{G}_1),\inj(\sfM,\bar{G}_2)$ can be computed in time $2n^{\vn(\sfM)}$. Since for connected  motif $\sfM$ with $N_\sfe$ edges, the number of vertices is bounded by $N_\sfe+1$, the total computation time is then bounded by $2n^{N_\sfe+1}|\maM(N_\sfe,d)|$. We then upper bound $|\maM(N_\sfe,d)|$. We note that \begin{align*}
    |\maM(N_\sfe,d)|
\le \sum_{N_\sfv=1}^{N_\sfe+1}\binom{\binom{N_\sfv}{2}}{N_\sfe}
\le (N_\sfe+1)\binom{\binom{N_\sfe+1}{2}}{N_\sfe}
\le (N_\sfe+1)\pth{\frac{e(N_\sfe+1)}{2}}^{N_\sfe},
\end{align*}
where the first inequality is because $1\le N_\sfv\le N_\sfe+1$ and there are at most $\binom{N_\sfv}{2}$ edges with $N_\sfv$ vertices; the last inequality is because $\binom{m_1}{m_2}\le\pth{\frac{em_1}{m_2}}^{m_2}$ for all $m_1,m_2\in \mathbb{N}$. Since $N_e = o\pth{\frac{\log n}{\log \log n}}$, we have $(N_\sfe+1)\pth{\frac{e(N_\sfe+1)}{2}}^{N_\sfe} = n^{o(1)}$. Consequently, the overall computation time is $n^{N_\sfe+1+o(1)}$.

We then show the theoretical guarantee. We prove that $\maM(N_\sfe,d)$ is \emph{$(N_\sfe{+}1)$-admissible}.
For Condition~\ref{cond:num}, since there are at most $N_\sfe+1$ vertices in a connected motif with $N_\sfe$ edges, choosing $C = N_\sfe +1$ satisfies the Condition~\ref{cond:num}. Since $\maM(N_\sfv,N_\sfe,d)\subseteq \maM(N_\sfe,d)$ and $\maM(N_\sfv,N_\sfe,d)$ is \emph{$N_\sfe$-admissible}, we have \begin{align*}
    \sum_{\sfM\in \maM(N_\sfe,d)}\rho^{2\en(\sfM)}\ge  \sum_{\sfM\in \maM(N_\sfv, N_\sfe,d)}\rho^{2\en(\sfM)}\ge 400.
\end{align*}
For Condition~\ref{cond:subgraph}, since the maximum degree of all $\sfM\in\maM$ is bounded by $d$, we have $\en(\sfM')\le \frac{d}{2}\vn(\sfM')$ for all $\sfM'\subseteq \sfM$. When $p=n^{o(1)}$, we have $n^{\vn(\sfM')}p^{\en(\sfM')}\ge n^{1/2}$.
When $p = n^{-a}$ with constant $0<a\le \frac{2}{5}$, we pick $d = \qth{\frac{2}{a}}-1$. Since $d\le \frac{2}{a}-1$,  we have \begin{align*}
    n^{\vn(\sfM')}p^{\en(\sfM')}&\ge n^{\vn(\sfM')}p^{d\vn(\sfM')/2}\ge (np^{(2/a-1)/2})^{\vn(\sfM')}\\&=(n^{a/2})^{\vn(\sfM')}\ge n^{a/2}.
\end{align*}
When $p=n^{-a}$ with constant $\frac{2}{5}<a<\frac{2}{3}$, we pick $d=3$.
Consequently, 
\begin{align*}
    n^{\vn(\sfM')}p^{\en(\sfM')}&\ge n^{\vn(\sfM')}p^{d\vn(\sfM')/2}\\&=(n^{1-3a/2})^{\vn(\sfM')}\ge n^{1-3a/2}.
\end{align*}
Picking $\epsilon_0 = \min\pth{\frac{a}{2},1-\frac{3a}{2}}$ yields the Condition~\ref{cond:subgraph} in Definition~\ref{def:admissible}. Since all $\sfM\in \maM(N_\sfe,d)$ are connected, we conclude that $\maM(N_\sfe,d)$ is \emph{$C$-admissible}. By Theorem~\ref{thm:admissible}, we have \begin{equation*}
    \maP_0(\maT_{\maM(N_\sfe,d)}\ge\tau)+\maP_1(\maT_{\maM(N_\sfe,d)}<\tau)\le 0.05.\qedhere
\end{equation*}
\end{proof}
In view of Theorem~\ref{thm:bd-degree-motif-general}, we have shown that the bounded degree motif counting statistic $\maT_{\maM(N_\sfe,d)}$ is sufficient for detection when $p \ge n^{-a}$ for some constant $0 < a < \frac{2}{3}$. 
The parameter $d$ is selected to ensure the Condition~\ref{cond:subgraph} in Definition~\ref{def:admissible}.
Moreover, for constant correlation $\rho$, one can choose a constant $N_\sfe$ satisfying the required conditions, making the test statistic computable in polynomial time $O(n^{N_\sfe+1+o(1)})$. 
Compared with the statistic $\maT_{\maM(N_\sfv,N_\sfe,d)}$ in Theorem~\ref{thm:bd-degree-motif-main}, the theoretical guarantee for $\maT_{\maM(N_\sfe,d)}$ requires a slightly stronger condition on the edge probability, namely $p \ge n^{-a}$ with $0<a<\frac{2}{3}$. Nevertheless, $\maT_{\maM(N_\sfe,d)}$ is better aligned with applications, as the bounded degree motif family $\maM(N_\sfe,d)$ includes many commonly occurring motifs. 
Although the condition $p \ge n^{-2/3}$ plays an important role in the analysis of the motif-counting statistics $\maT_{\maM(N_\sfv, N_\sfe,d)}$ and $\maT_{\maM(N_\sfe,d)}$, it is not a necessary requirement for admissible motif families. In fact, for sparser graphs with $p < n^{-2/3}$, \cite{mao2024testing} showed that tree counting remains successful for detection under an additional constraint $\rho^2 \ge \alpha \approx 0.338$; moreover, the corresponding tree-counting statistic is also admissible. More generally, in such sparser regimes, detection remains feasible as long as the admissibility is satisfied.

Indeed, one could consider a larger motif family than $\maM(N_\sfe,d)$, such as the family containing all bounded degree motifs with at most $N_\sfe$ edges rather than exactly $N_\sfe$. While this would increase the signal and the theoretical guarantee could be established similarly, such a choice may introduce practical challenges.
Recall that we select the weight $\omega_\sfM = \frac{\rho^{\en(\sfM)} (n-\vn(\sfM))!}{n!\,(p(1-p))^{\en(\sfM)}\aut(\sfM)}.$
Since the correlation parameter $\rho$ is challenging to estimate in practice, it is natural to restrict attention to motifs with the same number of edges. Notably, the family $\maM(N_\sfe,d)$ satisfies this property and remains practical. We will demonstrate in Section~\ref{sec:num-results} that the statistic $\maT_{\maM(N_\sfe,d)}$ performs well on synthetic data.





Theorems~\ref{thm:bd-degree-motif-main} and~\ref{thm:bd-degree-motif-general} provide sufficient conditions for detection, leading to a  degree-$O(\rho^{-2d/(d-2)})$ algorithm  when $p = n^{-a}$ for any $0<a\le \tfrac{2}{3}$. When $p=n^{-o(1)}$, there exists a degree-$O(\rho^{-2-\epsilon})$ algorithm for any $\epsilon>0$.
When the maximum degree satisfies $d \le 2$, the  connected motifs reduce to paths and cycles, which provide limited structural information for detection. Hence, we consider bounded degree motifs with $d \ge 3$; this choice underlies the coefficient $\tfrac{2}{3}$ appearing in the detection threshold.
Indeed, the size of the motif family $\maM$ plays a crucial role in controlling errors. Specifically, in order to achieve detection for any constant $\rho$, it is necessary that $|\maM|\asymp \en(\maM)^{\en(\maM)}$. As a result, it is natural to consider the motifs with maximum degree $d\ge 3$. 
The condition  $\en(\maM) = o(\frac{\log n}{\max\sth{\log \log n,-\log \rho}})$, $\en(\maM)\ge \frac{C_1\pth{d}}{\rho^{ {2d/(d-2)}}}$, and $\en(\maM)\ge \frac{C_2(\epsilon)}{\rho^{2+\epsilon}}$ implies a necessary constraint on the correlation coefficient, namely, $\rho \gtrsim \frac{1}{\log n}$. In fact, counting motifs with $\en(\maM)$ edges in $\bar{G}_1$ and $\bar{G}_2$ corresponds to a degree-$\en(\maM)$ polynomial algorithm.
In particular, when $\rho$ is a constant, the time complexity remains polynomial. 
For  $p\ge n^{-1/3}$ and any constant $\rho$, \cite{barak2019nearly}
achieved detection criteria by counting balanced graphs, whereas~\cite{mao2024testing} succeeded when $p\ge n^{-1+o(1)}$ and $\rho \ge \sqrt{\alpha}$ by counting trees.
Our bounded degree counting method bridges the gap between these regimes by establishing detection for $p\in [n^{-2/3},n^{-1/3}]$ with constant $\rho <\sqrt{\alpha}$.
Beyond motif counting approaches, \cite{ding2023polynomial} proposed an iterative method that can be applied to both detection and the recovery of $\pi$ under $\maH_1$, achieving reliable performance for any constant $\rho$ when $p \ge n^{-1+\delta}$ with a small constant $\delta$. We will show in Section~\ref{sec:refined-results} that detection remains achievable in this regime via a more refined construction.


Indeed, our results align with computational hardness conjectures in this problem.
It has been postulated in~\cite{hopkins2017efficient,hopkins2018statistical} that the framework of low-degree polynomial algorithms  captures the hardness of detecting and recovering latent structures. Based on the low-degree conjecture, \cite{ding2023low} showed that any degree-$O(\rho^{-1})$ polynomial algorithm fails for detection with vanishing error when $p = n^{-\alpha}$ for any constant $\alpha\in (0,1)$. The more recent work~\cite{li2025algorithmic} provided evidence on the detection problem and conjectured that any degree-$o(\rho^{-1})$ polynomial algorithm fails for detection with constant error. In summary, our bounded degree motif counting statistic provides a polynomial-time algorithm that aligns with the low-degree conjecture, achieving a gap of $\frac{2d}{d-2}$ in the sparse regime and a gap of $2+\epsilon$ in the dense regime in terms of the exponent of $1/\rho$.

\begin{remark}[Trade-off between computation and statistical efficiency]
As shown in Theorems~\ref{thm:bd-degree-motif-main} and~\ref{thm:bd-degree-motif-general}, there exist conditions on $N_\sfe$ to satisfy the detection criterion $0.05$. 
To achieve stronger statistical efficiency, it is necessary to count motifs with a larger number of edges. However, if the detection criterion satisfies $$\limsup_{n\to\infty}\qth{\maP_0(\maT\ge \tau)+\maP_1(\maT<\tau)} = o(1),$$ then $N_\sfe = \omega(1)$ as $n \to \infty$, and the corresponding test statistic is no longer computable in polynomial time. Consequently, there exists a trade-off between computation time and statistical efficiency.
\end{remark}


\section{Extension to the Sparser Regime via Refined Motif Classes}\label{sec:refined-results}

In Section~\ref{sec:bd-sub-count}, we show that counting a bounded degree motif family suffices for correlation detection for any constant $\rho=\Omega(1)$ when $p\ge n^{-2/3}$; see Theorem~\ref{thm:bd-degree-motif-main}. This exponent $2/3$ is not intrinsic. Indeed, it stems from the technical Condition~\ref{cond:subgraph} in Definition~\ref{def:admissible}, which requires $n^{\vn(\sfM')}p^{\en(\sfM')}\ge n^{\epsilon_0}$ for all $\sfM\in\maM$ and all nonempty subgraphs $\emptyset\neq \sfM'\subseteq \sfM$. When choosing $\maM(N_\sfv,N_\sfe,d)$ (or $\maM(N_\sfe,d)$), we can only control the density ratio $\en(\sfM')/\vn(\sfM')$ by a constant close to $3/2$. Consequently, enforcing Condition~\ref{cond:subgraph} yields the constraint $p\ge n^{-2/3}$. On the other hand, it follows from~\cite{ding2023low} that when $p=n^{-1+o(1)}$, there is evidence from the low-degree perspective that the correlation constraint $\rho^2\ge \alpha\approx 0.338$ is tight. It is then natural to ask:
\begin{center}
\emph{Can we design a finer motif family so that motif counting succeeds when $p\ge n^{-1+\delta}$?}
\end{center}

To extend the regime $p\ge n^{-2/3}$ to the sparser regime $p\ge n^{-1+\delta}$ for any given constant $\delta>0$, we need to construct a motif family $\maM$ such that, for every $\sfM\in\maM$ and every nonempty subgraph $\emptyset\neq \sfM'\subseteq \sfM$, the edge--vertex density satisfies $\en(\sfM')/\vn(\sfM')\le 1/(1-\delta)$, while keeping the family size at most $(c_1 N_\sfe)^{c_2 N_\sfe}$. Recall the construction in Figure~\ref{fig:bd-degree-graph} for $d=3$. There, most vertices have degree $3$, which leads to a relatively large edge--vertex density. In what follows, we construct a new motif family in which degree-$3$ vertices are rare. In fact, their proportion is controlled by $\delta$ and all remaining vertices have degree $1$ or $2$.

We then introduce the construction for $\overline{\maM}(N_\sfv,N_\sfe,3)$ such that $N_\sfv = t(6\ell -1)-2\ell+3$ and $N_\sfe= t(6\ell-1)+2$, where $t,\ell\ge 1$ are integers.
See Figure~\ref{fig:refined-motif}.
For simplicity, we denote the dashed edge by a path of $t+1$ vertices connecting two vertices. For instance, there are $t-1$ vertices between $v_{1,1}$ and $v_{1,2}$. For each $\overline{\maM}(N_\sfv,N_\sfe,3)$, the construction consists of the following three steps:\begin{enumerate}
    \item Construct $\ell$ cycles of length $4t$ (in red color), with $4\ell$ special vertex $v_{i,1},v_{i,2},v_{i,3},v_{i,4}$ for $1\le i\le \ell$ and remaining normal vertices. The special vertices will be of degree 3 and the remaining vertices will be of degree 2. There are $t-1$ vertices between $(v_{i,1},v_{i,2}),(v_{i,2},v_{i,3}),(v_{i,3},v_{i,4}),(v_{i,4},v_{i,1})$, respectively. 
    \item Connect $v_{i,3}$ and $v_{i+1,1}$ with a dashed edge (i.e., there are $t-1$ vertices between them) and connect $(v_{0,0},v_{1,1}),(v_{\ell,3},v_{0,1})$, where $v_{0,0},v_{0,1}$ are two extremity vertices.
    \item\label{item:refined-step3} Arbitrarily select a permutation $\sigma:[\ell]\mapsto [\ell]$. Connect $v_{i,2}$ to $v_{\sigma(i),4}$ with a dashed edge (in blue color) for all $1\le i\le \ell$.
\end{enumerate}

\begin{figure}[htbp]
  \begin{center}
    \begin{tikzpicture}[yscale=1.5, xscale=1.5, line cap=round, line join=round]

      \def\stubfrac{0.18}
      \newcommand{\stubdashed}[2]{%
        \draw[line width=0.4mm]
          (#1) -- ($(#1)!\stubfrac!(#2)$);
        \draw[densely dashed, line width=0.4mm]
          ($(#1)!\stubfrac!(#2)$) -- ($(#2)!\stubfrac!(#1)$);
        \draw[line width=0.4mm]
          ($(#2)!\stubfrac!(#1)$) -- (#2);
      }
      \newcommand{\redstubdashed}[2]{%
        \draw[red, line width=0.35mm]
          (#1) -- ($(#1)!\stubfrac!(#2)$);
        \draw[red, densely dashed, line width=0.35mm]
          ($(#1)!\stubfrac!(#2)$) -- ($(#2)!\stubfrac!(#1)$);
        \draw[red, line width=0.35mm]
          ($(#2)!\stubfrac!(#1)$) -- (#2);
      }
      \newcommand{\bluestubdashed}[2]{%
        \draw[blue, line width=0.35mm]
          (#1) -- ($(#1)!\stubfrac!(#2)$);
        \draw[blue, densely dashed, line width=0.35mm]
          ($(#1)!\stubfrac!(#2)$) -- ($(#2)!\stubfrac!(#1)$);
        \draw[blue, line width=0.35mm]
          ($(#2)!\stubfrac!(#1)$) -- (#2);
      }

      \coordinate (L) at (0,0);
      \coordinate (R) at (8.9,0); 

      \coordinate (D1l) at (1.0,0);
      \coordinate (D1t) at (1.6,0.95);
      \coordinate (D1r) at (2.2,0);
      \coordinate (D1b) at (1.6,-0.95);

      \coordinate (D2l) at (3.2,0);
      \coordinate (D2t) at (3.8,0.95);
      \coordinate (D2r) at (4.4,0);
      \coordinate (D2b) at (3.8,-0.95);

      \coordinate (D3l) at (6.7,0);
      \coordinate (D3t) at (7.3,0.95);
      \coordinate (D3r) at (7.9,0);
      \coordinate (D3b) at (7.3,-0.95);

      \node[above,scale=0.8,yshift=1.5pt] at (L) {$v_{0,0}$};
      \node[above,scale=0.8,yshift=1.5pt] at (R) {$v_{0,1}$};

      \node[right, scale=0.8, xshift=1pt] at (D1l) {$v_{1,1}$};
      \node[above,scale=0.8, yshift=1.5pt] at (D1t) {$v_{1,2}$};
      \node[left,scale=0.8, xshift=-1pt] at (D1r) {$v_{1,3}$};
      \node[below,scale=0.8, yshift=-1.5pt] at (D1b) {$v_{1,4}$};

      \node[right, scale=0.8, xshift=1pt] at (D2l) {$v_{2,1}$};
      \node[above,scale=0.8, yshift=1.5pt] at (D2t) {$v_{2,2}$};
      \node[left,scale=0.8, xshift=-1pt] at (D2r) {$v_{2,3}$};
      \node[below,scale=0.8, yshift=-1.5pt] at (D2b) {$v_{2,4}$};

      \node[right, scale=0.8, xshift=1pt] at (D3l) {$v_{\ell,1}$};
      \node[above,scale=0.8, yshift=1.5pt] at (D3t) {$v_{\ell,2}$};
      \node[left,scale=0.8, xshift=-1pt] at (D3r) {$v_{\ell,3}$};
      \node[below,scale=0.8, yshift=-1.5pt] at (D3b) {$v_{\ell,4}$};

      \draw[line width=0.4mm] (L) -- (D1l);
      \draw[line width=0.4mm] (D3r) -- (R);

      \redstubdashed{D1l}{D1t}
      \redstubdashed{D1t}{D1r}
      \redstubdashed{D1r}{D1b}
      \redstubdashed{D1b}{D1l}

      \stubdashed{D1r}{D2l}

      \redstubdashed{D2l}{D2t}
      \redstubdashed{D2t}{D2r}
      \redstubdashed{D2r}{D2b}
      \redstubdashed{D2b}{D2l}

      \node at ($(D2r)!0.5!(D3l)$) {$\cdots$};

      \draw[line width=0.4mm] (D2r) -- ++(0.35,0);
      \draw[line width=0.4mm] (D3l) -- ++(-0.35,0);

      \redstubdashed{D3l}{D3t}
      \redstubdashed{D3t}{D3r}
      \redstubdashed{D3r}{D3b}
      \redstubdashed{D3b}{D3l}

      \bluestubdashed{D1t}{D2b}
      \bluestubdashed{D1b}{D2t}
      \bluestubdashed{D3t}{D3b}

      \draw (L) node[circle,draw,fill=white,inner sep=1.1pt]{};
      \draw (R) node[circle,draw,fill=white,inner sep=1.1pt]{};

      \foreach \p in {D1l,D1t,D1r,D1b,D2l,D2t,D2r,D2b,D3l,D3t,D3r,D3b}{
        \fill (\p) circle (1.25pt);
      }

    \end{tikzpicture}
  \end{center}
  \caption{A special motif family $\overline{\maM}(N_\sfv,N_\sfe,d)\subseteq \maM(N_\sfe,d)$ with $d=3.$ Each dashed edge denotes a path including $t-1$ nodes.}
  \label{fig:refined-motif}
\end{figure}

By Step~\ref{item:refined-step3}, there are at most $\ell!$ choices of $\sfM\in \overline{\maM}(N_\sfv,N_\sfe,3)$. To lower bound $|\overline{\maM}(N_\sfv,N_\sfe,3)|$, it remains to estimate the number of automorphisms for $\sfM\in \overline{\maM}(N_\sfv,N_\sfe,3)$. We provide a lower bound using an argument similar to the proof of Lemma~\ref{lem:motif-family-lwbd}. 
Indeed, once $\{v_{i,1},v_{i,2},v_{i,3}\}_{1\le i\le \ell}$ and $v_{0,0},v_{0,1}$ are specified, the vertices $\{v_{i,4}\}_{1\le i\le \ell}$ are uniquely determined, since $v_{i,1}$ and $v_{i,3}$ have degree $3$, and the only remaining degree-$3$ vertex adjacent to both of them is $v_{i,4}$. Therefore, it suffices to count the possible images of $\{v_{i,1},v_{i,2},v_{i,3}\}_{1\le i\le \ell}$ together with $v_{0,0},v_{0,1}$. Note that there are exactly two degree-$1$ vertices, hence there are only two choices for the images of $v_{0,0}$ and $v_{0,1}$. Moreover, for each $v_{i,j}$ with $1\le i\le \ell$ and $1\le j\le 3$, the image of the successor vertex (i.e., $v_{i,j+1}$ for $j=1,2$ and $v_{i+1,1}$ for $j=3$) has at most two possible choices. Consequently, each $\sfM\in\overline{\maM}(N_\sfv,N_\sfe,3)$ has at most $2^{3\ell+1}$ automorphisms. Since $(\ell/e)^{\ell}\le \ell!\le \ell^\ell$, we conclude that
\begin{align}\label{eq:refined-construct}
    \frac{1}{2}\Big(\frac{\ell}{8e}\Big)^{\ell}\le \frac{\ell!}{2^{3\ell+1}}\le |\overline{\maM}(N_\sfv,N_\sfe,3)|\le \ell!\le \ell^\ell.
\end{align}
Pick $t=\lfloor \delta^{-1}\rfloor+1$, where $\lfloor x\rfloor$ denotes the greatest integer less than or equal to $x$. We first introduce the following lemma on $\en(\sfM')/\vn(\sfM')$.
\begin{lemma}\label{lem:refined-aut-lwbd}
    For any $\emptyset\neq \sfM'\subseteq \sfM$ and $\sfM\in \overline{\maM}(N_\sfv,N_\sfe,3)$,  $\en(\sfM')/\vn(\sfM')\le 1+\delta$.
\end{lemma}
\begin{proof}
    We separate it into two cases.

    \textbf{Case 1: $\sfM'$ is a tree.} We note that for any tree $\sfM'$, we have $\en(\sfM') = \vn(\sfM')-1$, and thus $\en(\sfM')/\vn(\sfM')<1<1+\delta$.

    \textbf{Case 2: $\sfM'$ contains at least one cycle.}
    If $\sfM'$ is disconnected, we may ignore all tree components since they satisfy $\en/\vn<1$ and can only decrease the ratio $\en(\sfM')/\vn(\sfM')$.
    Therefore, it suffices to consider the connected component of $\sfM'$ that contains a cycle, and we may assume without loss of generality that $\sfM'$ is connected and contains a cycle.  
    
    Let $\maV\triangleq \sth{v_{i,j}:1\le i\le \ell,1\le j\le 4}$.
    By construction, any two  vertices in $\maV$ are well separated: along any path connecting them, there are at least $t-1$ intermediate vertices (equivalently, their graph distance is at least $t$). Consequently, any cycle must contain at least 3 vertices in $\maV$, and the segments between consecutive such vertices contribute at least $t$ vertices each, so the shortest possible cycle has length at least $3t$, and thus $\vn(\sfM')\ge 3t$. Moreover, in any subgraph $\sfM'\subseteq \sfM$, each degree-$3$ vertex requires at least $t$ vertices (itself plus at least $t-1$ separating vertices), hence the number of degree-$3$ vertices in $\sfM'$ is at most $\lfloor \vn(\sfM')/t\rfloor+1$. Since $2\en(\sfM') = \sum_{v\in V(\sfM')}\operatorname{deg}(v)$, where $\operatorname{deg}(v)$ denotes the degree of vertex $v$, we then have \begin{align*}
        2\en(\sfM') = \sum_{v\in V(\sfM')}\operatorname{deg}(v)\le  \sum_{v\in V(\sfM')}(2+\indc{\operatorname{deg}(v)=3})\le 2\vn(\sfM')+\vn(\sfM')/t+1, 
    \end{align*}
    where the last inequality uses $\sum_{v\in V(\sfM')}\indc{\operatorname{deg}(v)=3}\le \lfloor \vn(\sfM')/t\rfloor+1\le  \vn(\sfM')/t+1$. Since $\vn(\sfM')\ge 3t$, we then obtain \begin{align*}
        \frac{\en(\sfM')}{\vn(\sfM')}\le 1+\frac{1}{2t}+\frac{1}{2\vn(\sfM')}\le 1+\frac{1}{2t}+\frac{1}{6t} = 1+\frac{2}{3t}\le 1+\delta,
    \end{align*}
    where the last inequality is because $t = \lfloor \delta^{-1}\rfloor+1\ge \delta^{-1}$.
\end{proof}
We are now ready to prove the main theorem in this section, which also implies Theorem~\ref{thm:main}.
\begin{theorem}\label{thm:refined-main-thm}
    Assume that $p\ge n^{-1+\delta}$ with some constant $\delta>0$. When $N_\sfe = o\pth{\frac{\log n}{\max(\log\log n,-\log \rho)}}$ and $N_\sfe \ge \frac{C(\delta)}{\rho^{24/\delta}}$ for a constant $C(\delta)$, then there exists a threshold $\tau$ such that 
    \begin{align*}
        \maP_0(\maT_{\overline{\maM}(N_\sfv,N_\sfe,3)}\ge \tau)+\maP_1(\maT_{\overline{\maM}(N_\sfv,N_\sfe,3)}< \tau)\le 0.05.
    \end{align*}
\end{theorem}

\begin{proof}
    By Theorem~\ref{thm:admissible}, it suffices to show that $\overline{\maM}(N_\sfv,N_\sfe,3)$ is \emph{$C-$admissible}. Conditions~\ref{cond:connect} and~\ref{cond:num} follow from the construction of $\overline{\maM}(N_\sfv,N_\sfe,3)$ and $N_\sfv\le N_\sfe =o\pth{\frac{\log n}{\max(\log\log n,-\log \rho)}}$.
    By Lemma~\ref{lem:refined-aut-lwbd}, for all $\sfM\in \maM$ and $\emptyset\neq \sfM'\subseteq \sfM$, we have \begin{align*}
        n^{\vn(\sfM')}p^{\en(\sfM')}\ge n^{\vn(\sfM')}p^{(1+\delta)\vn(\sfM')}\ge n^{\vn(\sfM')\delta^2}\ge n^{\delta^2}.
    \end{align*}
    Taking $\epsilon_0 = \delta^2$ yields Condition~\ref{cond:subgraph}. It remains to verify Condition~\ref{cond:signal-strength}. Recall that $N_\sfe = t(6\ell-1)+2$ and $t = \lfloor \delta^{-1}\rfloor+1$. Thus $\ell\ge \frac{N_\sfe}{6t}\ge \frac{\delta N_\sfe}{12}$. By~\eqref{eq:refined-construct},\begin{align*}
        \sum_{\sfM\in \overline{\maM}(N_\sfv,N_\sfe,3)} \rho^{2\en(\sfM)}\ge |\overline{\maM}(N_\sfv,N_\sfe,3)|\rho^{2N_\sfe}\ge \frac{1}{2}\pth{\frac{\delta N_\sfe}{96 e}}^{\delta N_\sfe /12}\rho^{2N_\sfe} = \frac{1}{2}\Big[\Big(\frac{\delta N_\sfe}{96e}\Big)^{\delta /12}\rho^2\Big]^{N_\sfe}.
    \end{align*}
    Consequently, when $N_\sfe\ge (800/\rho^2)^{12/\delta}\cdot 96e/\delta$, we have $\sum_{\sfM\in \overline{\maM}(N_\sfv,N_\sfe,3)} \rho^{2\en(\sfM)}\ge 400$, which yields Condition~\ref{cond:signal-strength}. We then prove the theorem.
\end{proof}

In view of Theorem~\ref{thm:refined-main-thm}, we obtain a polynomial-time test for any constant $\rho=\Omega(1)$ when $p\ge n^{-1+\delta}$. Moreover, while keeping $N_{\sfe}$ as a fixed constant, the numerical constant $0.05$ in the risk bound can be replaced by any prescribed constant. To drive the testing error to $o(1)$ as $n\to\infty$, however, one would need to take $N_{\sfe}=\omega(1)$; the resulting procedure is no longer polynomial-time, since the running time scales as $O(n^{N_{\sfe}})$. There is substantial evidence suggesting that this limitation is intrinsic to our construction. We briefly summarize two such pieces of evidence below.

The treewidth of a motif $\sfM$, denoted by $\operatorname{tw}(\sfM)$, is a standard measure of how tree-like $\sfM$ is (in particular, $\operatorname{tw}(\sfM)=1$ for forests). 
Many subgraph-counting algorithms (e.g., color-coding with dynamic programming on a tree-decomposition) have running time polynomial in $n$ only when $\operatorname{tw}(\sfM)=O(1)$. In our setting, $N_{\sfe}=\omega(1)$ implies $\ell=\omega(1)$, and the motifs in Figures~\ref{fig:bd-degree-graph} and~\ref{fig:refined-motif} satisfy $\operatorname{tw}(\sfM)\to\infty$. This is consistent with general hardness evidence for parameterized subgraph counting when the pattern treewidth is unbounded (see, e.g.,~\cite{meeks2016challenges}).

On the other hand, ~\cite{flum2004parameterized} proved that counting $k$-cycles parameterized by $k$ is $\#\mathrm{W}[1]$-hard, providing evidence against substantially faster exact counting algorithms of the form $f(k)\cdot n^{O(1)}$. In our setting, $N_{\sfe}=\omega(1)$ implies $\ell=\omega(1)$. Moreover, for a uniformly random permutation $\sigma\in S_\ell$, the longest cycle has length $\Theta(\ell)$ with high probability (see, e.g.,~\cite{shepp1966ordered}), so with probability $1-o(1)$ the orbit decomposition of $\sigma$ contains a cycle of length $\omega(1)$. Consequently, our construction in Figure~\ref{fig:refined-motif} typically contains long cyclic structures when $\ell=\omega(1)$, which further aligns with the above hardness evidence for accelerating exact counting in this regime.

Finally, it is natural to ask why we focus on \emph{$C$-admissible} bounded degree motifs rather than broader bounded-treewidth families, since color-coding applies to bounded-treewidth patterns. Our motivation is that, beyond polynomial-time computability, our analysis crucially exploits the availability of a large motif family (of size roughly $(c_1N_{\sfe})^{c_2N_{\sfe}}$) to amplify signal via motif multiplicity (see Condition~\ref{cond:signal-strength}). In contrast, the number of non-isomorphic patterns in bounded-treewidth classes may grow only exponentially in $N_{\sfe}$; one instance is the family of trees, whose cardinality is governed by Otter's constant~\cite{otter1948number} and hence is of order $c^{N_{\sfe}}$. 
This family-size bottleneck therefore limits how far bounded-treewidth motif classes can amplify the signal in the regime $p\ge n^{-1+\delta}$ with constant $\rho=\Omega(1)$.


\section{Numerical Results}\label{sec:num-results} 

In this section, we present numerical results on synthetic data to verify our theoretical results.
Specifically, we generate 100 pairs of graphs that are independent $\maG(n,p)$, and another 100 pairs of graphs from the correlated \ER model $\maG(n,p,\rho)$. We then evaluate the performance of our test statistic  $\maT_{\maM(N_\sfe,d)}$ on the synthetic data.

Fixing $n=100$, $p=0.05$, and $\rho=0.99$, we evaluate the statistic
$\maT_{\maM(N_\sfe,d)}$ with $N_\sfe=d=4$ on 100 pairs of graphs under each model.
Figure~\ref{fig:histogram} displays the empirical distributions: the histogram (left)
and the boxplots (right) reveal a clear shift under the correlated model relative to the
independent model, indicating separated behavior under $\maH_0$ and $\maH_1$.

\begin{figure}[ht]
\centering

\subfloat{%
  \includegraphics[width=0.4\linewidth]{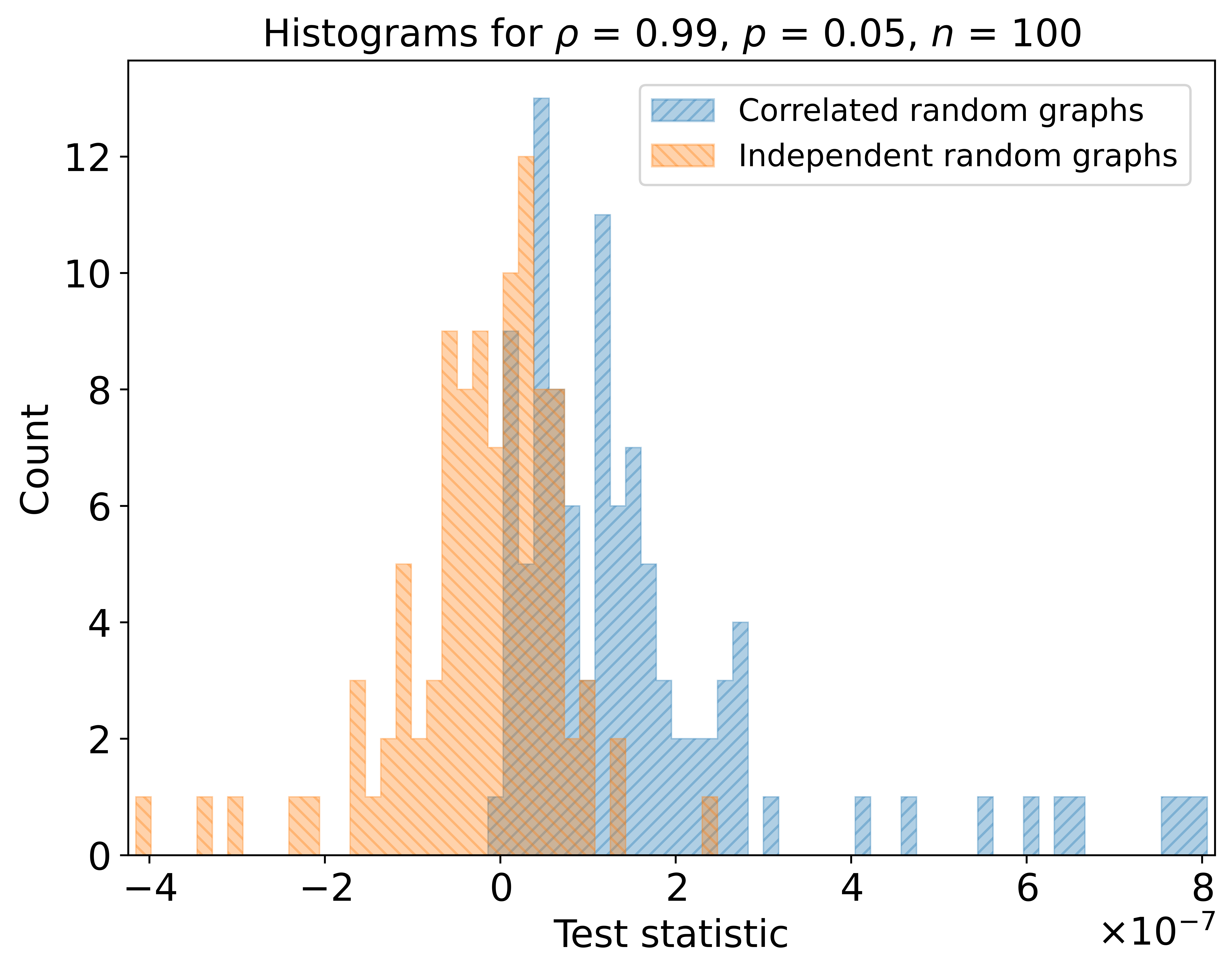}%
}\hspace{0.03\linewidth}
\subfloat{%
  \raisebox{1.2ex}{\includegraphics[width=0.4\linewidth]{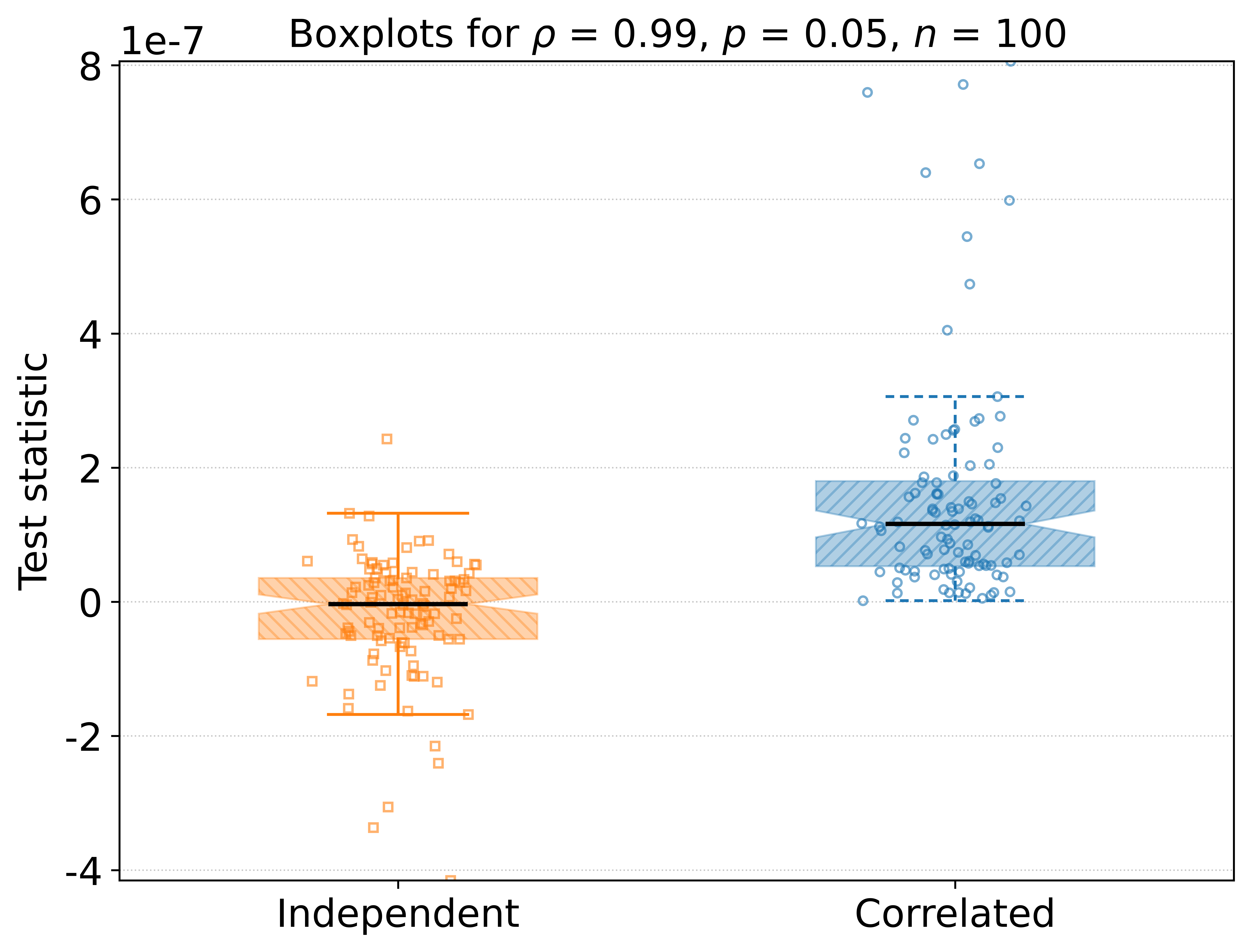}}%
}

\caption{Histograms (left) and boxplots (right) of the bounded degree motif
counting statistic $\maT_{\maM(N_\sfe,d)}$ with $N_\sfe=d=4$ for $n=100$,
$p=0.05$, and $\rho=0.99$.}
\label{fig:histogram}
\end{figure}

To compare our test statistic across settings, we plot receiver operating characteristic (ROC) curves by sweeping the detection threshold and reporting the true positive rate (one minus Type II error) against the false positive rate (Type I error) under different parameter choices. We also report the area under the ROC curve (AUC): a random classifier yields AUC = 0.5, whereas AUC = 1 corresponds to complete separation. Hence, larger AUC values indicate better discriminative performance of our statistic.

\begin{figure}[htbp] 
\centering

\subfloat{%
  \includegraphics[width=0.4\linewidth]{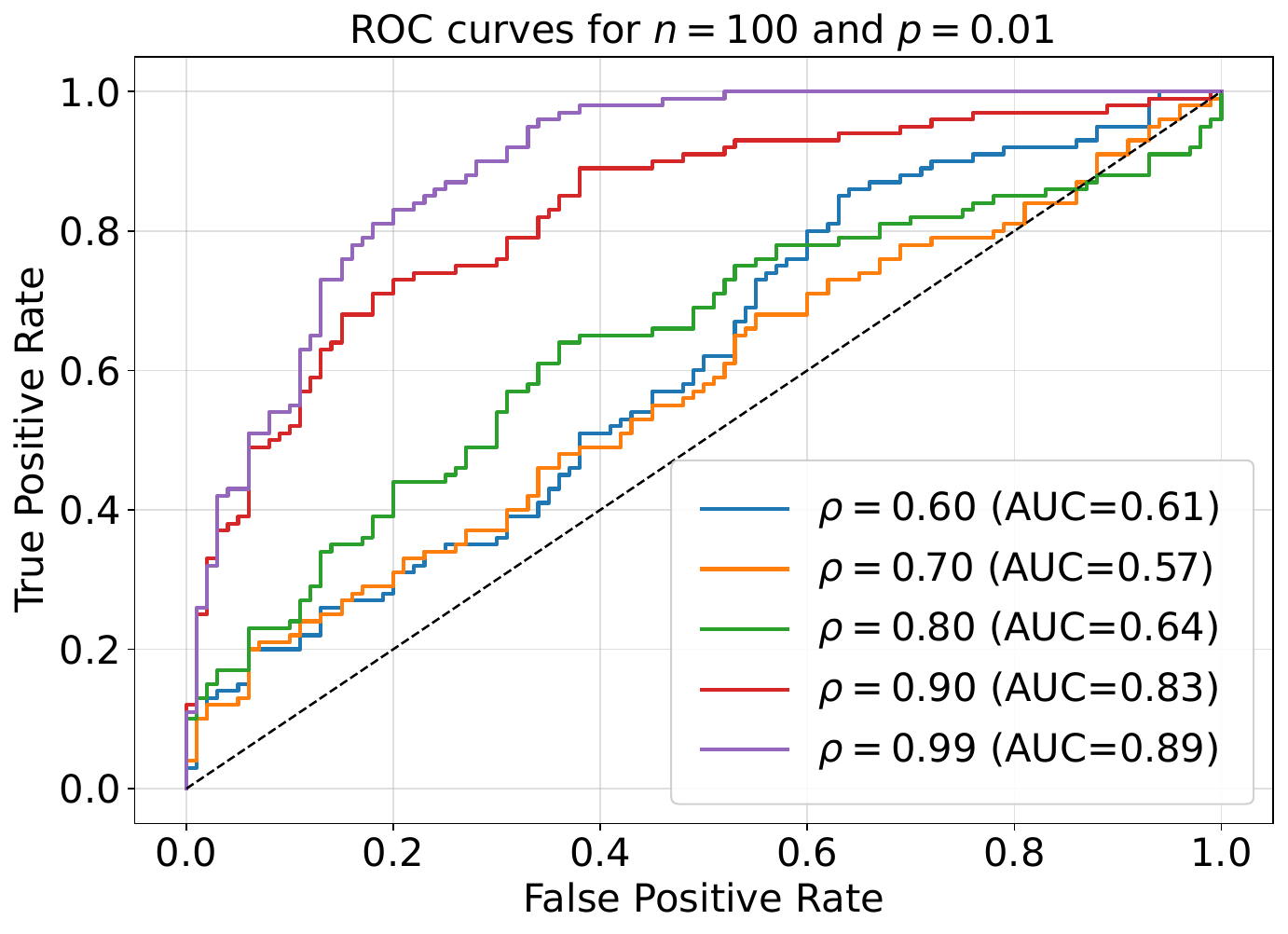}%
}\hspace{0.03\linewidth}
\subfloat{%
  \includegraphics[width=0.4\linewidth]{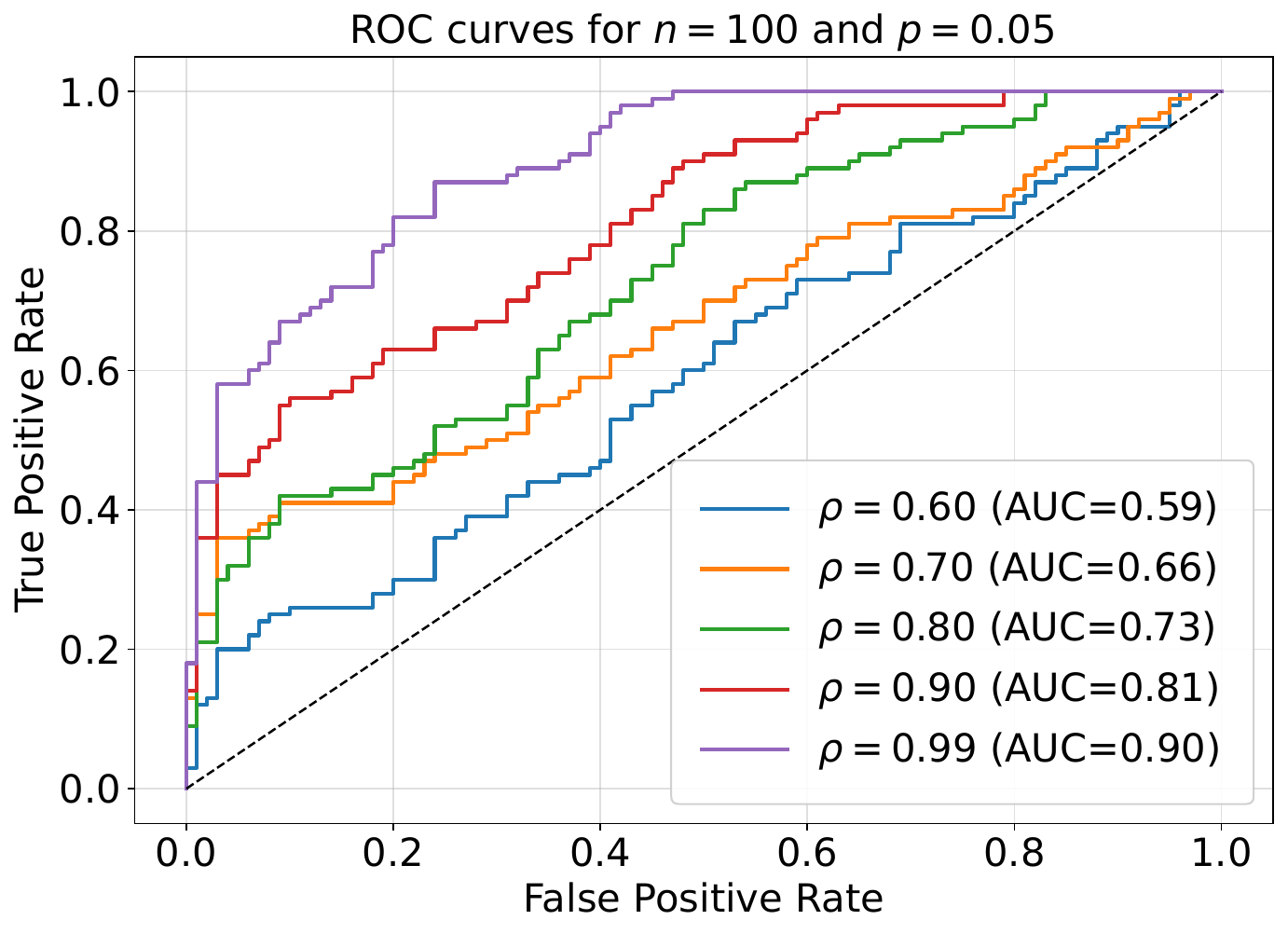}%
}

\subfloat{%
  \includegraphics[width=0.4\linewidth]{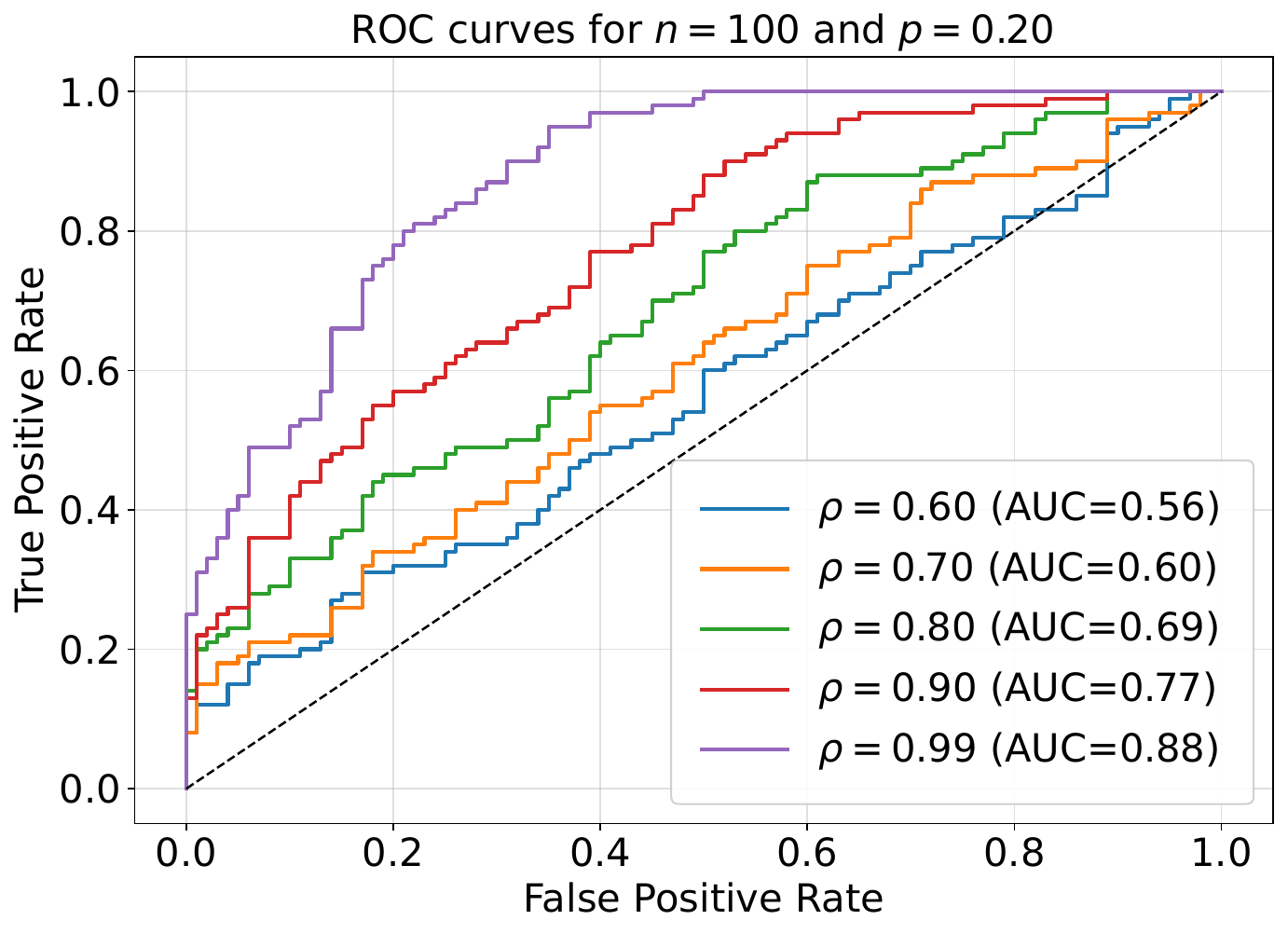}%
}\hspace{0.03\linewidth}
\subfloat{%
  \includegraphics[width=0.4\linewidth]{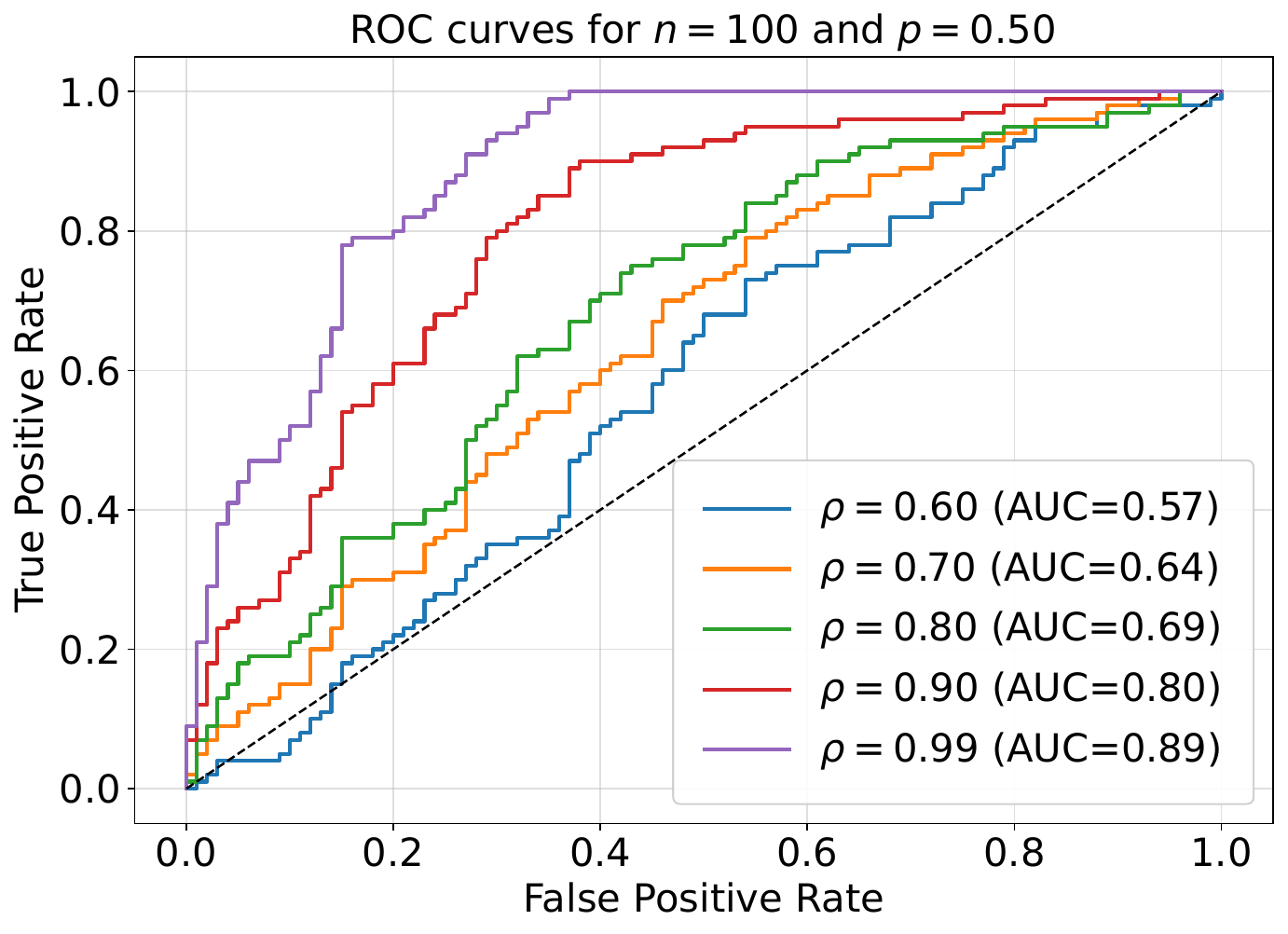}%
}

\caption{Comparison of the proposed test statistic $\maT_{\maM(N_\sfe,d)}$ with $N_\sfe = d = 4$ for fixed $p$ and varying correlation parameter $\rho\in \sth{0.6,0.7,0.8,0.9,0.99}$.}
\label{fig:ROC-E4-byrho}
\end{figure}

In Figure~\ref{fig:ROC-E4-byrho}, for each plot, we fix $n = 100, N_\sfe = d=4$ and $p\in \sth{0.01, 0.05, 0.2,0.5}$, and vary $\rho \in \sth{0.6,0.7,0.8,0.9,0.99}$. As $\rho$ increases from $0.6$ to 0.99, the ROC curves bend further toward the ideal upper-left corner (0,1) and uniformly dominate those at smaller $\rho$. Consequently, the area under the curve (AUC) grows monotonically with $\rho$ in all four settings. 
This pattern reflects improved separability of the test statistic between the null and alternative as correlation strengthens, with performance gains visible across all values of $p$ (the dashed diagonal shows the random-classifier baseline). We also plot the ROC curve for $N_\sfe=d=3$ in Figure~\ref{fig:ROC-E3-byrho}. 
Compared with the \(N_{\sfe}=3\) setting, the overall performance for $N_\sfe = 4$
exhibits noticeable improvement—most curves achieve higher true-positive rates, and the average AUCs across panels are larger—suggesting that incorporating 4-edge motifs enhances the discriminative power of the test statistic under comparable sample sizes.


\begin{figure}[htbp]
  \centering

  \subfloat{%
    \includegraphics[width=0.4\linewidth]{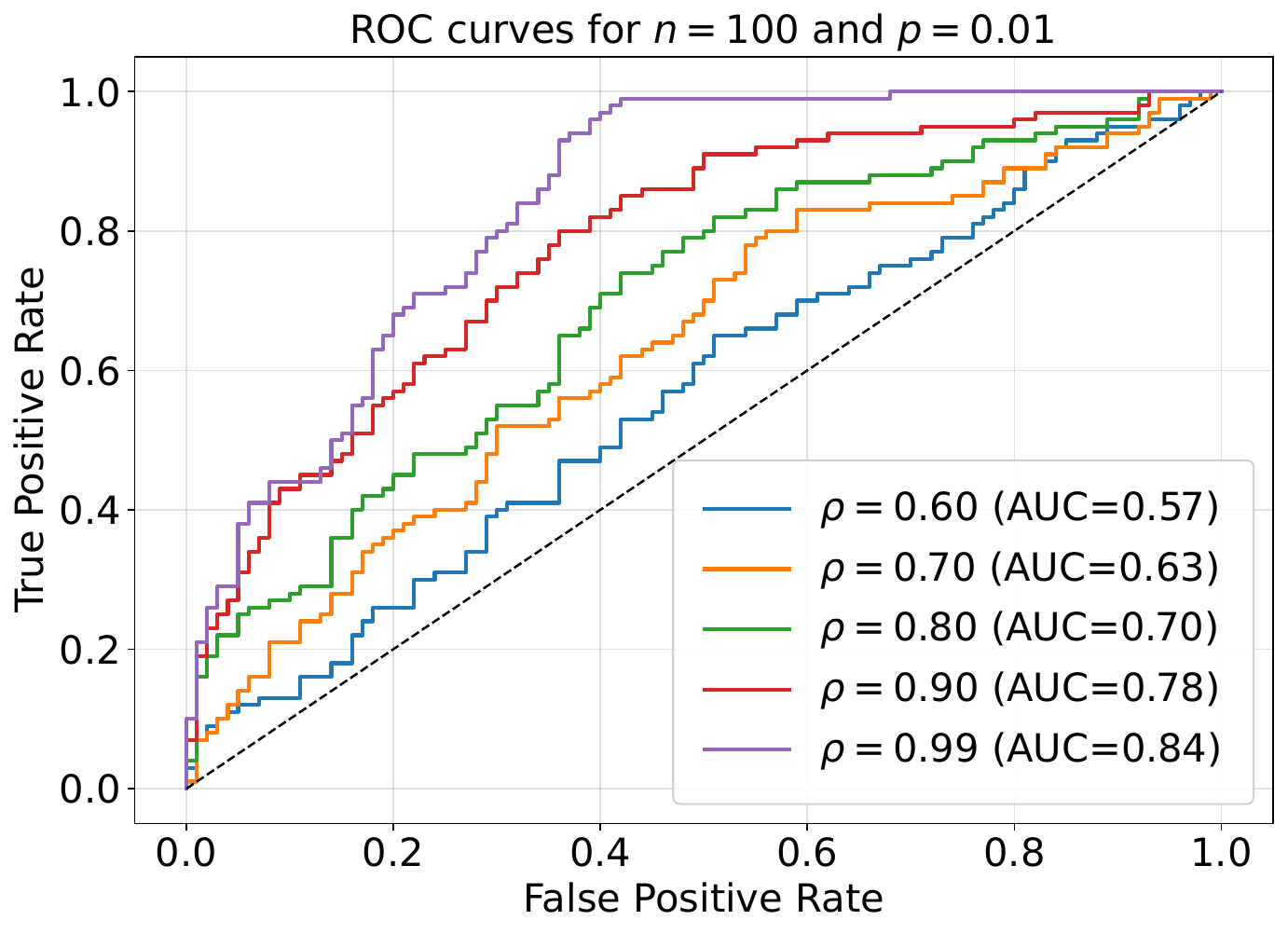}%
  }\hspace{0.03\linewidth}
  \subfloat{%
    \includegraphics[width=0.4\linewidth]{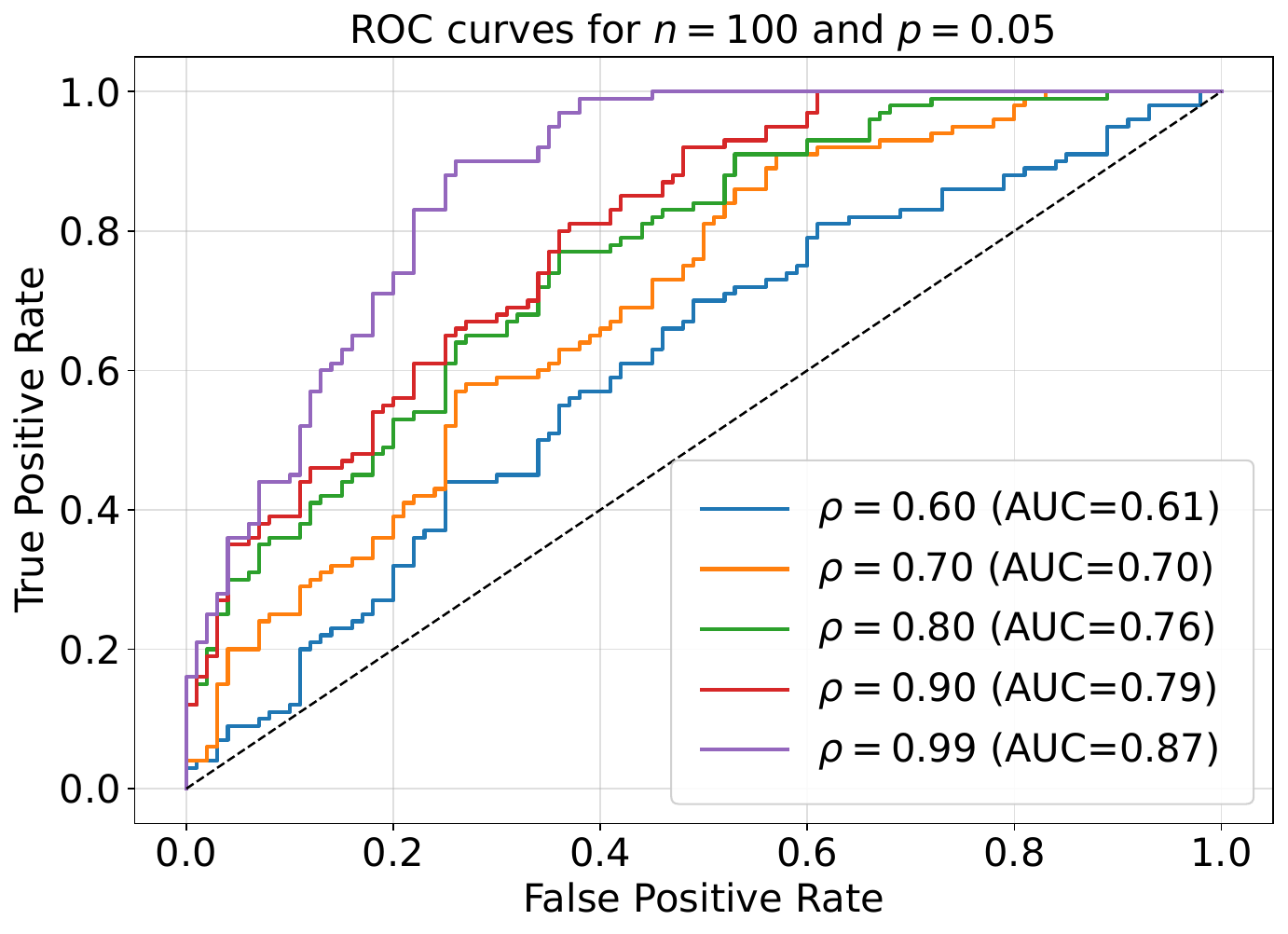}%
  }

\subfloat{%
    \includegraphics[width=0.4\linewidth]{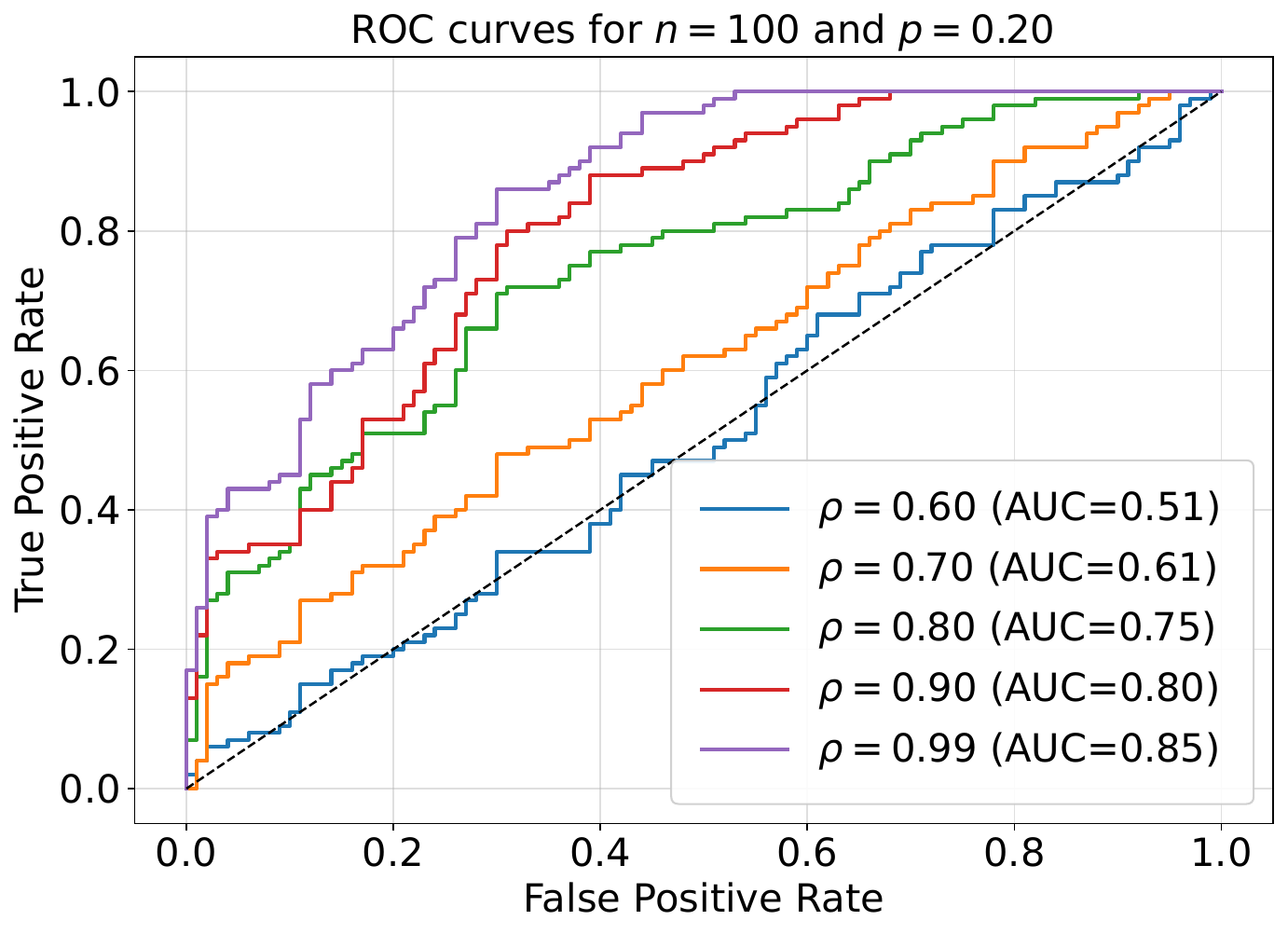}%
  }\hspace{0.03\linewidth}
  \subfloat{%
    \includegraphics[width=0.4\linewidth]{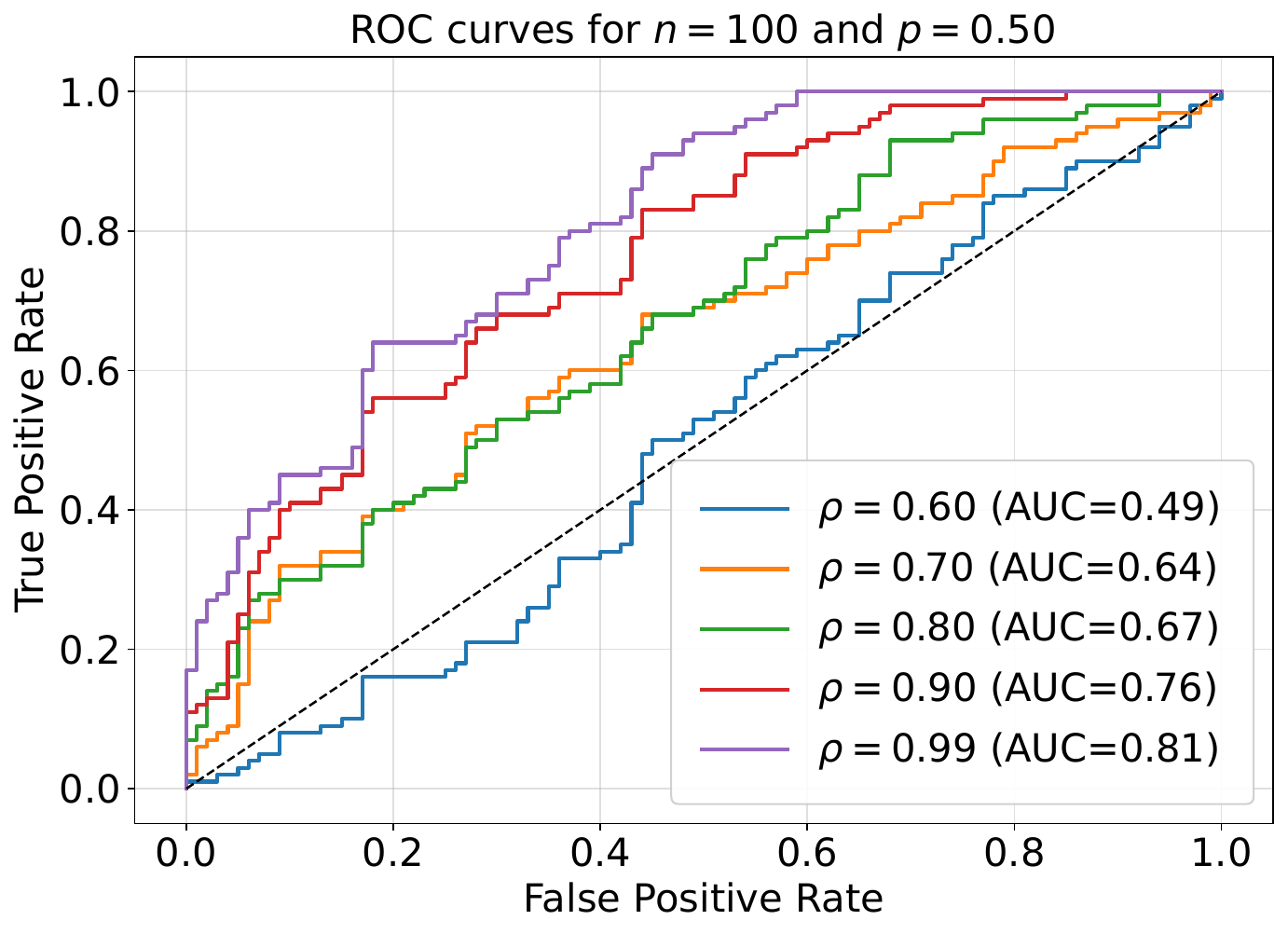}%
  }

  \caption{Comparison of the proposed test statistic $\maT_{\maM(N_\sfe,d)}$ with $N_\sfe = d = 3$ for fixed $p$ and varying correlation parameter $\rho\in \{0.6,0.7,0.8,0.9,0.99\}$.}
  \label{fig:ROC-E3-byrho}
\end{figure}

In Figure~\ref{fig:ROC-E3-byrho}, for each plot, we fix $n = 100, N_\sfe = d=4$ and $p\in \sth{0.01, 0.05, 0.2,0.5}$, and vary $\rho \in \sth{0.6,0.7,0.8,0.9,0.99}$. The qualitative behavior closely resembles that in Figure~\ref{fig:ROC-E3-byrho}: as $\rho$ increases, the ROC curves generally shift toward the ideal upper-left corner (0,1), and the AUC values increase accordingly. 
In Figure~\ref{fig:ROC-E4-byp}, we fix $n = 100$, \(N_{\sfe} = d = 4\), and $\rho \in \{0.7, 0.8, 0.9, 0.99\}$, while varying $p \in \{0.01, 0.05, 0.2, 0.5\}$. We observe that our test statistic performs consistently well for $p \in \{0.05, 0.2, 0.5\}$, whereas its performance slightly deteriorates when $p = 0.01$. This behavior is consistent with our theoretical conditions: when $n = 100$, we have $n^{-2/3} \approx 0.046$, implying that $p = 0.01$ falls below the regime $p \ge n^{-2/3}$ required in Theorems~\ref{thm:bd-degree-motif-main} and~\ref{thm:bd-degree-motif-general}. In contrast, the other three cases satisfy this condition and thus exhibit stronger empirical power, aligning well with the theoretical predictions.
\begin{figure}[htbp] 
  \centering

  \subfloat{%
    \includegraphics[width=0.4\linewidth]{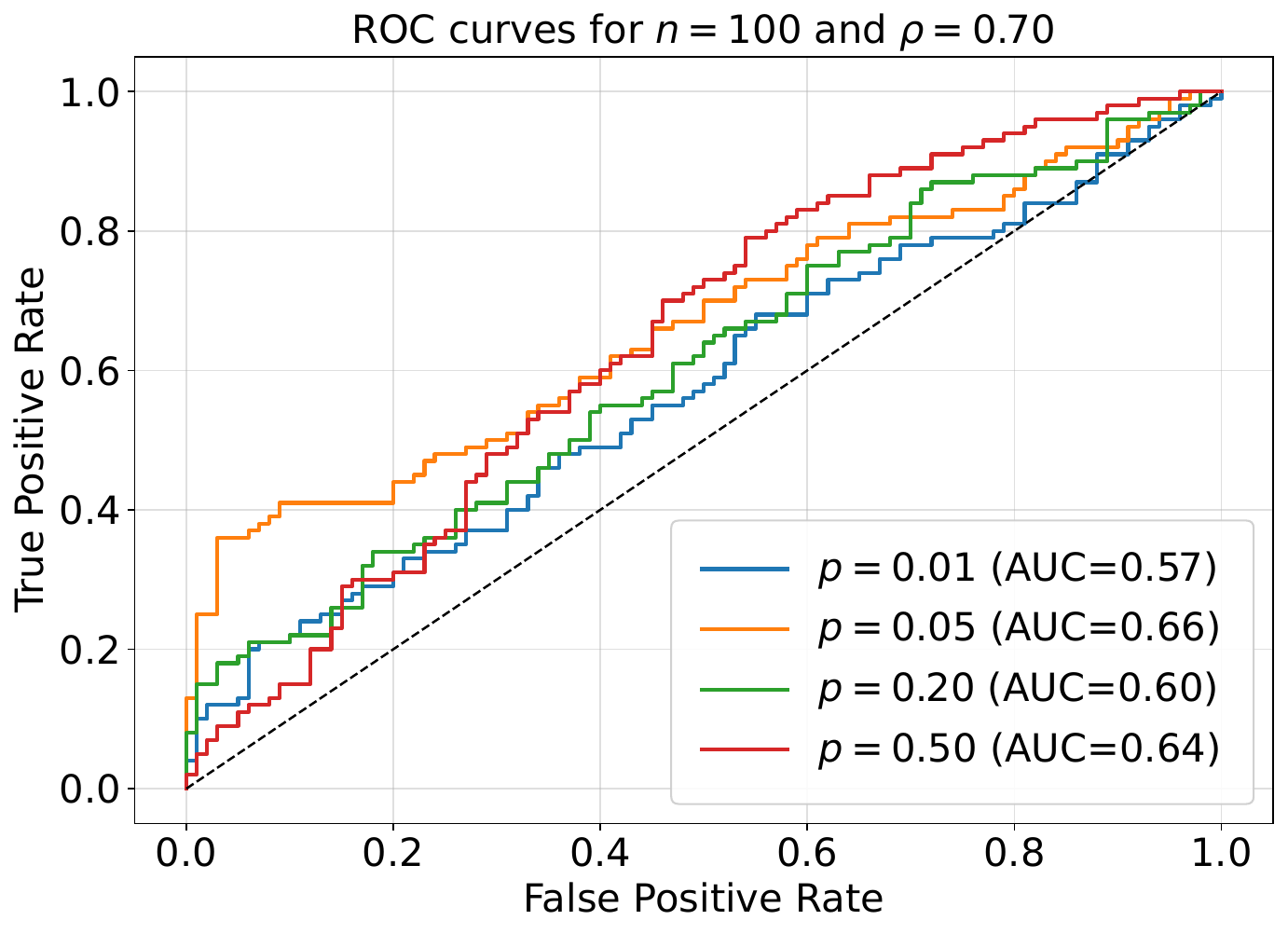}%
  }\hspace{0.03\linewidth}
  \subfloat{%
    \includegraphics[width=0.4\linewidth]{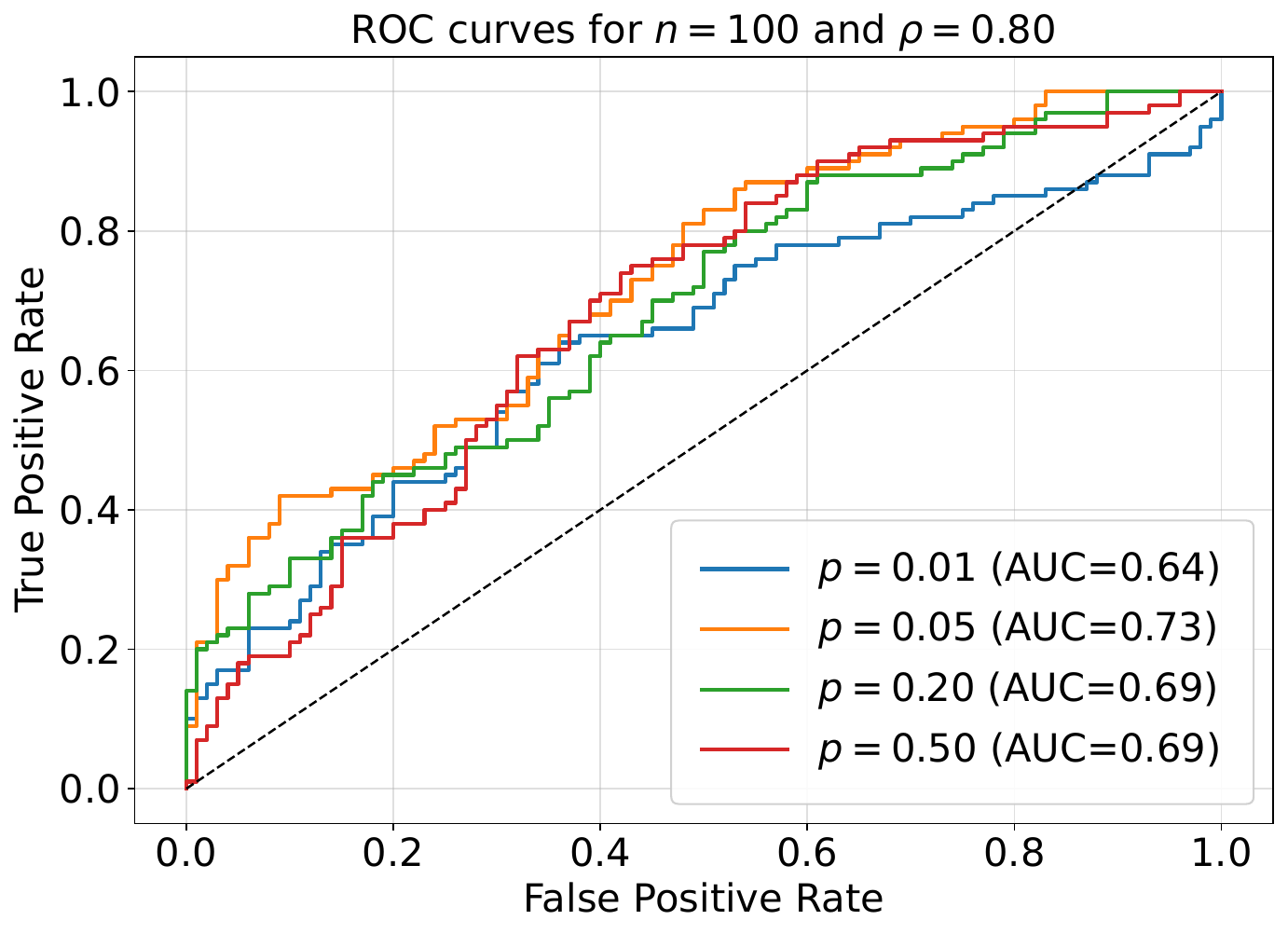}%
  }

  \subfloat{%
    \includegraphics[width=0.4\linewidth]{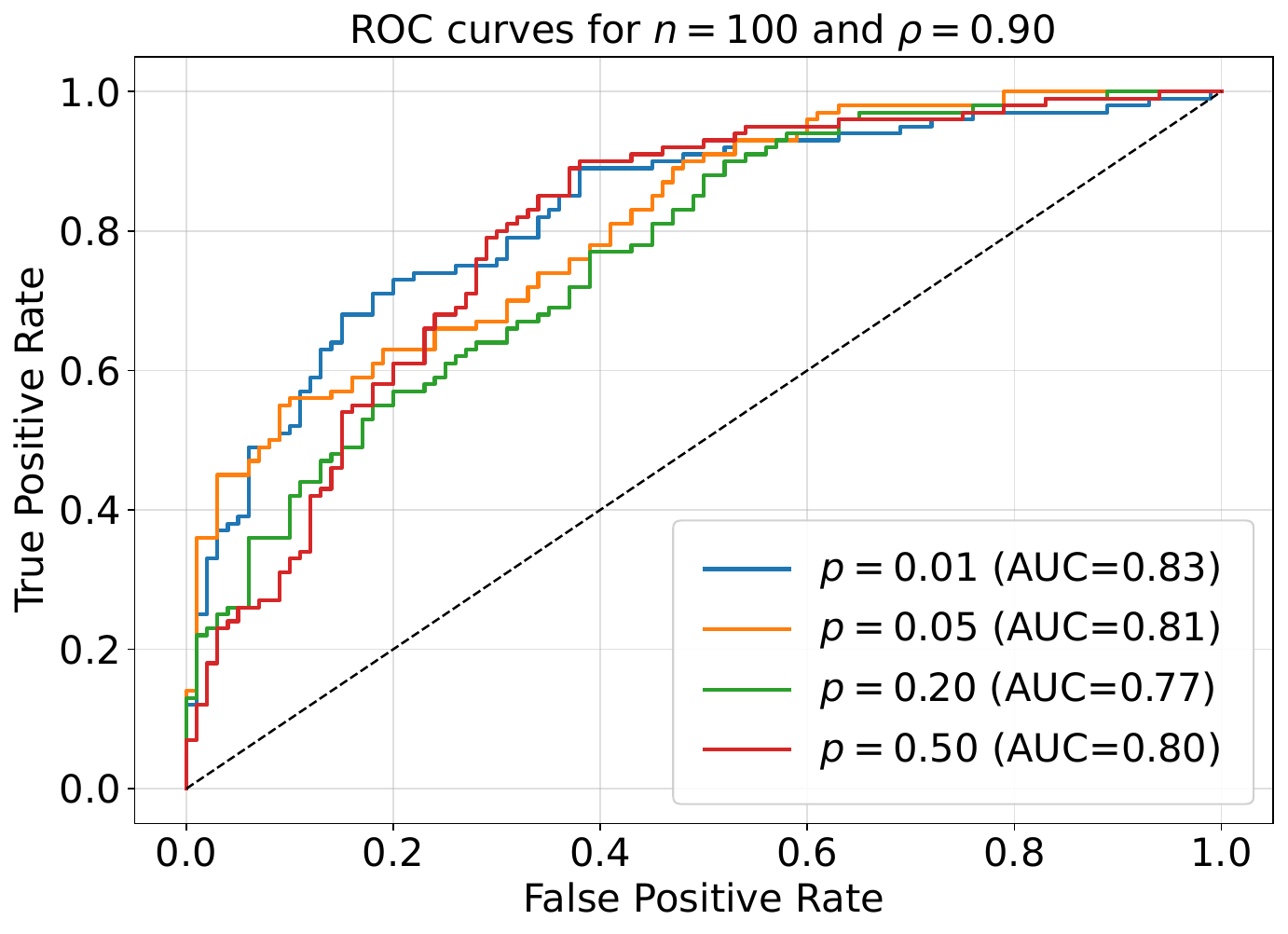}%
  }\hspace{0.03\linewidth}
  \subfloat{%
    \includegraphics[width=0.4\linewidth]{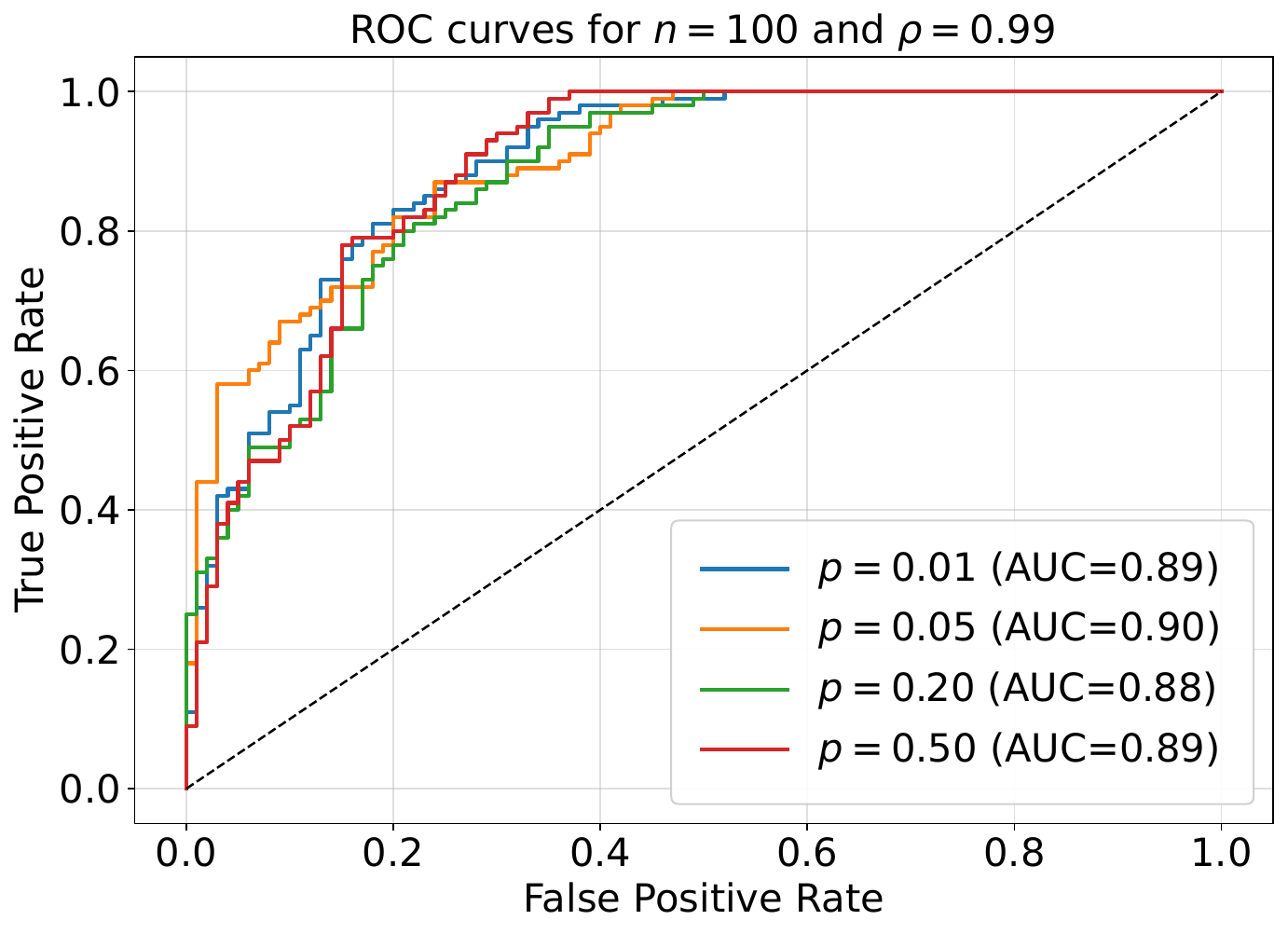}%
  }

  \caption{Comparison of the proposed test statistic $\maT_{\maM(N_\sfe,d)}$ with $N_\sfe = d = 4$ for fixed $\rho$ and varying parameter $p\in \{0.01,0.05,0.2,0.5\}$.}
  \label{fig:ROC-E4-byp}
\end{figure}

We benchmark our statistic $\maT_{\maM(N_\sfe,d)}$ against simple subgraph counting baselines (cycle counts and tree counts).
In Figure~\ref{fig:baseline_syn}, we fix $n=100$ and consider $p\in\{0.01,0.05,0.2,0.5\}$ while varying $\rho\in\{0.6,0.7,0.8,0.9,0.99\}$.
For each configuration, we compute ROC curves and report the AUC.
Across all $p$, AUC generally increases with $\rho$. Our bounded degree motif statistic consistently matches or outperforms the cycle counting and tree counting baselines in most regimes, with the advantage most pronounced at moderate to high correlations (e.g., $\rho\ge 0.8$) and for denser graphs ($p\in\{0.2,0.5\}$). These results indicate that aggregating information from bounded degree motifs yields higher detection power than relying on a single family of small subgraphs.

\begin{figure}[htbp] 
\centering

\subfloat{%
  \includegraphics[width=0.4\linewidth]{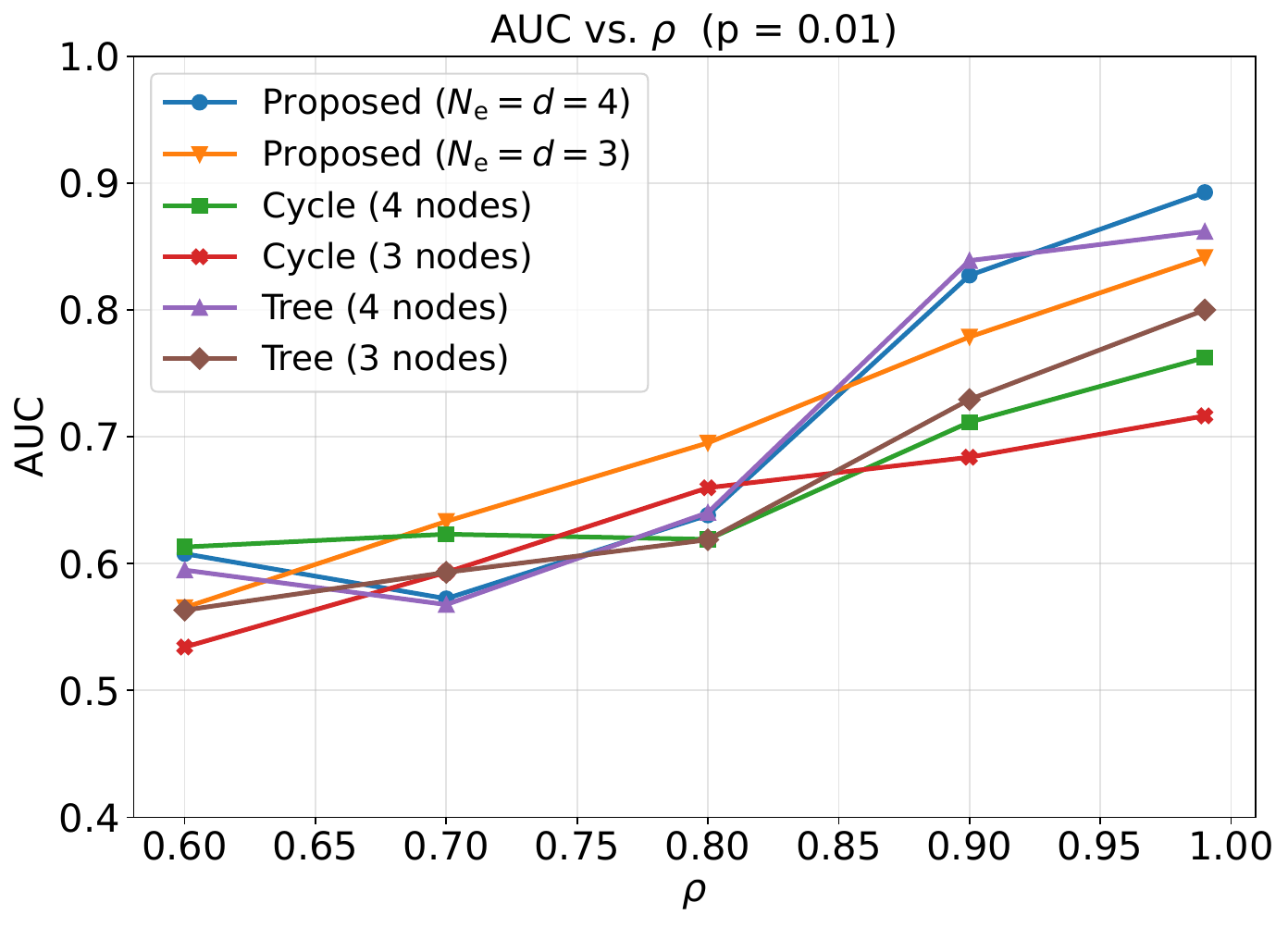}%
}\hspace{0.03\linewidth}
\subfloat{%
  \includegraphics[width=0.4\linewidth]{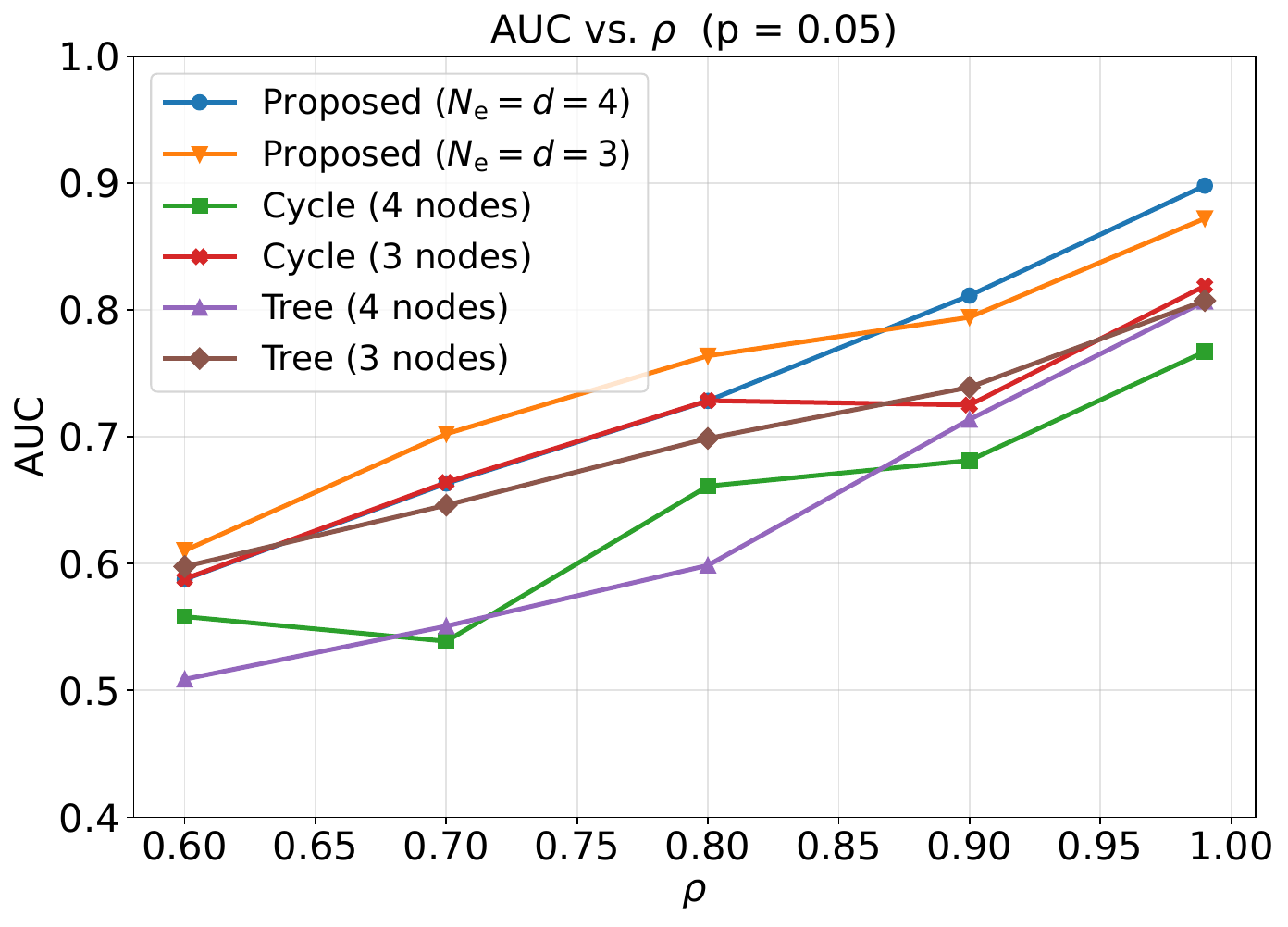}%
}

\subfloat{%
  \includegraphics[width=0.4\linewidth]{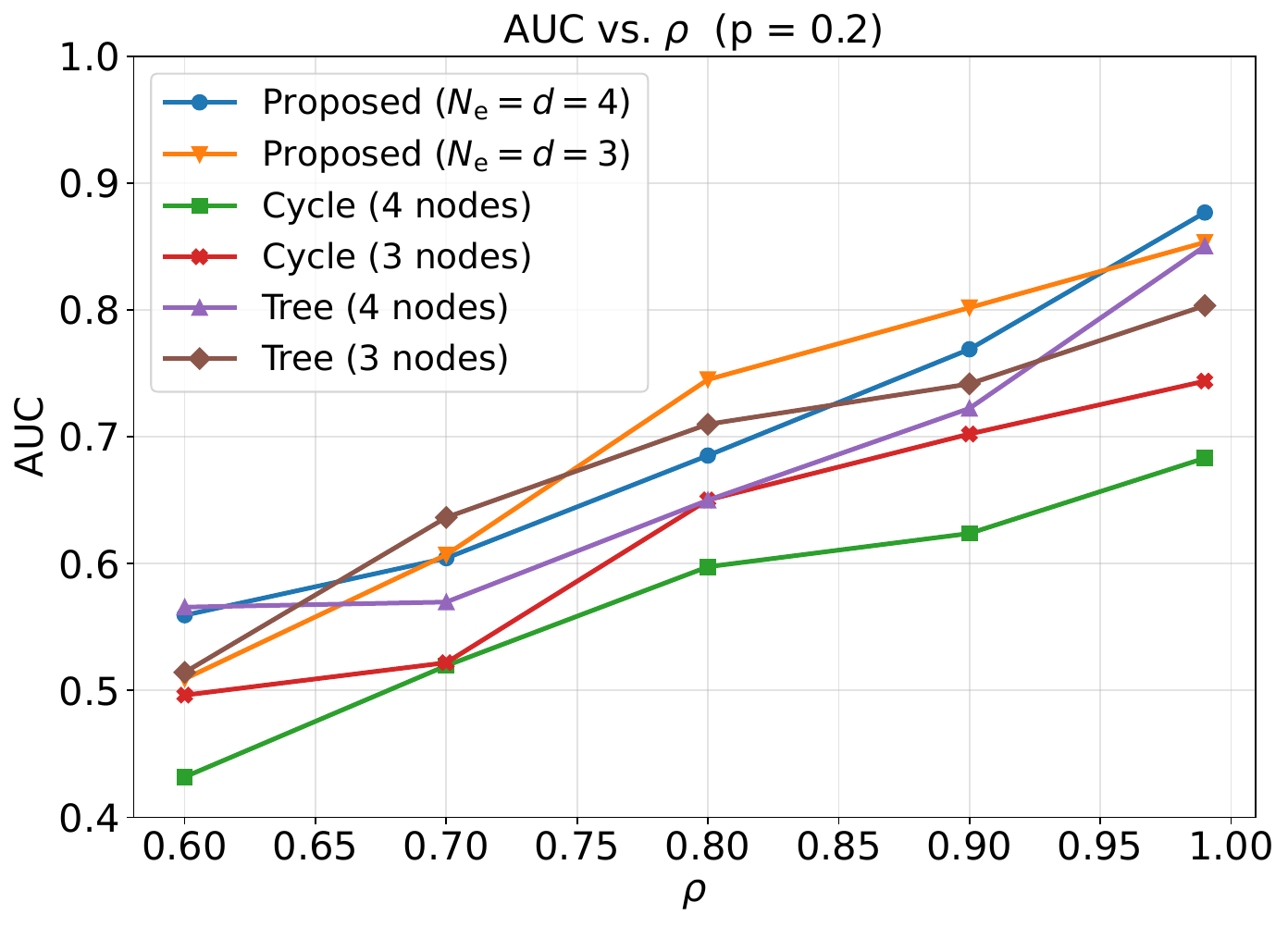}%
}\hspace{0.03\linewidth}
\subfloat{%
  \includegraphics[width=0.4\linewidth]{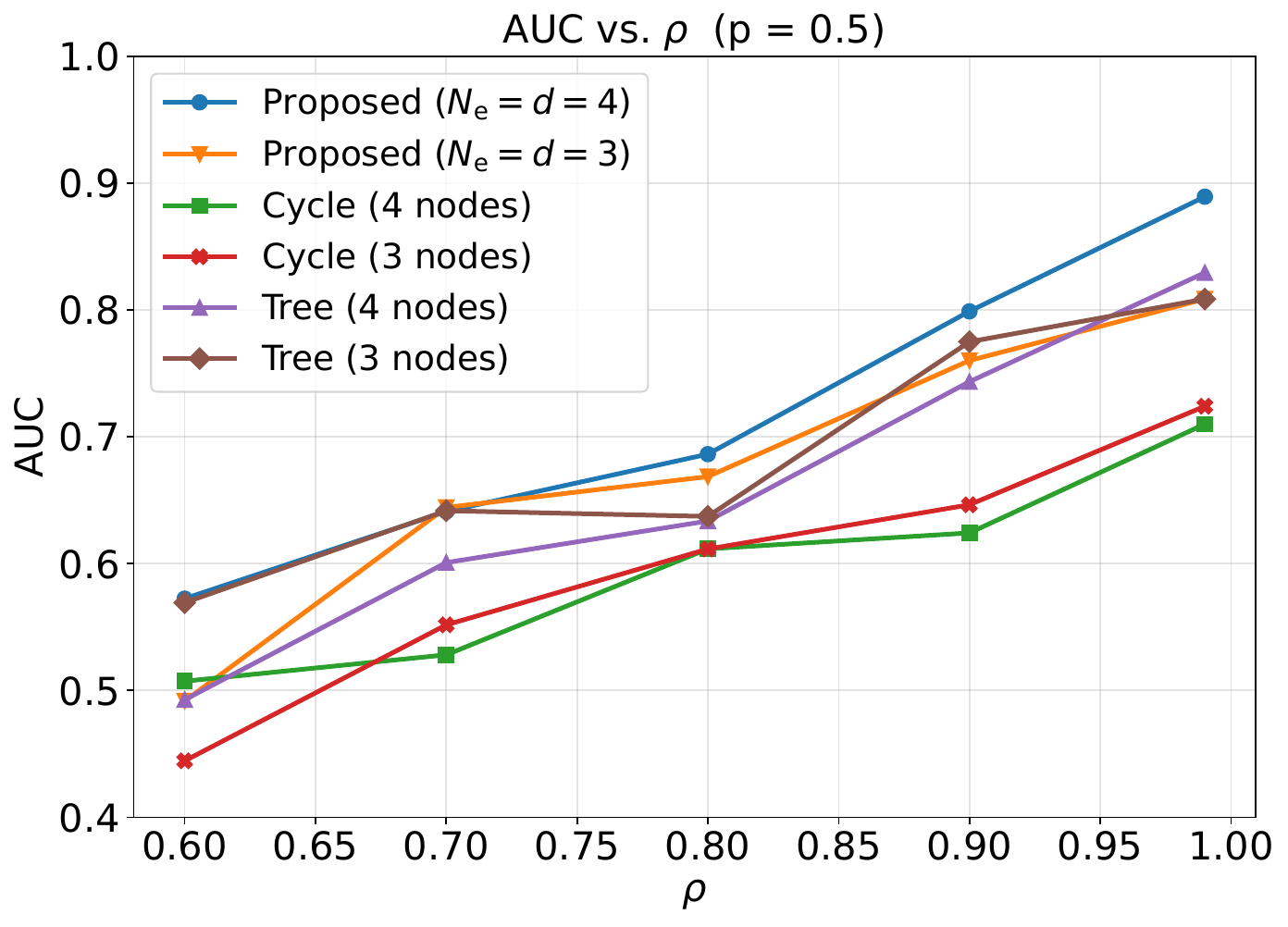}%
}

\caption{Comparison with counting-based baselines on synthetic graphs.}
\label{fig:baseline_syn}
\end{figure}



\section{Discussions and Future Directions}\label{sec:discussion}

This paper considers the hypothesis testing problem for the correlated \ER model, where under the null hypothesis two \ER graphs are independent, and under the alternative hypothesis, they are correlated through a latent permutation. We propose a polynomial-time algorithm based on counting motifs with bounded degree. Our main contributions are summarized as follows.
\begin{itemize}
\item \emph{Homomorphism number and bounded degree motifs.}
Since the homomorphism number effectively captures graph properties, we consider computing the injective homomorphism number as the test statistic. We establish the connection between homomorphism numbers and motif counting, which naturally motivates the idea of counting motifs. Instead of focusing on tree structures, which are crucial in the \ER model, we consider a more general family: bounded degree motifs. These structures frequently appear not only in graph models but also in real-world data. Notably, these motifs admit provable detection guarantees: they overcome the tree-counting barrier (which requires $\rho^2 \ge \alpha \approx 0.338$) and enable detection for any fixed constant $\rho>0$ in the regime $p \ge n^{-1+\delta}$.
\item \emph{Polynomial-time algorithm and computational hardness.}
We propose a polynomial-time algorithm that succeeds in detection for any constant $\rho $ and $p\ge n^{-1+\delta}$. This result overcomes the limitation that the correlation coefficient $\rho \ge \sqrt{\alpha}$ where $\alpha$ is the Otter's constant in tree-counting methods, as discussed in~\cite{mao2024testing}.
The bounded degree motif counting statistic achieves detection with a computational complexity of $n^{\en(\maM)}$. Moreover, our algorithm aligns with the hardness conjecture within the framework of low-degree polynomial algorithms, which conjectures that any degree-$o(\rho^{-1})$ polynomial algorithm fails for detection~\citep{ding2023low,li2025algorithmic}.
\end{itemize}

Beyond the main results, several important directions merit further investigation:
\begin{itemize}
    \item \emph{Recovery problem.} The bounded degree motif counting statistic can also be applied to the recovery problem, where two \ER graphs are assumed to be correlated through some latent permutation. Exploring this extension could lead to new insights into both exact and partial recovery guarantees.
    \item \emph{Optimal degree.} We show that our motif counting algorithm  with degree $\Theta\pth{\frac{1}{\rho^{-2d/(d-2)}}}$ achieves detection. In contrast,  it is conjectured that any algorithm with degree $o\pth{\frac{1}{\rho}}$ fails for detection~\citep{ding2023low,li2025algorithmic}. An interesting open question is determining the optimal degree for a polynomial-time algorithm in the detection problem.
    \item \emph{General graph model.} Although there have been many analyses of the correlated \ER graph, a key limitation is that the model is idealized and does not fully capture the characteristics of real-world networks. To address the generality of the model, recent works have explored various other random graph models, including partially correlated \ER model~\citep{huang2024information}, inhomogeneous model~\citep{ding2023efficiently}, correlated random geometric model~\citep{wang2022random,gong2024umeyama}, correlated stochastic block model~\citep{chen2024computational,chen2025detecting}, planted structure model~\citep{mao2024informationtheoretic}, multiple correlated \ER model~\citep{ameen2024exact}, and corrupted model~\citep{ameen2024robust}. It is of interest to explore whether our results can be extended to more general graph models.
\end{itemize}

\appendix

\section{Proof of Propositions}

\subsection{Proof of Proposition~\ref{prop:TypeI-admissible}}\label{apd:proof-prop-admi-typeI}

Recall that $\bar{G}$ defines the weighted graph with weighted edge $\beta_{uv}(\bar{G}) = \indc{uv\in E(G)} - p$ for $G\sim \maG(n,p)$ and $\inj(\sfM,\bar{G})$ defined in~\eqref{eq:def_of_inj}. Under the null hypothesis distribution $\maP_0$, $\bar{G}_1$ and $\bar{G}_2$ are independent. Therefore, for any $\sfM\in \maM$, we have \begin{align*}
    \mathbb{E}_{\maP_0} \qth{\inj(\sfM,\bar{G}_1) \inj(\sfM,\bar{G}_2)} &= \mathbb{E}_{\maP_0}\qth{\inj(\sfM,\bar{G}_1)} \mathbb{E}_{\maP_0}\qth{\inj(\sfM,\bar{G}_2)}. 
\end{align*}
    Since $\mathbb{E}_{\maP_0}\qth{\beta_e(\bar{G}_i)} = 0$ for any $e\in E(\bar{G}_i)$ and $i\in \sth{1,2}$, we have \begin{align*}
\mathbb{E}_{\maP_0}\qth{\inj(\sfM,\bar{G}_1)} &= \sum_{\substack{\varphi:V(\sfM)\mapsto V(\bar{G}_1)\\\varphi\text{ injective}}} \mathbb{E}_{\maP_0}\qth{\prod_{e\in E(\sfM)} \beta_{\varphi(e)}(\bar{G}_1)}\\&=\sum_{\substack{\varphi:V(\sfM)\mapsto V(\bar{G}_1)\\\varphi\text{ injective}}}\prod_{e\in E(\sfM)} \mathbb{E}_{\maP_0}\qth{\beta_{ \varphi(e)}(\bar{G}_1)}=0,\\
\mathbb{E}_{\maP_0}\qth{\inj(\sfM,\bar{G}_2)} &= \sum_{\substack{\varphi:V(\sfM)\mapsto V(\bar{G}_2)\\\varphi\text{ injective}}} \mathbb{E}_{\maP_0}\qth{\prod_{e\in E(\sfM)} \beta_{\varphi(e)}(\bar{G}_2)}\\&=\sum_{\substack{\varphi:V(\sfM)\mapsto V(\bar{G}_2)\\\varphi\text{ injective}}}\prod_{e\in E(\sfM)} \mathbb{E}_{\maP_0}\qth{\beta_{ \varphi(e)}(\bar{G}_1)}=0,
    \end{align*}
    where $\varphi(e)\triangleq \varphi(u)\varphi(v)$ for any edge $e=uv$. Therefore, $$\mathbb{E}_{\maP_0}\qth{\maT_{\maM}}=\sum_{\sfM\in \maM} \omega_\sfM\mathbb{E}_{\maP_0}\qth{\inj(\sfM,\bar{G}_1) \inj(\sfM,\bar{G}_2)}=0.$$ 
By Chebyshev's inequality, we have \begin{align}\label{eq:typeI-chebyshev}
  {\maP_0}\pth{\maT_{\maM}\ge \tau}= {\maP_0}\pth{\maT_{\maM}-\mathbb{E}_{\maP_0}\qth{\maT_{\maM}}\ge \tau }\le \frac{\var_{\maP_0}\qth{\maT_{\maM}}}{\tau^2}=\frac{4\var_{\maP_0}\qth{\maT_{\maM}}}{\pth{\E_{\maP_1}\qth{\maT_{\maM}}}^2}.
    \end{align}

It remains to compute $\E_{\maP_1}\qth{\maT_{\maM}}$ and $\var_{\maP_0}\qth{\maT_{\maM}}$. We first compute $\E_{\maP_1}\qth{\maT_{\maM}}$. Recall the test statistic $\maT_\maM$ defined in~\eqref{eq:est-sub-count}.
We note that \begin{align*}
    &~\E_{\maP_1}\qth{\maT_{\maM}} = \E_{\pi}\E_{\maP_1|\pi}\qth{\maT_{\maM}}\\
    =&~\sum_{\sfM\in \maM} \omega_\sfM \E_\pi\E_{\maP_1|\pi}\qth{\inj(\sfM,\bar{G}_1) \inj(\sfM,\bar{G}_2)}\\
    =&~\sum_{\sfM\in \maM}\omega_\sfM\sum_{\substack{\varphi_1:V(\sfM)\mapsto V(\bar{G}_1)\\\varphi_1\text{  injective}}}\sum_{\substack{\varphi_2:V(\sfM)\mapsto V(\bar{G}_2)\\\varphi_2\text{ injective}}} \E_\pi\E_{\maP_1|\pi}\qth{\prod_{e\in E(\sfM)} \beta_{\varphi_1(e)}(\bar{G}_1)\prod_{e\in E(\sfM)} \beta_{\varphi_2(e)}(\bar{G}_2)}.
\end{align*}
We note that for a correlated pair $(e,\pi(e))$, $\E_{\maP_1|\pi} \qth{\beta_e(\bar{G}_1)\beta_{\pi(e)}(\bar{G}_2)} = \rho p (1-p)$, otherwise we have $\E_{\maP_1|\pi} \qth{\beta_e(\bar{G}_1)\beta_{e'}(\bar{G}_2)} = 0$. Therefore, for any injections $\vp_1:V(\sfM)\mapsto V(\bar{G}_1)$ and $\vp_2:V(\sfM)\mapsto V(\bar{G}_2)$, we have \begin{align}
    \nonumber \nonumber&~\E_\pi\E_{\maP_1|\pi}\qth{\prod_{e\in E(\sfM)} \beta_{\vp_1(e)}(\bar{G}_1)\prod_{e\in E(\sfM)} \beta_{\vp_2(e)}(\bar{G}_2)} \\\nonumber =&~ \pth{\rho p (1-p)}^{\en(\sfM)}\prob{\pi\circ \vp_1(E(\sfM)) = \vp_2(E(\sfM))}\\\label{eq:Epvarphi12} 
    =&~\pth{\rho p (1-p)}^{\en(\sfM)}\cdot \frac{\aut(\sfM)(n-\vn(\sfM))!}{n!},
\end{align}
where $\vp(E(\sfM))\triangleq \sth{\vp(e):e\in E(\sfM)}$ and the last equality holds because of the following three facts: (1) $\sfM$ is connected; (2) there are $\aut(\sfM)$ options for $\pi$ on $\vp_1(V(\sfM))$ when fixing $\pi(\vp_1(V(\sfM))) = \vp_2(V(\sfM))$; and (3) there are $(n-\vn(\sfM))!$ options for mapping $V(\bar{G}_1)\backslash \vp_1(V(\sfM))$ to $V(\bar{G}_2\backslash \pi\circ\vp_1(V(\sfM)))$. We then obtain
\begin{align}
    \nonumber &~\E_{\maP_1}\qth{\maT_{\maM}}\\\nonumber =&~\sum_{\sfM\in \maM}\omega_\sfM\sum_{\substack{\varphi_1:V(\sfM)\mapsto V(\bar{G}_1)\\\varphi_1\text{ injective}}}\sum_{\substack{\varphi_2:V(\sfM)\mapsto V(\bar{G}_2)\\\varphi_2\text{ injective}}} \E_\pi\E_{\maP_1|\pi}\qth{\prod_{e\in E(\sfM)} \beta_{\varphi_1(e)}(\bar{G}_1)\prod_{e\in E(\sfM)} \beta_{\varphi_2(e)}(\bar{G}_2)}\\\nonumber 
    =&~\sum_{\sfM\in \maM}\omega_\sfM\sum_{\substack{\varphi_1:V(\sfM)\mapsto V(\bar{G}_1)\\\varphi_1\text{ injective}}}\sum_{\substack{\varphi_2:V(\sfM)\mapsto V(\bar{G}_2)\\\varphi_2\text{ injective}}} \pth{\rho p (1-p)}^{\en(\sfM)}\cdot \frac{\aut(\sfM)(n-\vn(\sfM))!}{n!}\\ \label{eq:EPmaT}=&~ \sum_{\sfM\in \maM} \omega_\sfM\, \frac{(\rho p (1-p))^{\en(\sfM)}\aut(\sfM) n!}{(n-\vn(\sfM))!},
\end{align}
where the last equality follows from the fact that there are $\frac{n!}{(n-\vn(\sfM))!}$ injections $\vp_1:V(\sfM)\mapsto V(\bar{G}_1)$ and $\frac{n!}{(n-\vn(\sfM))!}$ injections $\vp_2:V(\sfM)\mapsto V(\bar{G}_2)$.

We then compute $\var_{{\maP_0}}\qth{\maT_{\maM}}$.
Since $\E_{\maP_0}\qth{\maT_{\maM}}=0$, it is equivalent to computing $\E_{\maP_0}\qth{\maT_{\maM}^2}$. We note that \begin{align}
    \nonumber \E_{\maP_0}\qth{\maT_{\maM}^2} &= \sum_{\sfM_1,\sfM_2\in \maM} \omega_{\sfM_1}\omega_{\sfM_2} \E_{\maP_0}\qth{\inj(\sfM_1,\bar{G}_1) \inj(\sfM_1,\bar{G}_2) \inj(\sfM_2,\bar{G}_1)\inj(\sfM_2,\bar{G}_2)}\\\label{eq:varQ-HH'}
    &=\sum_{\sfM_1,\sfM_2\in \maM} \omega_{\sfM_1}\omega_{\sfM_2} \E^2_{\maP_0}\qth{\inj(\sfM_1,\bar{G}_1)\inj(\sfM_2,\bar{G}_1)} ,
\end{align}
where the last equality is because $\bar{G}_1$ and $\bar{G}_2$ are i.i.d. under ${\maP_0}$.
For any motifs $\sfM_1,\sfM_2$ with  $E(\sfM_1)\neq E(\sfM_2)$, we have $E(\sfM_1)\triangle E(\sfM_2)\neq \emptyset$. Consequently,
\begin{align*}
    &~\E_{\maP_0}\qth{\prod_{e\in E(\sfM_1)} \beta_e(\bar{G})\prod_{e\in E(\sfM_2)} \beta_e(\bar{G})}\\ =&~ \E_{\maP_0}\qth{\prod_{e\in E(\sfM_1)\triangle E(\sfM_2)} \beta_e(\bar{G})} \E_{\maP_0}\qth{\prod_{e\in E(\sfM_1)\cap E(\sfM_2)} \beta_e^2(\bar{G})} = 0,
\end{align*}
where the last equality is because $\E_{\maP_0}\qth{\prod_{e\in E(\sfM_1)\triangle E(\sfM_2)} \beta_e(\bar{G})} = 0$.
For any $\sfM_1\neq \sfM_2\in \maM$ and injective mappings $\vp_i:V(\sfM_i)\mapsto V(\bar{G}_1)$ with $i\in \sth{1,2}$, we note that $\vp_1(E(\sfM_1))\neq \vp_2(E(\sfM_2))$.
Therefore, for any $\sfM_1\neq \sfM_2\in \maM$, 
\begin{align*}
    &~\E_{\maP_0}\qth{\inj(\sfM_1,\bar{G}_1)\inj(\sfM_2,\bar{G}_1)} \\=&~ \sum_{\substack{\varphi_1:V(\sfM_1)\mapsto V(\bar{G}_1)\\\varphi_1\text{ injective}}}\sum_{\substack{\varphi_2:V(\sfM_2)\mapsto V(\bar{G}_1)\\\varphi_2\text{ injective}}} \E_{\maP_0}\qth{\prod_{e\in E(\sfM_1)} \beta_{\vp_1(e)}(\bar{G}_1)\prod_{e\in E(\sfM_2)}\beta_{\vp_2(e)}(\bar{G}_1)}\\=&~ \sum_{\substack{\varphi_1:V(\sfM_1)\mapsto V(\bar{G}_1)\\\varphi_1\text{ injective}}}\sum_{\substack{\varphi_2:V(\sfM_2)\mapsto V(\bar{G}_1)\\\varphi_2\text{ injective}}}\\&~~~~~~ \E_{\maP_0}\qth{\prod_{e\in \pth{\vp_1(E(\sfM_1))\triangle \vp_2(E(\sfM_2))}}\beta_e(\bar{G}_1)\prod_{e\in \pth{\vp_1(E(\sfM_1))\cap \vp_2(E(\sfM_2))}}\beta_e^2(\bar{G}_1)} =0
\end{align*}

For any $\sfM_1 = \sfM_2\in \maM$, \begin{align*}
    &~\E_{\maP_0}\qth{\inj(\sfM_1,\bar{G}_1)\inj(\sfM_2,\bar{G}_1)} \\=&~ \sum_{\substack{\varphi_1:V(\sfM_1)\mapsto V(\bar{G}_1)\\\varphi_1\text{ injective}}}\sum_{\substack{\varphi_2:V(\sfM_2)\mapsto V(\bar{G}_1)\\\varphi_2\text{ injective}}} \E_{\maP_0}\qth{\prod_{e\in E(\sfM_1)} \beta_{\vp_1(e)}(\bar{G}_1)\prod_{e\in E(\sfM_2)}\beta_{\vp_2(e)}(\bar{G}_1)}\\
    \overset{\mathrm{(a)}}{=} &~\sum_{\substack{\varphi_1:V(\sfM_1)\mapsto V(\bar{G}_1)\\\varphi_1\text{ injective}}}\sum_{\substack{\varphi_2:V(\sfM_2)\mapsto V(\bar{G}_1)\\\varphi_2\text{ injective}}} (p(1-p))^{\en(\sfM_1)}\indc{\vp_1(E(\sfM_1)) = \vp_2(E(\sfM_2))} \\\overset{\mathrm{(b)}}{=}&~\sum_{\substack{\varphi_1:V(\sfM_1)\mapsto V(\bar{G}_1)\\\varphi_1\text{ injective}}} (p(1-p))^{\en(\sfM_1)} \aut(\sfM_1)=\frac{(p(1-p))^{\en(\sfM_1)} \aut(\sfM_1) n!}{(n-\vn(\sfM_1))!},
\end{align*}
where $\mathrm(a)$ follows from $\E_{\maP_0}\qth{{\prod_{e\in E(\sfM_1)} \beta_{\vp_1(e)}(\bar{G}_1)\prod_{e\in E(\sfM_2)}\beta_{\vp_2(e)}(\bar{G}_1)}}=0$ for any $\vp_1(E(\sfM_1))\neq \vp_2(E(\sfM_2))$; $\mathrm{(b)}$ is because there are $\aut(\sfM_1)$ injective mappings for $\vp_1(E(\sfM_1)) = \vp_2(E(\sfM_2))$ given $\vp_1$.
Combining this with~\eqref{eq:varQ-HH'}, we obtain that \begin{align*}
    \E_{\maP_0}\qth{\maT_{\maM}^2} = \sum_{\sfM\in\maM}\pth{\frac{\omega_\sfM(p(1-p))^{\en(\sfM)} \aut(\sfM) n!}{(n-\vn(\sfM))!}}^2.
\end{align*}
Recall~\eqref{eq:EPmaT}. By picking $\omega_\sfM = \frac{\rho^{\en(\sfM)} (n-\vn(\sfM))!}{(p(1-p))^{\en(\sfM)}\aut(\sfM) n!}$, 
we have \begin{align}\label{eq:EPvarQ}
    \E_{\maP_1}\qth{\maT_{\maM}} = \var_{\maP_0}\qth{\maT_{\maM}} = \sum_{\sfM\in \maM} \rho^{2\en(\sfM)}.
\end{align}
Combining this with~\eqref{eq:typeI-chebyshev}, we obtain that \begin{align*}
    {\maP_0}\pth{\maT_{\maM}\ge \tau}\le \frac{4}{\sum_{\sfM\in \maM} \rho^{2\en(\sfM)}}.
\end{align*}

\subsection{Proof of Proposition~\ref{prop:admissible-TypeII}}\label{apd:proof-prop-admi-typeII}

By Chebyshev's inequality, the Type II error is controlled by \begin{align*}
    \maP_1\pth{\maT_{\maM}<\tau}&\le \maP_1\pth{\pth{\maT_{\maM}-\E_{\maP_1}\qth{\maT_{\maM}}}^2 >\frac{\pth{\E_{\maP_1}\qth{\maT_{\maM}}}^2}{4}}\le \frac{4\var_{\maP_1}\qth{\maT_{\maM}}}{\pth{\E_{\maP_1}\qth{\maT_{\maM}}}^2}.
\end{align*}
By selecting the weight $\omega_\sfM = \frac{\rho^{\en(\sfM)} (n-\vn(\sfM))! }{n! (p(1-p))^{\en(\sfM)}\aut(\sfM)}$, we have shown in~\eqref{eq:EPvarQ} that  $\E_{\maP_1}\qth{\maT_{\maM}}$ is characterized by the \emph{signal score} $\sum_{\sfM\in \maM}\rho^{2\en(\sfM)}$. It remains to estimate the second moment $\E_{\maP_1}\qth{\maT_{\maM}^2}$. 

Given two bounded degree motifs $\sfM_1$ and $\sfM_2$, we define a homomorphism matrix $\vp\triangleq \begin{bmatrix}\varphi_{11},\varphi_{12}\\\varphi_{21},\varphi_{22}\end{bmatrix}$, where $\vp_{ij}:V(\sfM_i)\mapsto V(\bar{G}_j)$. Let $\Phi$ be the set of homomorphism matrices $\vp$ such that $\vp_{ij}$ are injective for any $1\le i,j\le 2$. Recall that the motif counting statistic defined in~\eqref{eq:est-sub-count}.
The second moment under $\maP_1$ is given by \begin{align}\label{eq:second-moment-P}
   \E_{\maP_1}\qth{\maT_{\maM}^2}= \sum_{\sfM_1,\sfM_2\in \maM} \omega_{\sfM_1}\omega_{\sfM_2} \sum_{\vp\in \Phi}\E_{\maP_1}\qth{\prod_{i,j}\hom_{\vp_{ij}}(\sfM_i,\bar{G}_j)}.
\end{align}
Given a homomorphism matrix $\vp\in \Phi$, we define the motif $\sfH_{ij}$ induced by $\vp_{ij}$ as\begin{align}\label{eq:Hij}
    V(\sfH_{ij})\triangleq \sth{\vp_{ij}(u):u\in V(\sfM_i)},\quad E(\sfH_{ij}) = \sth{\vp_{ij}(u)\vp_{ij}(v):uv\in E(\sfM_i)}.
\end{align}
The number of node overlap on the graph $\bar{G}_j$ is defined as $\sfn_j\triangleq |V(\sfH_{1j})\cap V(\sfH_{2j})|$. The injective homomorphism matrix can be partitioned into three types according to the node overlap size:
\begin{itemize}
\item \emph{Discrepant overlap:} $\Phi_D\triangleq \sth{\vp\in \Phi:\sfn_1\notin [\sfn_2/2,2\sfn_2]}$.
    \item \emph{Balanced overlap:} $\Phi_B\triangleq \sth{\vp\in \Phi:\sfn_1\in [\sfn_2/2,2\sfn_2],\sfn_2>0}$.
    \item \emph{Null overlap:} $\Phi_N\triangleq \sth{\vp\in \Phi:\sfn_1 = \sfn_2 = 0}$.
\end{itemize}

By~\eqref{eq:second-moment-P}, a key quantity for the second moment is $\E_{\maP_1}\qth{\prod_{i,j}\hom_{\vp_{ij}}(\sfM_i,\bar{G}_j)}$, which will be characterized by different node overlap types by the following Lemma.

\begin{lemma}\label{lem:varphi-bound}
Assume $\maM$ is \emph{$C$-admissible} with constant $\epsilon_0$ for Condition~\ref{cond:subgraph} in Definition~\ref{def:admissible}. For any $\vp\in \Phi$, let $F(\vp)\triangleq \E_{\maP_1}\qth{\prod_{i,j=1}^2\frac{\hom_{\vp_{ij}}(\sfM_i,\bar{G}_j)}{\sqrt{(p(1-p))^{\en(\sfM_i)}}}}$.
\begin{itemize}
    \item If $\vp\in \Phi_D$, then   
    \begin{align}\label{eq:lem-vp-term3}
    F(\vp) = 0;
    \end{align}
    \item If $\vp\in \Phi_B$, then 
    \begin{align}\label{eq:lem-vp-term2}
     F(\vp)\le&~ \pth{\frac{2C}{n}}^{\vn(\sfM_1)+\vn(\sfM_2)-\sfn_1-\sfn_2}\qth{\indc{\sfH_{11}=\sfH_{21},\sfH_{12} = \sfH_{22}}+3n^{-\epsilon_0/2}\pth{4C}^{2C}};
    \end{align}
    \item  If $\vp\in \Phi_N$, then 
\begin{align}\label{eq:lem-vp-term1}
    F(\vp)
= \frac{1+\indc{\sfM_1= \sfM_2}}{n!}(n-\vn(\sfM_1)-\vn(\sfM_2))!\aut(\sfM_1)\aut(\sfM_2)\rho^{\en(\sfM_1)+\en(\sfM_2)}.
    \end{align}
\end{itemize}
\end{lemma}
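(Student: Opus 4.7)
The plan is to evaluate $F(\vp)$ by conditioning on the latent bijection $\pi$ and exploiting the product structure $\hom_{\vp_{ij}}(\sfM_i,\bar G_j)=\prod_{e\in E(\sfM_i)}\beta_{\vp_{ij}(e)}(\bar G_j)$. Under $\maP_1$ with $\pi$ fixed, the pairs $(\beta_e(\bar G_1),\beta_{\pi(e)}(\bar G_2))$ are independent centered bivariate Bernoullis with correlation $\rho$, so the conditional expectation factorises over these pairs. Setting $\alpha(e)=\indc{e\in\vp_{11}(E(\sfM_1))}+\indc{e\in\vp_{21}(E(\sfM_2))}\in\sth{0,1,2}$ and $\gamma(f)$ analogously on the $\bar G_2$ side, each pair contributes $\E[\beta_e^{\alpha(e)}\beta_{\pi(e)}^{\gamma(\pi(e))}]$, which vanishes whenever exactly one of $\alpha(e),\gamma(\pi(e))$ equals $1$ while the other equals $0$, and otherwise is an explicit polynomial in $p,\rho$ (for instance $\rho p(1-p)$ when $\alpha=\gamma=1$, and $p(1-p)$ when $\sth{\alpha,\gamma}=\sth{2,0}$). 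After the normalisation $(p(1-p))^{-(\en(\sfM_1)+\en(\sfM_2))/2}$ built into $F(\vp)$, each ``matched'' pair contributes a factor of $\rho$ or $O(1)$.

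For the null overlap regime, disjointness of $V(\sfH_{11}),V(\sfH_{21})$ and of $V(\sfH_{12}),V(\sfH_{22})$ forces $\alpha,\gamma\in\sth{0,1}$, so $\E[A\mid\pi]$ is nonzero only when $\pi$ bijects $\vp_{11}(E(\sfM_1))\cup\vp_{21}(E(\sfM_2))$ onto $\vp_{12}(E(\sfM_1))\cup\vp_{22}(E(\sfM_2))$. Because the two pieces on each side are vertex disjoint, the only feasible assignments are $\pi\colon\sfH_{i1}\xrightarrow{\sim}\sfH_{i2}$ for $i=1,2$, plus the swap when $\sfM_1=\sfM_2$. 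Counting the $\aut(\sfM_i)$ automorphisms per side, the $(n-\vn(\sfM_1)-\vn(\sfM_2))!$ free extensions of $\pi$, and the factor $\rho^{\en(\sfM_1)+\en(\sfM_2)}$ from the matched pairs, and dividing by $n!$ for the uniform measure on $\pi$, produces exactly the stated closed form.

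For the balanced overlap regime the same factorisation is carried out, but the combinatorics is sharpened by further splitting $\pi$ according to its action on the partition of $V(\sfH_{11})\cup V(\sfH_{21})$ into the overlap and the free vertices. The dominant contribution comes from $\pi$ that maps $\sfH_{11}\cup\sfH_{21}$ isomorphically onto $\sfH_{12}\cup\sfH_{22}$; when the two images completely coincide on each side this collapses to the indicator $\indc{\sfH_{11}=\sfH_{21},\,\sfH_{12}=\sfH_{22}}$. Partial alignments are enumerated by at most $(4C)^{2C}$ vertex relabelings within the overlap, while the $\vn(\sfM_1)+\vn(\sfM_2)-\sfn_1-\sfn_2$ free vertices of $\pi$ supply the $(2C/n)^{\vn(\sfM_1)+\vn(\sfM_2)-\sfn_1-\sfn_2}$ prefactor via estimates of the form $(n-k)!/n!\le n^{-k}$. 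Finally, the $n^{-\epsilon_0/2}$ in the error term comes from invoking Condition~\ref{cond:subgraph} of Definition~\ref{def:admissible} on the induced overlap subgraph $\sfM'\neq\emptyset$: each extra matched edge in the overlap carries a factor $p^{\en(\sfM')}$, and the density bound $n^{\vn(\sfM')}p^{\en(\sfM')}\ge n^{\epsilon_0}$ converts this into the required $n^{-\epsilon_0/2}$ decay after pulling out the vertex combinatorics.

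The principal obstacle is the discrepant case, where the claim is the exact vanishing $F(\vp)=0$. The strategy is to promote the edge-pairing conditions derived above into vertex-level rigidity: for $\E[A\mid\pi]\neq 0$, a vertex $v\in V(\sfH_{11})\cap V(\sfH_{21})$ incident to at least one $\alpha=1$ edge must be sent by $\pi$ into $V(\sfH_{12})\cap V(\sfH_{22})$, and the symmetric argument applied to $\pi^{-1}$ yields the reverse inclusion, forcing $\sfn_1=\sfn_2$. The subtle point, and the true technical difficulty, is ruling out the ``all $\alpha=2$'' escape route, in which $v$ is attached to the overlap only through $\alpha=2$ (or $\gamma=2$) edges whose per-pair expectation is already nonzero without a dedicated partner; a careful bookkeeping of these high-multiplicity edges, using that the maximum degree in $\sfM_i$ is bounded by the constant $d$, shows that such configurations still cannot satisfy the global edge-matching unless $\sfn_1$ and $\sfn_2$ lie in a constant-factor window, which motivates the $[1/2,2]$ tolerance in the definition of $\Phi_B$ and delivers the required vanishing on $\Phi_D$.
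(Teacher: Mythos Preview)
Your null-overlap argument is essentially the paper's. The balanced-overlap sketch is in the right direction but garbles the mechanism: the high-order moments contribute \emph{negative} powers of $p$ (namely $p^{-1/2}$ for $(\alpha,\gamma)\in\{(1,2),(2,1)\}$ and $p^{-1}$ for $(2,2)$), not a factor $p^{\en(\sfM')}$. The paper packages these by introducing $\sfS_1=\sfI_1\cap\pi^{-1}(\sfI_2\cup\sfT_2)$ and $\sfS_2=\pi(\sfI_1\cup\sfT_1)\cap\sfI_2$, then takes a union bound over all subgraphs $\ti\sfS_i\subseteq\sfI_i$ and applies Condition~\ref{cond:subgraph} to each $\ti\sfS_i$ in the form $n^{-\vn(\ti\sfS_i)/2}p^{-\en(\ti\sfS_i)/2}\le n^{-\epsilon_0/2}$. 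Note also that the lemma assumes only $C$-admissibility, which carries no degree bound $d$; the bounded-degree bookkeeping you invoke is neither available nor needed.

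The real gap is the discrepant case. Your plan tracks vertices of the \emph{overlap} $V(\sfH_{11})\cap V(\sfH_{21})$ into the overlap on the other side, but the implication you state is false: if $v$ lies in the overlap and is incident to an $\alpha=1$ edge $e$, then nonvanishing only forces $\pi(e)\in E(\sfH_{12}\cup\sfH_{22})$, hence $\pi(v)\in V(\sfH_{12})\cup V(\sfH_{22})$, not the intersection. The paper argues in the opposite direction, tracking the \emph{symmetric difference}. Every edge $e\in E(\sfT_1)=E(\sfH_{11})\triangle E(\sfH_{21})$ appears to the first power, so $\E_{\maP_1\mid\pi}[\cdots]\neq 0$ requires $\pi(e)\in E(\sfH_{12}\cup\sfH_{22})$, whence $\pi(V(\sfT_1))\subseteq V(\sfH_{12}\cup\sfH_{22})$. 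Since $|V(\sfT_1)|\ge\vn(\sfM_1)+\vn(\sfM_2)-2\sfn_1$ while $|V(\sfH_{12}\cup\sfH_{22})|=\vn(\sfM_1)+\vn(\sfM_2)-\sfn_2$, this inclusion is impossible when $2\sfn_1<\sfn_2$; the mirror argument with $\pi^{-1}$ handles $2\sfn_2<\sfn_1$. There is no ``all $\alpha=2$'' escape route here because symmetric-difference edges are first-power by definition, and no degree bound enters.
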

The proof of Lemma~\ref{lem:varphi-bound} is deferred to Appendix~\ref{apd:proof-lemma-varphi-bound}. By~\eqref{eq:lem-vp-term3} in Lemma~\ref{lem:varphi-bound}, if suffices to consider $\vp \in \Phi_B$ and $\vp\in \Phi_N$.

\textbf{Balanced overlap: $\vp\in \Phi_B$.\ \ }For any $\vp\in\Phi_B$, we have $\sfn_1\in [\sfn_2/2,2\sfn_2]$ and $\sfn_2>0$,
where $\sfn_j = |V(\sfH_{1j})\cap V(\sfH_{2j})|$ and the motif $\sfH_{ij}$ induced by $\vp_{ij}$ defined in~\eqref{eq:Hij}.
We note that \begin{align}
    \nonumber \sum_{\vp_{11},\vp_{21}}\indc{\sfn_1=i} &\overset{\mathrm{(a)}}{=} \frac{n!}{(n-\vn(\sfM_1))!}\binom{\vn(\sfM_1)}{i}\binom{\vn(\sfM_2)}{i} \frac{i!(n-\vn(\sfM_1))!}{(n-\vn(\sfM_1)-\vn(\sfM_2)+i)!}\\\label{eq:sum-indc-V1V2}&\overset{\mathrm{(b)}}{\le} \frac{n!}{(n-\vn(\sfM_1))!} \vn(\sfM_1)^{2i} n^{\vn(\sfM_1)-i},
\end{align}
where $\mathrm{(a)}$ is because there are $\frac{n!}{(n-\vn(\sfM_1))!}$ choices for $\vp_{11}$, and when given $\vp_{11}$, there are $\binom{\vn(\sfM_1)}{i}\binom{\vn(\sfM_2)}{i}i!$ choices for mapping $i$ vertices from $V(\sfM_2)$ to $V(\sfH_{11})$ and $\frac{(n-\vn(\sfM_1))!}{(n-\vn(\sfM_1)-\vn(\sfM_2)+i)!}$ choices for mapping the remaining $\vn(\sfM_2)-i$ vertices to $V(\bar{G}_1)\backslash V(\sfH_{11})$; $\mathrm{(b)}$ applies $\vn(\sfM_1) = \vn(\sfM_2)$, $\binom{\vn(\sfM_1)}{i}\binom{\vn(\sfM_2)}{i}i!\le (\vn(\sfM_1))^{2i}$ and $\frac{(n-\vn(\sfM_1))!}{(n-\vn(\sfM_1)-\vn(\sfM_2)+i)!}\le n^{\vn(\sfM_1)-i}$.
Similarly, \begin{align}\label{eq:sum-indc-V3V4}
    \sum_{\vp_{12},\vp_{22}}\indc{\sfn_2=i}\le \frac{n!}{(n-\vn(\sfM_1))!} \vn(\sfM_1)^{2i} n^{\vn(\sfM_1)-i}.
\end{align}

Let  \begin{align*}
    f(\sfn_1,\sfn_2)\triangleq 3n^{-\epsilon_0/2}(p(1-p))^{\en(\sfM_1)+\en(\sfM_2)} \pth{\frac{2C}{n}}^{\vn(\sfM_1)+\vn(\sfM_2)-\sfn_1-\sfn_2} \pth{4C}^{2C}.
\end{align*}
By~\eqref{eq:lem-vp-term2} in  Lemma~\ref{lem:varphi-bound}, since $\vn(\sfM_1) = \sfn_1$ and $\vn(\sfM_2) = \sfn_2$ when $\sfH_{11} = \sfH_{21}$ and $\sfH_{12} = \sfH_{22}$, we have \begin{align}
    \nonumber &~\omega_{\sfM_1}\omega_{\sfM_2}\sum_{\vp\in \Phi_B}  \E_{\maP_1}\qth{\prod_{i,j}\hom_{\vp_{ij}}(\sfM_i,\bar{G}_j)}\\\nonumber =&~\omega_{\sfM_1}\omega_{\sfM_2}\sum_{\vp\in \Phi} \indc{0<\frac{1}{2}\sfn_2\le \sfn_1\le 2\sfn_2} \E_{\maP_1}\qth{\prod_{i,j}\hom_{\vp_{ij}}(\sfM_i,\bar{G}_j)}\\\label{eq:prop2-sum-1}\le&~\omega_{\sfM_1}\omega_{\sfM_2}  \sum_{\vp\in \Phi} \indc{0<\frac{1}{2}\sfn_2\le \sfn_1\le 2\sfn_2} f(\sfn_1,\sfn_2)\\\label{eq:prop2-sum-2}
    &+\omega_{\sfM_1}\omega_{\sfM_2} (p(1-p))^{\en(\sfM_1)+\en(\sfM_2)} \sum_{\vp\in \Phi} \indc{\sfH_{11} = \sfH_{21}} \indc{\sfH_{12} = \sfH_{22}} .
\end{align}
 We note that \begin{align*}
    \sum_{\vp\in \Phi} \indc{\sfH_{11}= \sfH_{21}}\indc{\sfH_{12} = \sfH_{22}} &= \sum_{\vp_{11},\vp_{21}}\indc{\sfH_{11} = \sfH_{21}} \sum_{\vp_{12},\vp_{22}}  \indc{\sfH_{12} = \sfH_{22}}\\ &= \pth{\frac{n! \aut(\sfM_1)}{(n-\vn(\sfM_1))!}}^2 \indc{\sfM_1=\sfM_2},
\end{align*}
where the last equality holds because $\sfH_{11} = \sfH_{21}$ implies $\sfM_1 = \sfM_2$, yielding $\frac{n!\aut(\sfM_1)}{(n-\vn(\sfM_1))!}$ choices for $\vp_{11}$ and $\vp_{21}$, and similarly $\sfH_{12} = \sfH_{22}$ implies $\sfM_1 = \sfM_2$, also yielding $\frac{n! \aut(\sfM_1)}{(n-\vn(\sfM_1))!}$ choices. 
Recall that $\omega_{\sfM_1}=\frac{\rho^{\en(\sfM_1)} (n-\vn(\sfM_1))!}{n! (p(1-p))^{\en(\sfM_1)}\aut(\sfM_1)}$ and $\omega_{\sfM_2}=\frac{\rho^{\en(\sfM_2)} (n-\vn(\sfM_2))!}{n! (p(1-p))^{\en(\sfM_2)}\aut(\sfM_2)}$.
Therefore,\begin{align}\label{eq:prop2-sum-3}
    \omega_{\sfM_1}\omega_{\sfM_2} (p(1-p))^{\en(\sfM_1)+\en(\sfM_2)} \sum_{\vp\in \Phi} \indc{\sfH_{11} = \sfH_{21}} \indc{\sfH_{12} = \sfH_{22}} = \rho^{2\en(\sfM_1)}\indc{\sfM_1 = \sfM_2}.
\end{align}
We then bound the term~\eqref{eq:prop2-sum-1}: \begin{align}
    \nonumber &~\omega_{\sfM_1}\omega_{\sfM_2}\sum_{\vp\in \Phi} \indc{0<\frac{1}{2}\sfn_2\le \sfn_1\le 2\sfn_2} f(\sfn_1,\sfn_2)\\ \nonumber=&~ \omega_{\sfM_1}\omega_{\sfM_2}\sum_{\vp\in \Phi}\sum_{i=1}^{\vn(\sfM_1)} \sum_{j=i/2}^{\min\sth{2i,\vn(\sfM_2)}} \indc{\sfn_1=i}\indc{\sfn_2=j} f(i,j)\\ \nonumber
    =&~\omega_{\sfM_1}\omega_{\sfM_2}\sum_{i=1}^{\vn(\sfM_1)} \sum_{j=i/2}^{\min\sth{2i,\vn(\sfM_2)}} f(i,j)\sum_{\vp\in \Phi}\indc{\sfn_1=i}\indc{\sfn_2=j}\\ \nonumber
    \overset{\mathrm{(a)}}{\le}&~\omega_{\sfM_1}\omega_{\sfM_2}\sum_{i=1}^{\vn(\sfM_1)} \sum_{j=i/2}^{\min\sth{2i,\vn(\sfM_2)}} f(i,j) \frac{n!\vn(\sfM_1)^{2i}n^{\vn(\sfM_1)-i}}{(n-\vn(\sfM_1))!}\frac{n!\vn(\sfM_2)^{2j}n^{\vn(\sfM_2)-j}}{(n-\vn(\sfM_2))!}\\\label{eq:prop2-sum-4}=&~\rho^{\en(\sfM_1)+\en(\sfM_2)}\pth{ 3n^{-\tfrac{\epsilon_0}{2}}\sum_{i=1}^{\vn(\sfM_1)}\sum_{j=i/2}^{\min\sth{2i,\vn(\sfM_2)}} \pth{2C}^{\vn(\sfM_1)+\vn(\sfM_2)-i-j}  \pth{4C}^{2C} \frac{\vn(\sfM_1)^{2i}\vn(\sfM_2)^{2j}}{\aut(\sfM_1)\aut(\sfM_2)}},
\end{align}
where $\mathrm{(a)}$ follows from~\eqref{eq:sum-indc-V1V2} and~\eqref{eq:sum-indc-V3V4}.
We note that \begin{align*}
    &~\sum_{i=1}^{\vn(\sfM_1)}\sum_{j=i/2}^{\min\sth{2i,\vn(\sfM_2)}} \pth{2C}^{\vn(\sfM_1)+\vn(\sfM_2)-i-j}  \pth{4C}^{2C} \frac{\vn(\sfM_1)^{2i}\vn(\sfM_2)^{2j}}{\aut(\sfM_1)\aut(\sfM_2)} \\
    \overset{\mathrm{(a)}}{\le}&~\sum_{i=1}^{\vn(\sfM_1)}\sum_{j=i/2}^{\min\sth{2i,\vn(\sfM_2)}}\pth{4C}^{\vn(\sfM_1)+\vn(\sfM_2)-i-j}\pth{4C}^{2C} \pth{4C}^{2i+2j}\\
    \overset{\mathrm{(b)}}{\le}&~\vn(\sfM_1)\vn(\sfM_2)\pth{4C}^{\vn(\sfM_1)+\vn(\sfM_2)+2C+\vn(\sfM_1)+\vn(\sfM_2)}\\
    \overset{\mathrm{(c)}}{\le}&~  \pth{4C}^{2\vn(\sfM_1)+2\vn(\sfM_2)+2C+2}\overset{\mathrm{(d)}}{\le} (4C)^{8C},
\end{align*}
where $\mathrm{(a)}$ is because  $\aut(\sfM_1)\ge 1$,  $\aut(\sfM_2)\ge 1$, and $\vn(\sfM_1),\vn(\sfM_2)\le C\le 4C$; $\mathrm{(b)}$ is because $i+j\le \vn(\sfM_1)+\vn(\sfM_2)$; $\mathrm{(c)}$ is because $\vn(\sfM_1)\vn(\sfM_2) \le C^2\le \pth{4C}^2$; $\mathrm{(d)}$ follows from $2\vn(\sfM_1)+2\vn(\sfM_2)+2C+2\le 8C$. Combining this with~\eqref{eq:prop2-sum-1}, \eqref{eq:prop2-sum-2}, \eqref{eq:prop2-sum-3}, and \eqref{eq:prop2-sum-4}, we obtain \begin{align}
    \nonumber &~\sum_{\sfM_1,\sfM_2\in \maM}\omega_{\sfM_1}\omega_{\sfM_2}\sum_{\vp\in \Phi_B} \E_{\maP_1}\qth{\prod_{i,j}\hom_{\vp_{ij}}(\sfM_i,\bar{G}_j)}\\\nonumber =&~\sum_{\sfM_1,\sfM_2\in \maM}\omega_{\sfM_1}\omega_{\sfM_2}\sum_{\vp\in \Phi} \indc{0<\frac{1}{2}\sfn_2\le \sfn_1\le 2\sfn_2} \E_{\maP_1}\qth{\prod_{i,j}\hom_{\vp_{ij}}(\sfM_i,\bar{G}_j)}\\\nonumber \le&~\sum_{\sfM_1,\sfM_2\in \maM} \rho^{\en(\sfM_1)+\en(\sfM_2)}\pth{3n^{-\epsilon_0/2} (4C)^{8C}+\indc{\sfM_1 = \sfM_2}} \\\label{eq:term2}=&~ 3n^{-\epsilon_0/2} (4C)^{8C} \pth{\sum_{\sfM\in \maM} \rho^{\en(\sfM)}}^2+\sum_{\sfM\in\maM} \rho^{2\en(\sfM)}. 
\end{align}

\textbf{Null overlap: $\vp\in \Phi_N$.\ \ }For any $\vp \in \Phi_N$, we have $\sfn_1 = \sfn_2 = 0$, where $\sfn_j = |V(\sfH_{1j})\cap V(\sfH_{2j})|$ and the motif $\sfH_{ij}$ induced by $\vp_{ij}$ defined in~\eqref{eq:Hij}.
Recall~\eqref{eq:Epvarphi12}. For $i=1,2$, 
\begin{align*}
    \E_{\maP_1}\qth{\prod_{j}\hom_{\vp_{ij}}(\sfM_i,\bar{G}_j)} &= \pth{\rho p (1-p)}^{\en(\sfM_i)} \frac{\aut(\sfM_i) (n-\vn(\sfM_i))!}{n!}.
\end{align*}
Combining this with~\eqref{eq:lem-vp-term3} in Lemma~\ref{lem:varphi-bound}, when $\sfn_1 = \sfn_2 = 0$, we have \begin{align}
    \nonumber &~\frac{\E_{\maP_1}\qth{\prod_{i,j}\hom_{\vp_{ij}}(\sfM_i,\bar{G}_j)}}{\prod_{i}\E_{\maP_1}\qth{\prod_{j}\hom_{\vp_{ij}}(\sfM_i,\bar{G}_j)} }\\\nonumber=&~ (1+\indc{\sfM_1 = \sfM_2}) \frac{n!(n-\vn(\sfM_1)-\vn(\sfM_2))! }{(n-\vn(\sfM_1))!(n-\vn(\sfM_2))!}\\\nonumber =&~(1+\indc{\sfM_1 = \sfM_2}) \prod_{i=n-\vn(\sfM_1)-\vn(\sfM_2)+1}^{n-\vn(\sfM_1)} \frac{i+\vn(\sfM_1)}{i}\\\nonumber 
    \le&~(1+\indc{\sfM_1 = \sfM_2})\prod_{i=n-\vn(\sfM_1)-\vn(\sfM_2)+1}^{n-\vn(\sfM_1)} \exp\pth{ \frac{\vn(\sfM_1)}{n-\vn(\sfM_1)-\vn(\sfM_2)+1}}\\\nonumber =&~(1+\indc{\sfM_1 = \sfM_2})\exp\pth{\frac{\vn(\sfM_1)\vn(\sfM_2)}{n-\vn(\sfM_1)-\vn(\sfM_2)+1}},
\end{align} 
where the inequality follows from $i\ge n-\vn(\sfM_1)-\vn(\sfM_2)+1$ and $1+x\le \exp(x)$ for any $x\ge 0$. Let $\kappa =\exp\pth{\frac{C^2}{n-2C+1}}$. 
For any $\sfM_1,\sfM_2\in \maM$, since $\vn(\sfM_1), \vn(\sfM_2)\le C$ for any $\sfM\in \maM$, we have 
\begin{align}\label{eq:prop2-pf-term2-1}
    \frac{\E_{\maP_1}\qth{\prod_{i,j}\hom_{\vp_{ij}}(\sfM_i,\bar{G}_j)}}{\prod_{i}\E_{\maP_1}\qth{\prod_{j}\hom_{\vp_{ij}}(\sfM_i,\bar{G}_j)} }\le \kappa (1+\indc{\sfM_1=\sfM_2}).
\end{align}
Recall that $ \E_{\maP_1}\qth{\maT_{\maM}} = \sum_{\sfM\in \maM}\rho^{2\en(\sfM)}$.
Then, \begin{align}
    \nonumber &~\sum_{\sfM_1,\sfM_2\in \maM}\omega_{\sfM_1}\omega_{\sfM_2}\sum_{\vp\in \Phi_N} \E_{\maP_1}\qth{\prod_{i,j}\hom_{\vp_{ij}}(\sfM_i,\bar{G}_j)} \\\nonumber =&~\sum_{\sfM_1,\sfM_2\in \maM}\omega_{\sfM_1}\omega_{\sfM_2}\sum_{\vp\in \Phi} \indc{\sfn_1 = \sfn_2 = 0} \E_{\maP_1}\qth{\prod_{i,j}\hom_{\vp_{ij}}(\sfM_i,\bar{G}_j)} \\\nonumber\overset{\mathrm{(a)}}{\le}&~\sum_{\sfM_1,\sfM_2\in \maM}\omega_{\sfM_1}\omega_{\sfM_2}\sum_{\vp\in \Phi} \kappa (1+\indc{\sfM_1 = \sfM_2})  \prod_{i=1}^2\E_{\maP_1}\qth{\prod_{j}\hom_{\vp_{ij}}(\sfM_i,\bar{G}_j)}\\\nonumber=&~\kappa \pth{\E_{\maP_1}\qth{\maT_{\maM}}}^2+\kappa \sum_{\sfM_1\in \maM} \omega_{\sfM_1}^2 \sum_{\vp\in \Phi} \pth{\E_{\maP_1}\qth{\prod_{j}\hom_{\vp_{1j}}(\sfM_1,\bar{G}_j)}}^2\\\label{eq:term3}=&~\kappa \pth{\sum_{\sfM\in \maM} \rho^{2\en(\sfM)}}^2+\kappa \sum_{\sfM\in \maM} \rho^{4\en(\sfM)},
\end{align}
where $\mathrm{(a)}$ follows from~\eqref{eq:prop2-pf-term2-1}; the last equality is because  $\E_{\maP_1}\qth{\prod_{j}\hom_{\vp_{1j}}(\sfM_1,\bar{G}_j)} = (\rho p (1-p))^{\en(\sfM_1)} \frac{\aut(\sfM_1) (n-\vn(\sfM_1))!}{n!}$, $|\Phi|=\pth{\frac{n!}{(n-\vn(\sfM_1))!}}^4$ and $\omega_{\sfM_1} = \frac{\rho^{\en(\sfM_1)}(n-\vn(\sfM_1))!\aut(\sfM_1)}{n!(p(1-p))^{\en(\sfM_1)}}$.
By Lemma~\ref{lem:varphi-bound}, \begin{align*}
    \sum_{\sfM_1,\sfM_2\in \maM}\omega_{\sfM_1}\omega_{\sfM_2}\sum_{\vp\in \Phi_D}  \E_{\maP_1}\qth{\prod_{i,j}\hom_{\vp_{ij}}(\sfM_i,\bar{G}_j)}=0.
\end{align*}
Combining this with~\eqref{eq:term2} and~\eqref{eq:term3}, and noting that $\Phi$ decomposes as the disjoint union
 $\Phi = \Phi_D\cup \Phi_B\cup \Phi_N$,
we obtain  \begin{align*}
    \E_{\maP_1}\qth{\maT_{\maM}^2} =&  \sum_{\sfM_1,\sfM_2\in \maM}\omega_{\sfM_1}\omega_{\sfM_2}\sum_{\vp\in \Phi_D} \E_{\maP_1}\qth{\prod_{i,j}\hom_{\vp_{ij}}(\sfM_i,\bar{G}_j)}\\&\quad+\sum_{\sfM_1,\sfM_2\in \maM}\omega_{\sfM_1}\omega_{\sfM_2}\sum_{\vp\in \Phi_B}  \E_{\maP_1}\qth{\prod_{i,j}\hom_{\vp_{ij}}(\sfM_i,\bar{G}_j)}\\&\quad+\sum_{\sfM_1,\sfM_2\in \maM}\omega_{\sfM_1}\omega_{\sfM_2}\sum_{\vp\in \Phi_N} \E_{\maP_1}\qth{\prod_{i,j}\hom_{\vp_{ij}}(\sfM_i,\bar{G}_j)}\\
    \le&~  3n^{-\epsilon_0/2} (4C)^{8C} \pth{\sum_{\sfM\in \maM} \rho^{\en(\sfM)}}^2+\sum_{\sfM\in\maM} \rho^{2\en(\sfM)}\\&\quad+\kappa \pth{\sum_{\sfM\in \maM} \rho^{2\en(\sfM)}}^2+\kappa \sum_{\sfM\in \maM} \rho^{4\en(\sfM)}.
\end{align*}
Therefore, we conclude that \begin{align*}
    &~\maP_1\pth{\maT_{\maM} <\tau }\le \frac{4\var_{\maP_1}\qth{\maT_{\maM}}}{\pth{\E_{\maP_1}\qth{\maT_{\maM}}}^2} \\=&~ \frac{4\pth{\E_{\maP_1}\qth{\maT_{\maM}^2}-\pth{\E_{\maP_1}\qth{\maT_{\maM}}}^2}}{\pth{\E_{\maP_1}\qth{\maT_{\maM}}}^2} \\
    \le&~ \frac{4\pth{3n^{-\frac{\epsilon_0}{2}} (4C)^{8C} \pth{\sum_{\sfM\in \maM} \rho^{\en(\sfM)}}^2+\sum_{\sfM\in\maM} \rho^{2\en(\sfM)}}}{\pth{\sum_{\sfM\in \maM} \rho^{2\en(\sfM)}}^2}\\&+\frac{4\pth{(\kappa-1) \pth{\sum_{\sfM\in \maM} \rho^{2\en(\sfM)}}^2+\kappa \sum_{\sfM\in \maM} \rho^{4\en(\sfM)}}}{\pth{\sum_{\sfM\in \maM} \rho^{2\en(\sfM)}}^2} \\
    \overset{\mathrm{(a)}}{\le}&~  4\Bigg(3n^{-\frac{\epsilon_0}{2}} (4C)^{8C} \Big(\frac{\sum_{\sfM\in \maM} \rho^{\en(\sfM)}}{\sum_{\sfM\in \maM} \rho^{2\en(\sfM)}}\Big)^2+\frac{\exp\big(\frac{C^2}{n-2C+1}\big)+1}{\sum_{\sfM\in \maM} \rho^{2\en(\sfM)}}+\exp\Big(\frac{C^2}{n-2C+1}\Big)-1\Bigg)\\
    \overset{\mathrm{(b)}}{\le}&~4\Bigg(3n^{-\frac{\epsilon_0}{2}} (4C)^{8C} \rho^{-2C}+\frac{\exp\big(\frac{C^2}{n-2C+1}\big)+1}{\sum_{\sfM\in \maM} \rho^{2\en(\sfM)}}+\exp\Big(\frac{C^2}{n-2C+1}\Big)-1\Bigg),
\end{align*}
where $\mathrm{(a)}$ follows from $\sum_{\sfM\in \maM} \rho^{4\en(\sfM)}\le \sum_{\sfM\in \maM} \rho^{2\en(\sfM)}$; $\mathrm{(b)}$ is because $\frac{\rho^{\en(\sfM)}}{\rho^{2\en(\sfM)}}\le \rho^{-C}$ for all $\sfM\in \maM$ implies $\pth{\frac{\sum_{\sfM\in \maM} \rho^{\en(\sfM)}}{\sum_{\sfM\in \maM} \rho^{2\en(\sfM)}}}^2\le \rho^{-2C}$.

\section{Proof of Lemmas}

\subsection{Proof of Lemma~\ref{lem:motif-family-lwbd}}\label{apd:proof-lem-motif-family-lwbd}

We first upper bound the automorphism numbers for any $\sfM\in \maM(N_\sfv,N_\sfe,d)$. Given any $\sfM\in \maM(N_\sfv,N_\sfe,d)$, let $\maX(\sfM)$ be the automorphism group of $\sfM$:
\[
\maX(\sfM)\triangleq \sth{\varphi\text{ bijection}:V(\sfM)\mapsto V(\sfM):uv\in E(\sfM)\iff \varphi(u)\varphi(v)\in E(\sfM)}.
\]


Let $V_0(\sfM)\triangleq \sth{v_{0,0},v_{0,1},v_{0,2},v_{0,3}}$.
For $b\in\{1,2\}$, define $\mathcal{Y}_b(\sfM)$ to be the set of bijective mapping $\psi:V(\sfM)\backslash V_0(\sfM)\mapsto V(\sfM)\backslash V_0(\sfM)$  such that for every $1\le i\le d-1$,
\[
\psi(v_{i,1})\in \maN(v_{0,b})\quad\text{and}\quad \psi(v_{i,j})\in \maN(\psi(v_{i,j-1}))\ \text{for all }2\le j\le \ell,
\]
where $\maN(v)$ denotes the neighbor set of $v$ in $\sfM$. Set $\mathcal{Y}(\sfM)\triangleq \mathcal{Y}_1(\sfM)\cup\mathcal{Y}_2(\sfM)$.
We note that $\deg(v)\le d$ for all $v\in V(\sfM)$, where $\deg(v)$ denotes the degree of $v$.
We conclude that each path contributes at most $d^\ell$ possibilities, hence
\[
|\mathcal{Y}_1(\sfM)|, |\mathcal{Y}_2(\sfM)|\le d^{(d-1)\ell}\quad\text{and}\quad |\mathcal{Y}(\sfM)|\le 2d^{(d-1)\ell}.
\]
Define the restriction map
\[
\Pi:\ \mathcal{X}(\sfM)\longrightarrow \mathcal{Y}(\sfM)=\mathcal{Y}_1(\sfM)\cup\mathcal{Y}_2(\sfM),\qquad
\Pi(\varphi):=\varphi\!\restriction_{V(M)\backslash V_0(M)}.
\]
This map is well-defined because any automorphism preserves adjacency, hence the image
of each path $P_i$ under $\varphi$ satisfies the constraints in the definition of $\mathcal{Y}_b(\sfM)$,
with $b\in\{1,2\}$ determined by whether $\varphi(v_{0,1})=v_{0,b}$.

We claim that $\Pi$ is injective.
Indeed, if $\Pi(\varphi_1)=\Pi(\varphi_2)$, then $\varphi_1$ and $\varphi_2$ agree on all path
vertices $V(\sfM)\backslash V_0(\sfM)$ and induce the same choice of $b$ for the central pair
$\{v_{0,1},v_{0,2}\}$. The remaining two special vertices $v_{0,0}$ and $v_{0,3}$ are uniquely
determined by adjacency (they are the extremity vertices). Hence $\varphi_1=\varphi_2$.
Therefore,
\[
\aut(\sfM)=|\mathcal{X}(\sfM)|\le |\mathcal{Y}(\sfM)|\le 2d^{(d-1)\ell}.
\]
In particular, every $\sfM\in\mathcal{M}(N_v,N_e,d)$ satisfies
\(
\aut(\sfM)\le 2d^{(d-1)\ell}.
\)

We then derive the lower bound for $|\maM(N_\sfv,N_\sfe,d)|$. For any $1\le i<j\le d-1$, there are $\ell$ edges between paths $P_i$ and $P_j$, with distinct vertices at each point. It is equivalent to picking a bijective mapping between two vertices sets $\sth{v_{i,1},\cdots, v_{i,\ell}}$ and $\sth{v_{j,1},\cdots, v_{j,\ell}}$, where we have $\ell!$ options. 
Hence, there are $(\ell !)^{\binom{d-1}{2}}$ labeled constructions in total. 
Let $S_{\ell}$ be the set of permutations from $[\ell]$ to $[\ell]$. Define $\mathcal{Z}\triangleq\prod_{1\le i<j\le d-1} S_{\ell}$. Then $|\maZ| =(\ell!)^{\binom{d-1}{2}}$.
Let $\widetilde{\mathcal{M}}$ be the set of labeled motifs on the fixed labels
\(
\{v_{i,t}: i=1,\dots,d-1,\ t=1,\dots,\ell\}\ \cup\ \{v_{0,0},v_{0,1},v_{0,2},v_{0,3}\}.
\)
For $z=(\pi_{i,j})_{i<j}\in\mathcal Z$, define $\Phi_{\mathcal Z}(z)=H(z)\in\widetilde{\mathcal M}$ on labels
$\{v_{i,t}: i=1,\dots,d-1,\ t=1,\dots,\ell\}\cup\{v_{0,0},v_{0,1},v_{0,2},v_{0,3}\}$ with
\[
E(H(z))=E_{\mathrm{core}}\cup E_{\mathrm{path}}\cup E_{\mathrm{att}}\cup E_{\mathrm{cross}}(z),
\]
where
\begin{align*}
    E_{\mathrm{core}}&=\{v_{0,0}v_{0,1},\ v_{0,2}v_{0,3}\},\\
E_{\mathrm{path}}&=\{v_{i,t}v_{i,t+1}: i=1,\dots,d-1,\ t=1,\dots,\ell-1\},\\
E_{\mathrm{att}}&=\{v_{0,b}v_{i,1}: i=1,\dots,d-1\}\ \text{with fixed }b\in\{1,2\},\\
E_{\mathrm{cross}}(z)&=\{v_{i,t}v_{j,\pi_{i,j}(t)}: 1\le i<j\le d-1,\ t\in[\ell]\}.
\end{align*}
If $\Phi_{\mathcal{Z}}(z_1)=\Phi_{\mathcal{Z}}(z_2)$, then for every pair $(i,j)$ with $1\le i<j\le d-1$, the induced
subgraphs on $\{v_{i,1},\dots,v_{i,\ell}\}\cup\{v_{j,1},\dots,v_{j,\ell}\}$ are  same, which uniquely recovers $\pi_{i,j}$. Hence $\Phi_{\mathcal{Z}}$ is injective and
$|\Phi_{\mathcal{Z}}(\mathcal{Z})|=|\mathcal{Z}|$.

For any $\sfM\in\mathcal{M}(N_\sfv,N_\sfe,d)$, let
$\mathsf{Num}(\sfM)$ be the number of labeled realizations of $\sfM$ inside $\Phi_{\mathcal{Z}}(\mathcal{Z})$.
Two such labelings differ by an automorphism of $\sfM$, so
\[
\mathsf{Num}(\sfM)\le\aut(\sfM)\le 2d^{(d-1)\ell}.
\]
Therefore,
\[
(\ell!)^{\binom{d-1}{2}}\le |\mathcal{Z}|=|\Phi_{\mathcal{Z}}(\mathcal{Z})|
=\sum_{\sfM\in\mathcal{M}(N_\sfv,N_\sfe,d)} \mathsf{Num}(\sfM)\le\ 2d^{(d-1)\ell}\,|\mathcal{M}(N_\sfv,N_\sfe,d)|.
\]
Consequently,
\begin{align*}  |\maM(N_\sfv,N_\sfe,d)|\ge \frac{1}{2d^{(d-1)\ell}} \pth{\ell!}^{\binom{d-1}{2}}&\overset{\mathrm{(a)}}{\ge}\frac{1}{2d^{(d-1)\ell}} \pth{\frac{\ell}{e}}^{\ell\binom{d-1}{2}}\\&\overset{\mathrm{(b)}}{=}\frac{1}{2}\pth{\frac{2(N_\sfe-d-1)}{ed^{d/(d-2)}(d-1)}}^{\frac{d-2}{d}\cdot (N_\sfe-d-1)},
\end{align*}
where $\mathrm{(a)}$ is because $\ell!\ge \pth{\frac{\ell}{e}}^{\ell}$ by Stirling's approximation; $\mathrm{(b)}$ follows from $\ell = \frac{N_\sfe-d-1}{\binom{d}{2}}$.

On the other hand, the upper bound of $|\maM(N_\sfv,N_\sfe,d)|$ can be directly derived by \begin{align*}
    |\maM(N_\sfv,N_\sfe,d)|\le (\ell!)^{\binom{d-1}{2}}\overset{\mathrm{(a)}}{\le} \ell^{\ell\binom{d-1}{2}}\overset{\mathrm{(b)}}{=}\pth{\frac{2(N_\sfe-d-1)}{d(d-1)}}^{\frac{d-2}{d}\cdot(N_\sfe-d-1)}, 
\end{align*}
where $\mathrm{(a)}$ follows from $\ell!\le \ell^\ell$ and $\mathrm{(b)}$ is because $\ell = \frac{N_\sfe-d-1}{\binom{d}{2}}$.

\subsection{Proof of Lemma~\ref{lem:varphi-bound}}\label{apd:proof-lemma-varphi-bound}

\textbf{Case 1: Discrepant overlap\ \ }We first consider $\vp \in \Phi_D$. Recall the motif $\sfH_{ij}$ induced by $\vp$ defined in~\eqref{eq:Hij} and $G\cap G',G\triangle G'$ defined in~\eqref{eq:intersec-graph} and~\eqref{eq:triangle-graph}.
    Let   \begin{align}\label{eq:def_of_T1T2}
        \sfI_j = \sfH_{1j}\cap \sfH_{2j},\quad \sfT_j = \sfH_{1j}\triangle \sfH_{2j},\quad \text{for any }j\in \sth{1,2}.
    \end{align}
    We note that \begin{align*}
        \E_{\maP_1}\qth{\prod_{i,j}\hom_{\vp_{ij}}(\sfM_i,\bar{G}_j)}  =\E_\pi \E_{\maP_1|\pi}\qth{\prod_{j}\pth{\prod_{e\in E(\sfI_j)} \beta_e^2(\bar{G}_j) \prod_{e\in E(\sfT_j)} \beta_e(\bar{G}_j)} }.
    \end{align*}
Given any $\pi:V(\bar{G}_1)\mapsto V(\bar{G}_2)$,   for any $e_0\in E(\sfT_1)$, if $\pi(e_0)\notin E(\sfI_2\cup \sfT_2)$, since $\E_{\maP_1|\pi}\qth{\beta_{e_0}(\bar{G}_1)} = 0$ and $\beta_{e_0}(\bar{G}_1)$ is independent with \begin{align*}
         \prod_{e\in E(\sfI_1)} \beta_e^2(\bar{G}_1) \prod_{e\in E(\sfT_1)\backslash e_0} \beta_e(\bar{G}_1)\prod_{e\in E(\sfI_2)}\beta_e^2(\bar{G}_2)\prod_{e\in E(\sfT_2) }\beta_e(\bar{G}_2),
    \end{align*}
    then $\E_{\maP_1|\pi}\qth{\prod_{j}\pth{\prod_{e\in E(\sfI_j)} \beta_e^2(\bar{G}_j) \prod_{e\in E(\sfT_j)} \beta_e(\bar{G}_j)} }=0$. 
    Therefore,  two necessary conditions for $\E_{\maP_1|\pi}\qth{\prod_{j}\pth{\prod_{e\in E(\sfI_j)} \beta_e^2(\bar{G}_j) \prod_{e\in E(\sfT_j)} \beta_e(\bar{G}_j)} }\neq 0$ are $\pi(V(\sfT_1))\subseteq V(\sfI_2\cup \sfT_2)$ and $\pi^{-1}(V(\sfT_2))\subseteq V(\sfI_1\cup \sfT_1)$. 
    Since \begin{align*}
        |\pi(V(\sfT_1))| &= |V(\sfH_{11}\triangle \sfH_{21})|\\ &= \vn(\sfM_1)+\vn(\sfM_2)-2|V(\sfH_{11})\cap V(\sfH_{21})|\\&~~~~+|V(\sfH_{11}\triangle \sfH_{21})\cap (V(\sfH_{11})\cap V(\sfH_{21}))|\\&\ge \vn(\sfM_1)+\vn(\sfM_2)-2\sfn_1\end{align*} and \begin{align*} \vn(\sfI_2\cup \sfT_2) = \vn(\sfH_{12}\cup\sfH_{22}) =  \vn(\sfH_{12})+\vn(\sfH_{22})-\sfn_2,
    \end{align*}
when $2\sfn_1<\sfn_2$, we have $|\pi(V(\sfT_1))|>\vn(\sfI_2\cup\sfT_2)$, and thus $\pi(V(\sfT_1))\nsubseteq V(\sfI_2\cup \sfT_2)$. Similarly, when $2\sfn_2<\sfn_1$, we have $\pi^{-1}(V(\sfT_2))\nsubseteq V(\sfI_1\cup \sfT_1)$. Therefore, for any $\vp\in \Phi_D$, 
\begin{align*}
        \E_{\maP_1}\qth{\prod_{i,j}\hom_{\vp_{ij}}(\sfM_i,\bar{G}_j)} = 0.
    \end{align*}

\textbf{Case 2: Balanced overlap.\ \ }We then focus on $\vp \in \Phi_B$. 
For any bijective mapping $\pi:V(\bar{G}_1)\mapsto V(\bar{G}_2)$, let 
\begin{align*}
    \sfE_{i,j}\triangleq\sth{e\in E(\sfH_{11}\cup \sfH_{21}):\sum_{k=1}^2 \indc{e\in E(\sfH_{k1})} = i,\sum_{k=1}^2\indc{\pi(e)\in E(\sfH_{k2})} = j },\quad \forall 0\le i,j\le 2.
\end{align*}
Define $\maS\triangleq \sth{(1,1),(1,2),(2,1),(2,2),(0,2),(2,0)}$. We note that \begin{align}
    \nonumber &~\E_{\maP_1|\pi}\qth{\prod_{i,j}\frac{\hom_{\vp_{ij}}(\sfM_i,\bar{G}_j)}{\sqrt{(p(1-p))^{\en(\sfM_i)}}}}   \\\label{eq:upbd-vp1-4-moment}
    =&~\E_{\maP_1|\pi}\qth{\prod_{(i,j)\in \maS}\prod_{e\in \sfE_{i,j}} \pth{\frac{\beta_e(\bar{G}_1)}{\sqrt{p(1-p)}}}^i\pth{\frac{\beta_{\pi(e)}(\bar{G}_2)}{\sqrt{p(1-p)}}}^{j}} \indc{\pi(\sfT_1)\subseteq \sfI_2\cup \sfT_2,\pi^{-1}(\sfT_2)\subseteq \sfI_1\cup \sfT_1}, 
\end{align}
where the last equality follows from the fact that two necessary conditions for $$\E_{\maP_1|\pi}\qth{\prod_{j}\pth{\prod_{e\in E(\sfI_j)} \beta_e^2(\bar{G}_j) \prod_{e\in E(\sfT_j)} \beta_e(\bar{G}_j)} }\neq 0$$ are $\pi(\sfT_1)\subseteq \sfI_2\cup \sfT_2$ and $\pi^{-1}(\sfT_2)\subseteq \sfI_1\cup \sfT_1$.
For a correlated pair $(e,\pi(e))$, we have $\E_{\maP_1|\pi} \qth{\frac{\beta_{e}(\bar{G}_1)\beta_{\pi(e)}(\bar{G}_2)}{p(1-p)}} = \rho\le 1$ and \begin{align*}
    \E\qth{\frac{\beta_e^2(\bar{G}_1)}{p(1-p)}} = \E\qth{\frac{\beta_{\pi(e)}^2(\bar{G}_2)}{p(1-p)}}=1.
\end{align*}
Combining with~\eqref{eq:upbd-vp1-4-moment} and Lemma~\ref{lem:moment-bound}, we obtain 
\begin{align}
    \nonumber &~\E_{\maP_1|\pi}\qth{\prod_{(i,j)\in \maS}\prod_{e\in \sfE_{i,j}} \pth{\frac{\beta_e(\bar{G}_1)}{\sqrt{p(1-p)}}}^i\pth{\frac{\beta_{\pi(e)}(\bar{G}_2)}{\sqrt{p(1-p)}}}^{j}} \indc{\pi(\sfT_1)\subseteq \sfI_2\cup \sfT_2,\pi^{-1}(\sfT_2)\subseteq \sfI_1\cup \sfT_1}\\\nonumber
    \le&~ \pth{\pth{\prod_{\substack{(i,j)\in \maS\\ i+j=2}} \prod_{e\in \sfE_{i,j}}1 }\pth{\prod_{\substack{(i,j)\in \maS\\ i+j=3}} \prod_{e\in \sfE_{i,j}} \frac{1}{\sqrt{p}} }\pth{\prod_{e\in \sfE_{2,2}} \frac{1}{p}}} \indc{\pi(\sfT_1)\subseteq \sfI_2\cup \sfT_2,\pi^{-1}(\sfT_2)\subseteq \sfI_1\cup \sfT_1}\\\nonumber
    =&~\pth{\frac{1}{\sqrt{p}}}^{|\sfE_{1,2}|+|\sfE_{2,1}|+2|\sfE_{2,2}|} \indc{\pi(\sfT_1)\subseteq \sfI_2\cup \sfT_2,\pi^{-1}(\sfT_2)\subseteq \sfI_1\cup \sfT_1}\\\label{eq:upbd-vp1-4-sqrtp}=&~\pth{\frac{1}{\sqrt{p}}}^{(|\sfE_{1,2}|+| \sfE_{2,2}|)+(|\sfE_{2,1}|+ |\sfE_{2,2}|)} \indc{\pi(\sfT_1)\subseteq \sfI_2\cup \sfT_2,\pi^{-1}(\sfT_2)\subseteq \sfI_1\cup \sfT_1}.
\end{align}

Let $\sfS_1 = \sfI_1\cap \pth{\pi^{-1}(\sfI_2\cup \sfT_2)}$ and $\sfS_2 = \pi(\sfI_1\cup \sfT_1) \cap \sfI_2$. Since $E(\sfI_1)\cap E(\sfT_1)=E(\sfI_2)\cap E(\sfT_2) = \emptyset$, we have \begin{align*}
    |\sfE_{2,1}|+ |\sfE_{2,2}| = \en(\sfS_1),\quad |\sfE_{1,2}|+|\sfE_{2,2}| = \en(\sfS_2).
\end{align*}
We then verify $\pi(\sfS_1\cup \sfT_1) = \sfS_2\cup\sfT_2$ when $\pi(\sfT_1)\subseteq \sfI_2\cup \sfT_2$ and $\pi^{-1}(\sfT_2)\subseteq \sfI_1\cup \sfT_1$. Since $\pi(\sfT_1)\subseteq \sfI_2\cup \sfT_2$, we have $\pi(\sfT_1)\cap  (\sfI_2\cup \sfT_2 )= \pi(\sfT_1)$.
Therefore, \begin{align*}
    \pi(\sfS_1\cup \sfT_1) &= \pth{\pi(\sfI_1)\cap (\sfI_2\cup\sfT_2)} \cup (\pi(\sfT_1))\\ &= \pth{\pi(\sfI_1)\cap (\sfI_2\cup\sfT_2)}\cup \pth{\pi(\sfT_1)\cap (\sfI_2\cup\sfT_2)}=\pi(\sfI_1\cup \sfT_1) \cap (\sfI_2\cup \sfT_2).
\end{align*}
Similarly, $\sfS_2\cup \sfT_2 = \pi(\sfI_1\cup \sfT_1) \cap (\sfI_2\cup \sfT_2)$, and thus we have $\pi(\sfS_1\cup \sfT_1)= \sfS_2\cup \sfT_2 $ when $\pi(\sfT_1)\subseteq \sfI_2\cup \sfT_2$ and $\pi^{-1}(\sfT_2)\subseteq \sfI_1\cup \sfT_1$. Combining this with~\eqref{eq:upbd-vp1-4-moment} and~\eqref{eq:upbd-vp1-4-sqrtp}, we obtain that \begin{align}
    \nonumber &~\E_\pi\E_{\maP_1|\pi}\qth{\prod_{i,j}\frac{\hom_{\vp_{ij}}(\sfM_i,\bar{G}_j)}{\sqrt{(p(1-p))^{\en(\sfM_i)}}}}\\\nonumber\le&~ \E_\pi\qth{\pth{\frac{1}{\sqrt{p}}}^{\en(\sfS_1)+\en(\sfS_2)}\indc{\pi(\sfS_1\cup\sfT_1) = \sfS_2\cup \sfT_2}}\\\nonumber
    \overset{\mathrm{(a)}}{\le}&~\E_\pi\qth{\max_{\ti{\sfS}_1\subseteq \sfI_1,\ti{\sfS}_2\subseteq \sfI_2}\qth{\pth{\frac{1}{\sqrt{p}}}^{\en(\ti{\sfS}_1)+\en(\ti{\sfS}_2)}\indc{\pi(\ti{\sfS}_1\cup\sfT_1) = \ti{\sfS}_2\cup \sfT_2} }}\\\label{eq:sum-prob-pi-p}
    \overset{\mathrm{(b)}}{\le}&~ \sum_{\ti{\sfS}_1\subseteq \sfI_1}\sum_{\ti{\sfS}_2\subseteq \sfI_2} \qth{\pth{\frac{1}{\sqrt{p}}}^{\en(\ti{\sfS}_1)+\en(\ti{\sfS}_2)}\prob{\pi(\ti{\sfS}_1\cup\sfT_1) = \ti{\sfS}_2\cup \sfT_2} },
\end{align}
where $\mathrm{(a)}$ is because $\sfS_1\subseteq \sfI_1$ and $\sfS_2\subseteq \sfI_2$; $\mathrm{(b)}$ applies the union bound.

Recall that $\sfT_1 = \sfH_{11}\triangle \sfH_{21}$ and $\sfT_2 = \sfH_{12}\triangle \sfH_{22}$. 
We note that $\sfH_{ij}$ is connected for all $i,j \in \{1,2\}$, since each $\sfM \in \maM$ is connected.
Since $\ti{\sfS}_i\subseteq \sfI_i = \sfH_{1i}\cap \sfH_{2i}$ for $i\in \sth{1,2}$, by Lemma~\ref{lem:aux-lem},\begin{align*}
    |V(\ti{\sfS}_1\cup \sfT_1)|&\ge \vn(\sfM_1)+\vn(\sfM_2) -2\sfn_1+\vn(\ti{\sfS}_1)+\indc{\ti{\sfS}_1=\emptyset, \sfH_{11}\neq \sfH_{21}},\\
    |V(\ti{\sfS}_2\cup \sfT_2)|&\ge \vn(\sfM_1)+\vn(\sfM_2)-2\sfn_2+\vn(\ti{\sfS}_2)+\indc{\ti{\sfS}_2=\emptyset, \sfH_{12}\neq \sfH_{22}},\\
    \prob{\pi(\ti{\sfS}_1\cup \sfT_1) = \ti{\sfS}_2\cup \sfT_2}&\le \pth{\frac{\max\sth{\vn(\ti{\sfS}_1\cup \sfT_1),\vn(\ti{\sfS}_2\cup \sfT_2)}}{n}}^{\frac{\vn(\ti{\sfS}_1\cup \sfT_1)+\vn(\ti{\sfS}_2\cup \sfT_2)}{2}}.
\end{align*}   

We note that $\max\sth{\vn(\ti{\sfS}_1\cup \sfT_1),\vn(\ti{\sfS}_2\cup \sfT_2)}\le \vn(\sfM_1)+\vn(\sfM_2)\le 2C$.  Therefore, we obtain that \begin{align*}
    &~\prob{\pi(\ti{\sfS}_1\cup \sfT_1) = \ti{\sfS}_2\cup \sfT_2}\\\le&~ \pth{\frac{2C}{n}}^{\frac{\vn(\ti{\sfS}_1\cup \sfT_1)+\vn(\ti{\sfS}_2\cup \sfT_2)}{2}}\\\le&~ \pth{\frac{2C}{n}}^{\vn(\sfM_1)+\vn(\sfM_2)-\sfn_1-\sfn_2+\frac{1}{2}\pth{\vn(\ti{\sfS}_1)+\vn(\ti{\sfS}_2)+\indc{\ti{\sfS}_1=\emptyset, \sfH_{11}\neq \sfH_{21}}+\indc{\ti{\sfS}_2=\emptyset, \sfH_{12}\neq \sfH_{22}}}}.
\end{align*}
Combining this with~\eqref{eq:sum-prob-pi-p}, we have \begin{align}
    \nonumber &~\E_\pi\E_{\maP_1|\pi}\qth{\prod_{i,j}\frac{\hom_{\vp_{ij}}(\sfM_i,\bar{G}_j)}{\sqrt{(p(1-p))^{\en(\sfM_i)}}}}
    \\\nonumber \le&~\pth{\frac{2C}{n}}^{\vn(\sfM_1)+\vn(\sfM_2)-\sfn_1-\sfn_2} \\\label{eq:upbd-vp-term-PhiB}&~\cdot\sum_{\ti{\sfS}_1\subseteq \sfI_1}\sum_{\ti{\sfS}_2\subseteq \sfI_2} \pth{\frac{2C}{n}}^{\frac{1}{2}\pth{\vn(\ti{\sfS}_1)+\vn(\ti{\sfS}_2)+\indc{\ti{\sfS}_1=\emptyset, \sfH_{11}\neq \sfH_{21}}+\indc{\ti{\sfS}_2=\emptyset, \sfH_{12}\neq \sfH_{22}}}}\pth{\frac{1}{\sqrt{p}}}^{\en(\ti{\sfS}_1)+\en(\ti{\sfS}_2)},
\end{align}
where \begin{align*}
    &~\sum_{\ti{\sfS}_1\subseteq \sfI_1}\sum_{\ti{\sfS}_2\subseteq \sfI_2} \pth{\frac{2C}{n}}^{\frac{1}{2}\pth{\vn(\ti{\sfS}_1)+\vn(\ti{\sfS}_2)+\indc{\ti{\sfS}_1=\emptyset, \sfH_{11}\neq \sfH_{21}}+\indc{\ti{\sfS}_2=\emptyset, \sfH_{12}\neq \sfH_{22}}}}\pth{\frac{1}{\sqrt{p}}}^{\en(\ti{\sfS}_1)+\en(\ti{\sfS}_2)}\\
    =&~\Bigg[\sum_{\ti{\sfS}_1\subseteq \sfI_1} \pth{\frac{2C}{n}}^{\frac{\vn(\ti{\sfS}_1)+\indc{\ti{\sfS}_1 = \emptyset, \sfH_{11}\neq \sfH_{21}}}{2}} \pth{\frac{1}{\sqrt{p}}}^{\en(\ti{\sfS}_1)}\Bigg]\\&\cdot \Bigg[\sum_{\ti{\sfS}_2\subseteq \sfI_2} \pth{\frac{2C}{n}}^{\frac{\vn(\ti{\sfS}_2)+\indc{\ti{\sfS}_2 = \emptyset, \sfH_{12}\neq \sfH_{22}}}{2}} \pth{\frac{1}{\sqrt{p}}}^{\en(\ti{\sfS}_2)}\Bigg].
\end{align*}

We note that \begin{align*}
    &~\sum_{\ti{\sfS}_1\subseteq \sfI_1} \pth{\frac{2C}{n}}^{\frac{\vn(\ti{\sfS}_1)+\indc{\ti{\sfS}_1 = \emptyset, \sfH_{11}\neq \sfH_{21}}}{2}} \pth{\frac{1}{\sqrt{p}}}^{\en(\ti{\sfS}_1)}\\ =&~ \pth{\frac{2C}{n}}^{\frac{\indc{\sfH_{11}\neq \sfH_{21}}}{2}}+\sum_{\ti{\sfS}_1\subseteq \sfI_1,\ti{\sfS}_1\neq \emptyset} \pth{\frac{2C}{n}}^{\vn(\ti{\sfS}_1)/2}\pth{\frac{1}{\sqrt{p}}}^{\en(\ti{\sfS}_1)}\\
    \overset{\mathrm{(a)}}{\le}&~\pth{\frac{2C}{n}}^{\frac{\indc{\sfH_{11}\neq \sfH_{21}}}{2}}+\sum_{\ti{\sfS}_1\subseteq \sfI_1,\ti{\sfS}_1\neq \emptyset} (2C)^C n^{-\vn(\ti{\sfS}_1)/2}p^{-\en(\ti{\sfS}_1)/2}\\
    \overset{\mathrm{(b)}}{\le}&~\pth{\frac{2C}{n}}^{\frac{\indc{\sfH_{11}\neq \sfH_{21}}}{2}}+\sum_{\ti{\sfS}_1\subseteq \sfI_1,\ti{\sfS}_1\neq \emptyset} (2C)^C n^{-\epsilon_0/2} \\\overset{\mathrm{(c)}}{\le}&~ \indc{\sfH_{11}=\sfH_{21}}+n^{-\epsilon_0/2}(2C)^C(2^C-1)+\pth{\frac{2C}{n}}^{1/2}\\\overset{\mathrm{(d)}}{\le}&~\indc{\sfH_{11}=\sfH_{21}}+n^{-\epsilon_0/2}(4C)^C,
\end{align*}
where $\mathrm{(a)}$ is because $\frac{\vn(\ti{\sfS}_1)}{2}\le C$; 
$\mathrm{(b)}$ follows from the Condition~\ref{cond:subgraph} for \emph{$C$-admissible} motif family $\maM$; $\mathrm{(c)}$ is because there are at most $2^C-1$ choices for $\ti{\sfS}_1\subseteq \sfI_1$ with $\ti{\sfS}_1\neq \emptyset$; $\mathrm{(d)}$ follows because choosing $\sfM'$ with $\vn(\sfM')=1$ in Condition~\ref{cond:subgraph} implies $\epsilon_0<1$, and thus $(2C/n)^{1/2}\le n^{-\epsilon_0/2}(2C)^C$ as $C = o\pth{\frac{\log n}{\log\log n}}$. 
Similarly, we have \begin{align*}
    \sum_{\ti{\sfS}_2\subseteq \sfI_2} \pth{\frac{2C}{n}}^{\frac{\vn(\ti{\sfS}_2)+\indc{\ti{\sfS}_2 = \emptyset, \sfH_{12}\neq \sfH_{22}}}{2}} \pth{\frac{1}{\sqrt{p}}}^{\en(\ti{\sfS}_2)}\le \indc{\sfH_{12} = \sfH_{22}}+n^{-\epsilon_0/2}(4C)^C.
\end{align*}
Combining this with~\eqref{eq:upbd-vp-term-PhiB}, we obtain \begin{align*}
    &~\E_{\maP}\qth{\prod_{i,j}\frac{\hom_{\vp_{ij}}(\sfM_i,\bar{G}_j)}{\sqrt{(p(1-p))^{\en(\sfM_i)}}}}
    \\\nonumber \le&~\pth{\frac{2C}{n}}^{\vn(\sfM_1)+\vn(\sfM_2)-\sfn_1-\sfn_2} \pth{\indc{\sfH_{11} = \sfH_{21}}+n^{-\epsilon_0/2}(4C)^C}\pth{\indc{\sfH_{12} = \sfH_{22}}+n^{-\epsilon_0/2}(4C)^C}\\
    \le&~\pth{\frac{2C}{n}}^{\vn(\sfM_1)+\vn(\sfM_2)-\sfn_1-\sfn_2}\pth{\indc{\sfH_{11}=\sfH_{21},\sfH_{12} = \sfH_{22}}+2n^{-\epsilon_0/2}\pth{4C}^C+n^{-\epsilon_0}\pth{4C}^{2C}}\\
    \le&~\pth{\frac{2C}{n}}^{\vn(\sfM_1)+\vn(\sfM_2)-\sfn_1-\sfn_2}\pth{\indc{\sfH_{11}=\sfH_{21},\sfH_{12} = \sfH_{22}}+3n^{-\epsilon_0/2}\pth{4C}^{2C}}.
\end{align*}

\textbf{Case 3: Null overlap.\ \ }We finally consider the case $\vp\in \Phi_N$, where $\sfn_1=\sfn_2 =0$. 
We note that $\sfH_{11}\cap \sfH_{21}=\sfH_{12}\cap \sfH_{22}= \emptyset$ under this case. Therefore,\begin{align*}
        &~\E_{\maP_1}\qth{\prod_{i,j}\frac{\hom_{\vp_{ij}}(\sfM_i,\bar{G}_j)}{\sqrt{(p(1-p))^{\en(\sfM_i)}}}}\\ =&~ \E_\pi \E_{\maP_1|\pi}\qth{  \prod_{e\in E(\sfH_{11}\cup \sfH_{21})} \frac{\beta_e(\bar{G}_1)}{\sqrt{p(1-p)}} \prod_{e\in E(\sfH_{12}\cup \sfH_{22}) }\frac{\beta_e(\bar{G}_2)}{\sqrt{p(1-p)}}}\\=&~\E_\pi \qth{\rho^{\en(\sfM_1)+\en(\sfM_2)}\indc{\pi(E(\sfH_{11}\cup \sfH_{21})) = E(\sfH_{12}\cup \sfH_{22})}}.
    \end{align*}
We note that for any $\sfM_1,\sfM_2\in \maM$, the motifs $\sfM_1,\sfM_2$ are connected. Consequently, four motifs $\sfH_{11},\sfH_{12},\sfH_{21}$, and $\sfH_{22}$ induced by $\sfM_1$ and $\sfM_2$ are all connected. 
Given $\pi(E(\sfH_{11}\cup \sfH_{21})) = E(\sfH_{12}\cup \sfH_{22})$ and $\sfM_1 = \sfM_2$, we must have $\pi(E(\sfH_{11})) = E(\sfH_{12}),\pi(E(\sfH_{21}))=E(\sfH_{22})$ or $\pi(E(\sfH_{11})) = E(\sfH_{22}),\pi(E(\sfH_{21})=E(\sfH_{12})$. When $\pi(E(\sfH_{11}\cup \sfH_{21})) = E(\sfH_{12}\cup \sfH_{22})$ and $\sfM_1 \neq \sfM_2$, we only have $\pi(E(\sfH_{11})) = E(\sfH_{12}),\pi(E(\sfH_{21}))=E(\sfH_{22})$.
For two connected motifs $\sfH$ and $\sfH'$, we note that $\pi(E(\sfH)) = \pi(E(\sfH'))$ is equivalent to $\pi(\sfH) = \pi(\sfH')$.
Therefore,\begin{align*}
    &\E_{\maP_1}\qth{\prod_{i,j}\frac{\hom_{\vp_{ij}}(\sfM_i,\bar{G}_j)}{\sqrt{(p(1-p))^{\en(\sfM_i)}}}}\\
    =&~\E_\pi \qth{\rho^{\en(\sfM_1)+\en(\sfM_2)}\indc{\pi(E(\sfH_{11}\cup \sfH_{21})) = E(\sfH_{12}\cup \sfH_{22})}}\\
    =&~\rho^{\en(\sfM_1)+\en(\sfM_2)}\prob{\pi(\sfH_{11}) = \sfH_{12},\pi(\sfH_{21})=\sfH_{22}}\\&\quad+\rho^{\en(\sfM_1)+\en(\sfM_2)}\prob{\pi(\sfH_{11}) = \sfH_{22},\pi(\sfH_{21})=\sfH_{12}}\indc{\sfM_1 = \sfM_2}\\
    =&~\rho^{\en(\sfM_1)+\en(\sfM_2)}\pth{\frac{(n-\vn(\sfM_1)-\vn(\sfM_2))! \aut(\sfM_1)\aut(\sfM_2)}{n!}}(1+\indc{\sfM_1=\sfM_2}).
\end{align*}

\section{Auxiliary results}

\begin{lemma}\label{lem:moment-bound}
    For any bijective mapping $\pi:V(\bar{G}_1)\mapsto V(\bar{G}_2)$ and a correlated pair $(e,\pi(e))$, where $e\in V(\bar{G}_1)$, we have \begin{align*}
        \E_{\maP_1|\pi} \qth{\beta_e^2(\bar{G}_1) \beta_{\pi(e)}(\bar{G}_2)} = \E_{\maP_1|\pi} \qth{\beta_e(\bar{G}_1) \beta_{\pi(e)}^2(\bar{G}_2)}&\le \pth{p(1-p)}^{3/2}\cdot \sqrt{\frac{1}{p}},\\
        \E_{\maP_1|\pi}\qth{\beta_e^2(\bar{G}_1) \beta_{\pi(e)}^2(\bar{G}_2)}&\le (p(1-p))^2\cdot \frac{1}{p}. 
    \end{align*}
\end{lemma}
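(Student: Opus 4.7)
The plan is to compute both mixed moments exactly using the elementary Bernoulli identity $\beta_e(\bar{G})^2 = p(1-p) + (1-2p)\,\beta_e(\bar{G})$, which holds pointwise since $\beta_e(\bar{G}) = X - p$ with $X\in\{0,1\}$. This identity linearizes every squared factor into a constant plus a multiple of the original centered indicator, reducing the third- and fourth-order mixed moments to expressions involving only $\E[\beta_e(\bar{G}_j)] = 0$ and the defining correlated-pair covariance $\E_{\maP_1|\pi}[\beta_e(\bar{G}_1)\beta_{\pi(e)}(\bar{G}_2)] = \rho\,p(1-p)$, the latter built into the definition of $\maG(n,p,\rho)$.

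Applying the identity to $\beta_e^2(\bar{G}_1)$, multiplying by $\beta_{\pi(e)}(\bar{G}_2)$, and taking expectations, I expect to obtain
\[
\E_{\maP_1|\pi}\bigl[\beta_e^2(\bar{G}_1)\,\beta_{\pi(e)}(\bar{G}_2)\bigr] \;=\; \rho\,p(1-p)(1-2p),
\]
with the companion equality $\E_{\maP_1|\pi}[\beta_e(\bar{G}_1)\beta_{\pi(e)}^2(\bar{G}_2)] = \rho\,p(1-p)(1-2p)$ following immediately by swapping the roles of $\bar{G}_1$ and $\bar{G}_2$. Applying the identity to both squared factors and expanding using the same three basic moments will give
\[
\E_{\maP_1|\pi}\bigl[\beta_e^2(\bar{G}_1)\,\beta_{\pi(e)}^2(\bar{G}_2)\bigr] \;=\; p(1-p)\bigl[p(1-p)+\rho(1-2p)^2\bigr].
\]

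To finish, I would dominate each closed form by the claimed upper bound. For the third moment, combining $|\rho|\le 1$ with the elementary inequality $|1-2p|\le\sqrt{1-p}$ (valid for $p\le 3/4$) yields $|\E_{\maP_1|\pi}[\beta_e^2\beta_{\pi(e)}]|\le p(1-p)^{3/2} = (p(1-p))^{3/2}/\sqrt{p}$. For the fourth moment, a short calculation shows $p(1-p)+(1-2p)^2 = 1-3p(1-p)\le 1-p$ for $p\le 2/3$, giving $\E_{\maP_1|\pi}[\beta_e^2\beta_{\pi(e)}^2]\le p(1-p)^2 = (p(1-p))^2/p$. Both constraints on $p$ are comfortably satisfied in the sparse regimes $p = n^{-a}$ with $a>0$ underlying Sections~\ref{sec:general-motif} and~\ref{sec:bd-sub-count}, so no further case analysis is needed. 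The only obstacle I foresee is purely algebraic---tracking signs carefully when expanding the identity and verifying the two scalar inequalities---since the entire calculation concerns a single correlated Bernoulli pair and carries no probabilistic subtlety.
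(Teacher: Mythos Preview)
Your proposal is correct and essentially matches the paper's own proof: both compute the exact mixed moments of the correlated Bernoulli pair, obtaining $\rho\,p(1-p)(1-2p)$ and $p(1-p)\bigl[p(1-p)+\rho(1-2p)^2\bigr]$, then bound them via $|1-2p|\le\sqrt{1-p}$ and $p(1-p)+(1-2p)^2\le 1-p$. The only cosmetic difference is that you linearize the squares via the pointwise identity $\beta^2=p(1-p)+(1-2p)\beta$, whereas the paper sums directly over the four joint values of $(i,j)\in\{0,1\}^2$; the resulting closed forms and inequalities are identical.
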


\begin{proof}
    We note that \begin{align*}
        \E_{\maP_1|\pi} \qth{\beta_e^2(\bar{G}_1) \beta_{\pi(e)}(\bar{G}_2)} &= \sum_{i,j\in \sth{0,1}} \prob{\beta_e(\bar{G}_1) = i-p,\beta_{\pi(e)}(\bar{G}_2) = j-p} (i-p)^2 (j-p)\\
        &=p(1-p)(1-2p)\rho.
    \end{align*}
    Since $0<p\le \frac{1}{2}$, we have \begin{align*}
        \E_{\maP_1|\pi} \qth{\beta_e^2(\bar{G}_1) \beta_{\pi(e)}(\bar{G}_2)} &=p(1-p)(1-2p)\rho\\&\le p(1-p)\sqrt{1-4p^2+4p}\\
        &\le p(1-p)\sqrt{1-p} = \pth{p(1-p)^{3/2}}\cdot \sqrt{\frac{1}{p}}.
    \end{align*}
    Similarly, $\E_{\maP_1|\pi} \qth{\beta_e^2(\bar{G}_1) \beta_{\pi(e)}(\bar{G}_2)}\le \pth{p(1-p)^{3/2}}\cdot \sqrt{\frac{1}{p}}$.

    For $\E_{\maP_1|\pi}\qth{\beta_e^2(\bar{G}_1) \beta_{\pi(e)}^2(\bar{G}_2)}$, we have \begin{align*}
        \E_{\maP_1|\pi}\qth{\beta_e^2(\bar{G}_1) \beta_{\pi(e)}^2(\bar{G}_2)} &= \sum_{i,j\in \sth{0,1}} \prob{\beta_e(\bar{G}_1) = i-p,\beta_{\pi(e)}(\bar{G}_2) = j-p} (i-p)^2 (j-p)^2\\
        &=p^2(1-p)^2\pth{1+\frac{\rho (2p-1)^2}{p(1-p)}}\\&\le p^2 (1-p)^2 \pth{\frac{p(1-p)+(2p-1)^2}{p(1-p)}}\le p^2(1-p)^2\cdot \frac{1}{p},
    \end{align*}
    where the last inequality is because $\frac{p(1-p)+(2p-1)^2}{p(1-p)}=\frac{3p^2-2p+1-p}{p(1-p)}\le \frac{1}{p}$.
\end{proof}

\begin{lemma}\label{lem:aux-lem}
    Let $\sfM_1,\sfM_2\in \maM$ and $\sfH_1\subseteq \sfM_1,\sfH_2\subseteq \sfM_2$ be two connected subgraphs of $\sfM_1$ and $\sfM_2$, respectively.
    \begin{enumerate} 
        \item Let $\pi$ be sampled uniformly from all bijections between $V(G_1)$ and $V(G_2)$. We have 
    \begin{align*}
        \prob{\pi(\sfH_1) = \sfH_2}\le \min\pth{\pth{\frac{\vn(\sfH_1)}{n}}^{\vn(\sfH_1)},\pth{\frac{\vn(\sfH_2)}{n}}^{\vn(\sfH_2)}}.
    \end{align*}
    Furthermore, \begin{align*}
        \prob{\pi(\sfH_1)=\sfH_2}\le \pth{\frac{\max\pth{\vn(\sfH_1),\vn(\sfH_2)}}{n}}^{\frac{\vn(\sfH_1)+\vn(\sfH_2)}{2}}.
    \end{align*}
    \item If $|V(\sfH_1)\cap V(\sfH_2)|\ge 1$, then for any subgraph $\sfH_0\subseteq \sfH_1\cap \sfH_2$, \begin{align*}
        |V((\sfH_1\triangle \sfH_2)\cup \sfH_0)|\ge \vn(\sfH_1)+\vn(\sfH_2)-2|V(\sfH_1)\cap V(\sfH_2)|+\vn(\sfH_0)+\indc{\sfH_0=\emptyset, \sfH_1\neq \sfH_2},
    \end{align*}
    where $\sfH_0 =\emptyset$ denotes the empty subgraph with no vertices and no edges.
    \end{enumerate}
\end{lemma}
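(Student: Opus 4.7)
The plan is to handle the two parts of Lemma~\ref{lem:aux-lem} separately, both by direct combinatorial counting.

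For part (1), I would first observe that $\pi(\sfH_1)=\sfH_2$ forces $\vn(\sfH_1)=\vn(\sfH_2)$, otherwise the probability is zero and both claimed bounds hold trivially. Setting $k=\vn(\sfH_1)=\vn(\sfH_2)$, the event $\pi(\sfH_1)=\sfH_2$ happens precisely when $\pi$ restricted to $V(\sfH_1)$ is a graph isomorphism from $\sfH_1$ to $\sfH_2$, and $\pi$ maps the remaining $n-k$ vertices arbitrarily to $V(G_2)\setminus V(\sfH_2)$. The number of such restrictions is at most $\aut(\sfH_2)\le k!$, and the extensions contribute a factor of $(n-k)!$. Therefore
\[
\prob{\pi(\sfH_1)=\sfH_2}\le \frac{k!(n-k)!}{n!}=\binom{n}{k}^{-1}\le \pth{\frac{k}{n}}^{k},
\]
where the last step uses the standard inequality $\binom{n}{k}\ge (n/k)^k$. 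Since $k$ equals both $\vn(\sfH_1)$ and $\vn(\sfH_2)$ in the nontrivial case, this yields the $\min$-bound; and since the exponent $(\vn(\sfH_1)+\vn(\sfH_2))/2$ also equals $k$ in that case while $\max(\vn(\sfH_1),\vn(\sfH_2))=k$ as well, the second displayed bound reduces to the same estimate.

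For part (2), I would write $V_i=V(\sfH_i)$ for $i=0,1,2$, so that $V_0\subseteq V_1\cap V_2$ and $V((\sfH_1\triangle\sfH_2)\cup\sfH_0)=V(\sfH_1\triangle\sfH_2)\cup V_0$. I would exhibit three disjoint subsets of this vertex set. First, any $v\in V_1\setminus V_2$ is incident, inside $\sfH_1$, only to edges whose other endpoint lies in $V_1$; each such edge is absent from $\sfH_2$ since $v\notin V_2$, hence lies in $E(\sfH_1)\triangle E(\sfH_2)$, so $v\in V(\sfH_1\triangle\sfH_2)$. By symmetry the same holds for $V_2\setminus V_1$, giving $|V_1\setminus V_2|+|V_2\setminus V_1|=\vn(\sfH_1)+\vn(\sfH_2)-2|V_1\cap V_2|$ vertices. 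Second, $V_0\subseteq V_1\cap V_2$ is disjoint from these two sets and contributes another $\vn(\sfH_0)$ vertices. Summing these two contributions already gives the bound without the indicator term.

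It remains to recover the extra $+1$ when $\sfH_0=\emptyset$ and $\sfH_1\ne \sfH_2$; this is the only delicate part of the argument. In that case $E(\sfH_1)\triangle E(\sfH_2)\neq\emptyset$, and I would produce one additional vertex in $(V_1\cap V_2)\cap V(\sfH_1\triangle\sfH_2)$ by a short case analysis using the connectivity of $\sfH_1$ and $\sfH_2$ together with the hypothesis $V_1\cap V_2\neq\emptyset$. If $V_1=V_2$, any edge in the symmetric difference has both endpoints in $V_1\cap V_2$, providing the extra vertex. If $V_1\subsetneq V_2$, connectivity of $\sfH_2$ forces an edge from $V_1\cap V_2$ to $V_2\setminus V_1$, and this edge lies in $E(\sfH_2)\setminus E(\sfH_1)$, so its endpoint in $V_1\cap V_2$ is the desired vertex. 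The cases $V_2\subsetneq V_1$ and $V_1\not\subseteq V_2,\ V_2\not\subseteq V_1$ are handled symmetrically via connectivity of $\sfH_1$ or $\sfH_2$. This connectivity step is the main obstacle; the rest of the proof is elementary set theory.
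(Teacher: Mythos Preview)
Your proposal is correct and follows essentially the same approach as the paper: part~(1) is the same counting argument (the paper carries $\vn(\sfH_1)$ and $\vn(\sfH_2)$ separately rather than immediately equating them, but the bounds coincide), and part~(2) reduces in both proofs to showing $V(\sfH_1\triangle\sfH_2)\cap(V(\sfH_1)\cap V(\sfH_2))\neq\emptyset$ in the indicator case. The only cosmetic difference is the case split for that last step---the paper splits on whether $E(\sfH_1)\cap E(\sfH_2)$ is empty and then invokes connectivity of $\sfH_1\cup\sfH_2$, whereas you split on the inclusion relations between $V_1$ and $V_2$; both arguments are valid.
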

\begin{proof}
    (1) 
    On the one hand, \begin{align*}
        \prob{\pi(\sfH_1) = \sfH_2} &= \frac{(n-\vn(\sfH_1))!\aut(\sfH_1)}{n!} \indc{\sfH_1= \sfH_2}\\&\overset{\mathrm{(a)}}{\le} \frac{(n-\vn(\sfH_1))! (\vn(\sfH_1))!}{n!} = \prod_{i=1}^{\vn(\sfH_1)} \frac{i}{n-\vn(\sfH_1)+i}\\&\overset{\mathrm{(b)}}{\le} \prod_{i=1}^{\vn(\sfH_1)} \frac{\vn(\sfH_1)}{n} = \pth{\frac{\vn(\sfH_1)}{n}}^{\vn(\sfH_1)},
    \end{align*}
    where $\mathrm{(a)}$ is because $\aut(\sfH_1)\le (\vn(\sfH_1))!$ and $\mathrm{(b)}$ is because $\frac{i}{n-\vn(\sfH_1)+i}\le \frac{\vn(\sfH_1)}{n}$ for any $1\le i\le \vn(\sfH_1)$.
    On the other hand,
    \begin{align*}
        \prob{\pi(\sfH_1) = \sfH_2} &= \frac{(n-\vn(\sfH_1))!\aut(\sfH_1)}{n!} \indc{\sfH_1= \sfH_2}\\&\le\frac{(n-\vn(\sfH_2))!\aut(\sfH_2)}{n!} \indc{\sfH_1= \sfH_2}  = \pth{\frac{\vn(\sfH_2)}{n}}^{\vn(\sfH_2)}.
    \end{align*}
    Therefore,\begin{align*}
        \prob{\pi(\sfH_1) = \sfH_2}&\le \min\pth{\pth{\frac{\vn(\sfH_1)}{n}}^{\vn(\sfH_1)},\pth{\frac{\vn(\sfH_2)}{n}}^{\vn(\sfH_2)}}\\&\le\min\pth{\pth{\frac{\max\pth{\vn(\sfH_1),\vn(\sfH_2)}}{n}}^{\vn(\sfH_1)},\pth{\frac{\max\pth{\vn(\sfH_1),\vn(\sfH_2)}}{n}}^{\vn(\sfH_2)}} \\&=\pth{\frac{\max\pth{\vn(\sfH_1),\vn(\sfH_2)}}{n}}^{\max\pth{\vn(\sfH_1),\vn(\sfH_2)}}\\&\le   \pth{\frac{\max\pth{\vn(\sfH_1),\vn(\sfH_2)}}{n}}^{\frac{\vn(\sfH_1)+\vn(\sfH_2)}{2}}.
    \end{align*}

    (2) We note that \begin{align*}
        |V((\sfH_1\triangle \sfH_2)\cup \sfH_0)| =&~ |V(\sfH_1\triangle \sfH_2)\cup V(\sfH_0)|\\ =&~ |V(\sfH_1\triangle \sfH_2)|+\vn(\sfH_0) - |V(\sfH_1\triangle \sfH_2)\cap V(\sfH_0)|\\
        =&~\vn(\sfH_1)+\vn(\sfH_2)-2|V(\sfH_1)\cap V(\sfH_2)|\\&+|V(\sfH_1\triangle \sfH_2)\cap (V(\sfH_1)\cap V(\sfH_2))|\\&+\vn(\sfH_0) - |V(\sfH_1\triangle \sfH_2)\cap V(\sfH_0)|.
    \end{align*}
    It suffices to prove \begin{align}\label{eq:aux-1}
        |V(\sfH_1\triangle \sfH_2)\cap (V(\sfH_1)\cap V(\sfH_2))| - |V(\sfH_1\triangle \sfH_2)\cap V(\sfH_0)|\ge \indc{\sfH_0=\emptyset,\sfH_1\neq \sfH_2}.
    \end{align}
    Since $\sfH_0\subseteq \sfH_1\cap \sfH_2$, we obtain that $V(\sfH_0)\subseteq V(\sfH_1\cap \sfH_2)\subseteq V(\sfH_1)\cap V(\sfH_2)$, and thus \begin{align*}
        |V(\sfH_1\triangle \sfH_2)\cap (V(\sfH_1)\cap V(\sfH_2))| - |V(\sfH_1\triangle \sfH_2)\cap V(\sfH_0)|\ge 0.
    \end{align*}
    It remains to prove~\eqref{eq:aux-1} when $\sfH_0 = \emptyset$ and $\sfH_1\neq \sfH_2$. We note that $|V(\sfH_1\triangle \sfH_2)\cap V(\sfH_0)| = 0$ when $\sfH_0 =\emptyset$.
    It suffices to show $|V(\sfH_1\triangle \sfH_2)\cap (V(\sfH_1)\cap V(\sfH_2))|\ge 1$ when $\sfH_1\neq \sfH_2$.
    If $\sfH_1\cap \sfH_2=\emptyset$, then $\sfH_1\triangle \sfH_2 = \sfH_1\cup \sfH_2$, and thus \begin{align*}
        |V(\sfH_1\triangle \sfH_2)\cap (V(\sfH_1)\cap V(\sfH_2))| &= |V(\sfH_1\cup  \sfH_2)\cap (V(\sfH_1)\cap V(\sfH_2))|\\&=|V(\sfH_1)\cap V(\sfH_2)|\ge 1.
    \end{align*}
    If $\sfH_1\cap \sfH_2\neq \emptyset$, then $V(\sfH_1\cap \sfH_2)\neq \emptyset$. Since $|V(\sfH_1)\cap V(\sfH_2)|\ge 1$, $\sfH_1\cup \sfH_2$ are connected. Recall that $\sfH_1\neq \sfH_2$, and thus $V(\sfH_1\triangle \sfH_2)\neq \emptyset$. Therefore,\begin{align*}
        |V(\sfH_1\triangle \sfH_2)\cap (V(\sfH_1)\cap V(\sfH_2))|\ge |V(\sfH_1\triangle \sfH_2) \cap V(\sfH_1\cap \sfH_2)|\ge 1,
    \end{align*}
    where the last inequality follows from the fact that $\sfH_1\cup \sfH_2 = (\sfH_1\triangle \sfH_2)\cup (\sfH_1\cap \sfH_2)$ is connected and $V(\sfH_1\triangle \sfH_2),V(\sfH_1\cap \sfH_2)\neq \emptyset$.
\end{proof}
\bibliographystyle{alpha}
\bibliography{main}
\end{document}